\documentclass[final, 5p, times, 10pt, twocolumn, authoryear]{elsarticle}

\usepackage{multirow}
\usepackage{blindtext, graphicx}
\usepackage{url}
\usepackage{listings}
\usepackage{hyperref}
\usepackage{algorithm}
\usepackage{algorithmic}
\usepackage{amsmath}
\usepackage{amssymb}
\usepackage{graphicx}
\usepackage{booktabs}
\usepackage{graphicx}
\usepackage{caption}
\usepackage{subcaption}
\usepackage{amsfonts}
\usepackage{multirow}
\usepackage[dvipsnames]{xcolor}
\usepackage{fancyhdr}
\usepackage{enumitem}

\usepackage{graphicx}
\usepackage[font=small]{caption}
\usepackage[utf8]{inputenc}

\newtheorem{definition}{Definition}
\newtheorem{observation}{Observation}
\newtheorem{theorem}{Theorem}
\newtheorem{lemma}{Lemma}

\newcommand{\proof}[1]{\vspace{+10pt}\noindent\textbf{Proof of {#1}.}}

\begin{document}

\title{Operational Robustness of LLMs on Code Generation}
\author{Debalina Ghosh Paul}
\ead{19217422@brookes.ac.uk}

\author{Hong Zhu \corref{cor}}  
\ead{hzhu@brookes.ac.uk}
\cortext[cor]{Corresponding author.}

\author{Ian Bayley}
\ead{ibaybey@brookes.ac.uk}

\affiliation{
    orgnization = {School of Engineering, Computing and Mathematics, Oxford Brookes University},
    city = {Oxford},
    postcode = {OX3 0BP},
    country = {United Kingdom}
}

\begin{abstract}
It is now common practice in software development for large language models (LLMs) to be used to generate program code. It is desirable to evaluate the robustness of LLMs for this usage.  
This paper is concerned in particular with how sensitive LLMs are to variations in descriptions of the coding tasks. However, existing techniques for evaluating this robustness are unsuitable for code generation because the input data space of natural language descriptions is discrete.
To address this problem, we propose a robustness evaluation method called \emph{scenario domain analysis}, which aims to find the expected minimal change in the natural language descriptions of coding tasks that would cause the LLMs to produce incorrect outputs.
We have formally proved the theoretical properties of the method and also conducted extensive experiments to evaluate the robustness of four state-of-the-art art LLMs: \emph{Gemini-pro},  \emph{Codex},  \emph{Llamma2} and  \emph{Falcon 7B}, and have found that we are able to rank these with confidence from best to worst.  
Moreover, we have also studied how robustness varies in different scenarios, including the variations with the topic of the coding task and with the complexity of its sample solution, and found that robustness is lower for more complex tasks and also lower for more advanced topics, such as multi-threading and data structures.  
\end{abstract}

\begin{keyword} Machine learning \sep Large language models \sep Code Generation \sep Robustness \sep Performance evaluation. 
\end{keyword}

\pagestyle{fancy}
\lhead{Robustness of Code Generation, \today}
\rhead{D. G. Paul, H. Zhu and I. Bayley}

\maketitle



\section{Introduction}\label{sec:Introduction}


Robustness has been a hard problem and an active topic of research in the field of machine learning (ML) ever since \citet{szegedy2014}'s discovery of the universal existences of adversarial examples in deep neural networks (DNNs). It is a critical quality attribute of ML applications. 

In general terms, the robustness of an artificial intelligence (AI) system, as defined by \citet{iso24029-1-2023} Standard 24029, is the ability of the system to maintain its level of performance under any circumstances. Existing works on robustness of ML models can be classified into two types, according to the circumstances concerned: 
\begin{itemize}
\item \emph{Adversarial robustness} is used when an ML model, or its application system, is under attack from an adversary who intends to force the ML model to produce incorrect outputs. Robustness, in this context, means that the ML model performs as expected by resisting such adversarial attacks. It is closely related to security and has been an active research topic for the past decade; see \citep{khamaiseh2022} for a recent survey. 

\item \emph{Operational robustness} (or \emph{non-adversarial robustness}) is used when the operational conditions (or scenarios) vary naturally without malicious attacks. Here, robustness is the ability of the model to maintain its performance in a wide spectrum of scenarios. For example, an autonomous vehicle should drive safely not only in fine weather but also in rain, snow and fog.
\end{itemize}

This paper is concerned with the operational robustness of large language models (LLMs) as code generation tools, which turn natural language descriptions of requirements into program code. This is a widely used application of LLMs and there is overwhelming evidence that it can significantly improve the productivity of software engineers although concerns have been raised regarding the code quality of the output; see, for example, recent surveys by \citet{DORA2024} and \citet{Harding2025}. Operational robustness affects the usability of these code generation tools and therefore it needs to be understood.

There are two approaches to assessing the robustness of an ML model.

\begin{itemize}
\item The \emph{micro approach} focuses on the ML model's sensitivity to changes in the input, and evaluates the robustness by analysing the impact of changes on each input value. For example, if the ML model is the object detection model for an autonomous vehicle, a micro robustness property is the expected minimal number of pixels in an image to be changed to cause error in object detection. This approach is taken in the research on adversarial robustness.
\item The \emph{macro approach} assesses the stability in performance on various sub-domains of the input space, each of which represents a different scenario. This approach can be used for measuring operational robustness. In the autonomous vehicles example given above, a macro robustness property is that the accuracy of object detection remains high (e.g. over 90\%) in all weather conditions. 
\end{itemize}

Our previous work \citep{ghosh2024} found a significant decrease in the correctness of code produced by LLMs when the complexity of the coding task increases. Similarly, in \citep{CodeSmell2025}, we found that the presence of code smells, another indicator of code quality, varies significantly according to coding topics and complexity of the task. 
Both of these are macro robustness properties. However, far less is known about micro-level operational robustness. For example, can replacing one word with its synonym in a coding task description lead to a completely different program? This is the problem to be addressed in this paper. 

The remainder of the paper is organised as follows.


Section \ref{sec:RelatedWork} reviews related work on how to evaluate the robustness of ML models, discussing their merits and drawbacks, as well as their applicability to our research problem. From this, we will identify the challenges in robustness evaluation and outline our approach.
Section \ref{sec:ProposedMethod} presents and studies the theoretical framework of the proposed method as an example of micro approach to operational robustness, which is then applied to LLMs for code generation. 
Section \ref{sec:Implementation} describes the test system that implements the proposed method in the datamorphic testing methodology. 
Section \ref{sec:Experiment} reports the experiments on LLMs to demonstrate the effectiveness and efficiency of the proposed method. We will also study the best way to apply the proposed method and discuss the threats to the validity of the experiment. 
Finally, Section \ref{sec:Conclusion} concludes the paper with a summary of the main contribution of the paper, the implications of the results and the directions for future work.

\section{Related Work}\label{sec:RelatedWork}

In this section, we review the existing work on testing and evaluating the robustness of ML models, considering  adversarial robustness in Subsection \ref{sec:AdversarialRobustness}, and then operational robustness in Subsection \ref{sec:OperationalRobustness}. We will limit our scope to the testing and evaluation of robustness, rather than techniques for improving it. We will, on the other hand, broaden our scope from code generation to natural language processing (NLP) and image processing in general. 

\subsection{Adversarial Robustness}\label{sec:AdversarialRobustness}

Research on adversarial robustness has focused on the generation of adversarial examples (AEs) based on an attack model either explicitly or implicitly. An adversarial example is a constructed input slightly different from a natural input which causes the ML model to fail but has the same semantics as the natural input.

Two types of approaches exist for generating AEs: white-box and black-box. 

\subsubsection{White-Box Generation of AEs}

In white-box approaches, the attacker has full knowledge of the ML model, including its architecture and parameters, and utilises this knowledge to craft the AEs. 

The first white-box approach is perhaps the \emph{L-BFGS} method proposed by \citet{szegedy2014} and used in their study of the intriguing properties of DNNs. This approach was refined for the particular domain of computer vision and then adapted to NLP. We will discuss both of these areas in turn.

\citet{goodfellow2014} improved upon \emph{L-BFGS} by utilising the gradient of the ML model's loss function with respect to the input to give the \emph{Fast Gradient Sign Method} (FGSM) method. \citet{kurakin2018} in turn improved upon the success rate of FGSM by iterating on the gradients to give \emph{Basic Iterative Method} (BIM) for untargeted attacks and \emph{Iterative Consistency Loss Minimisation} (ICLM) for targeted attacks. \citet{moosavi2016} introduced \emph{DeepFool}, which seeks to push the input across the decision boundary with minimal perturbation. An improvement on minimal perturbation would be to modify just a few pixels rather than all and this is achieved by the \emph{Jacobian-based Saliency Map Attack} (JSMA) \citep{papernot2016}, which identifies the most impactful pixels and alters just them.

The methods above need to be run once for each test case in the benchmark to generate one AE and this can be time-consuming. In contrast, the \emph{Universal Adversarial Perturbation} (UAP) algorithm proposed by \citet{moosavi2017}, generates a single perturbation that when applied to different inputs generates different AEs. 

With the advance of defence mechanisms against adversarial attacks, AE generation techniques have also been developed to bypass these mechanisms. Examples include \citet{carlini2017}'s \emph{C\&W} attack, and \citet{madry2017}'s \emph{Projected Gradient Descent} (PGD) technique. 

As pointed out by \citet{yefet2020adversarial}, these algorithms cannot easily be adapted to NLP models, because although natural language sentences can be represented as vectors, not all vectors in the vector space will be meaningful inputs; an image with noise added to it is still an image but a valid English sentence may not be valid after the noise is added. 


However, at least three works have successfully adapted these algorithms to NLP in the sub-domain of sentiment analysis. For example, \citet{ebrahimi2018} adapted FGSM to \emph{HotFlip}, which works by identifying the most influential characters in a word based on the model's gradients, then flipping those characters to generate an AE. \emph{TextFool} proposed by \citet{liang2018} changes entire words rather than characters and thereby achieves more impactful modifications.  \emph{TextBugger} proposed by \citet{li2019textbugger} is a JSMA-based algorithm that manipulates specific words in the text by calculating the Jacobian matrix. It then applies subtle perturbations, like inserting spaces, deleting letters, replacing letters, or swapping adjacent letters to reduce the model's confidence in the correct prediction.

Finally, an example in the field of program analysis and more relevant to this paper is the \emph{Discrete Adversarial Manipulation of Program} (DAMP) method proposed by \citet{yefet2020adversarial}. DAMP generates its AEs by using the ML model's parameters to calculate the gradient towards the input that will cause the model to make incorrect predictions to guide the application of semantic preserving transformations on program code.  

We now consider black-box AE generation and we likewise discuss computer vision first and then NLP.

\subsubsection{Black-Box Generation of AEs for Computer Vision}

White-box attacks have a high success rate but black-box attacks generate AEs without requiring detailed knowledge of the ML model's internal structure or parameters \cite{khamaiseh2022}, information that is usually not available. Instead, the attacker can only see the output labels and associated confidence scores. 

Black-box attacks can be further categorised into \emph{transfer-based} and \emph{non-transfer-based} approaches. The former builds a surrogate model and apply white-box testing to it, while the latter does not, which is also known as \emph{query-based} in the literature. 

Transfer-based approaches are based on an intriguing property of DNNs discovered by \citet{szegedy2014} that an AE for one ML model is very likely to be an AE for another ML model of the same ML task even if the model is trained on different dataset and/or has different architecture and hyper-parameters. The most well-known examples of the transfer-based approach include \emph{UPSET}, which generates a universal perturbation for each target class, and \emph {ANGRI}, which produces input-specific perturbations. Both are presented by \citet{sarkar2017}. 

Non-transfer-based approaches do not build a surrogate model, and are further categorised into \emph{score-based}, which use confidence scores to craft AEs, and \emph{decision-based} which use only the decisions given by the model.

The first score-based algorithm, introduced by 
\citet{chen2017}, was \emph{Zeroth-Order Optimisation (ZOO)}, which performs hill climbing by performing numerous queries to the model in order to  approximate the gradient in confidence scores.
The Houdini algorithm developed by \citet{cisse2017houdini} maximises instead a separate measure known as task loss and requires far fewer queries.
The \emph{Simple Black-box Attack (SimBA)}, introduced by \citet{guo2019}, improves the query efficiency further by perturbing the input in random directions. Finally, the \emph{One Pixel Attack} proposed by \citet{su2019} is based on an evolutionary algorithm, which has the advantage of only targeting a single pixel.

The first decision-based algorithm was the \emph{Boundary Attack} algorithm by \citet{brendel2017}, which starts with a random input and finds a point on the classification boundary using binary search. Once on the boundary, it takes a number of random walks to find a point on the boundary closer to the original input. Note that to achieve a closeness comparable to existing white-box algorithms, millions of queries are required for each AE; this is because a large number of random walks along the boundary are required.

The great number of research efforts made to reduce this number include \citet{Brunner2019} and \citet{Liu2020}'s works which use biased samplings in place of uniform distribution to determine the random walk direction. Similarly, \citet{Maho2021}'s \emph{SurFree} algorithm directs the random walk by querying the neighbours of the boundary point to determine geometric information, and \citet{Li2021}'s \emph{Aha} algorithm combines this with historical queries.

Alternatively, walking direction can be viewed as an optimisation problem and random walks can be replaced by gradient-directed moves. Works of this kind include \citet{Cheng19}, \citet{Chen2020}'s \emph{HopSkipJump} algorithm, \citet{rahmati2020geoda}'s \emph{GeoDA} algorithm, \citet{Cheng2020Sign-OPT}'s \emph{Sign-OPT} algorithm, \citet{Li2020}'s \emph{QEBA} algorithm. The current state-of-the-art has reduced millions of queries to around a thousand queries for each image in the MNIST dataset. 

\subsubsection{Black-Box Generation of AEs for NLP}
The first work on black-box adversarial robustness for NLP is perhaps the work by \citet{jia2017}, which attempts to foil ML models for reading comprehension by constructing concatenative adversaries, variations of an original sentence with distracting additional phrases added.
The data set is the \emph{Stanford Question Answering Dataset} (SQuAD). Each test case has the form  $(p, q, a)$, where $p$ is the original text to be read by the NLP system $M$; $q$ is the question to be answered by $M$; and $a$ is the expected answer. A concatenative adversary is then a test case in the form $(p+s, q, a)$, where $s$ is a sequence of words concatenated to the end of the original text $p$ that does not contradict the correct answer.
The main method for producing $s$ is \emph{AddSent}, which produces a sentence that is syntactically identical but semantically different to the one in $p$ holding the answer. This is intended to fool NLP models that have learned to spot patterns but cannot recognise context in the way humans can. A variant \emph{AddSentDiverse} proposed by \citet{wang2018robust} adds $s$ potentially anywhere in the text $p$, not just at the end, enabling more diversity in the AEs. 
Another method \emph{AddAny} constructs $s$ as a sequence of random words, regardless of grammatically, drawn from the Brown corpus of 1000 most common English words and the words in $q$. A variant \emph{AddCommon} uses just words in the corpus. Yet another method \emph{AdddOneSent} takes $s$ to be a random human-approved sentence, allowing for what \citet{jia2017} call \emph{model-independent} adversaries.

Score-based black-box attacks modify various words or phrases in the input text and compare the output scores before and after the change. For example, the
\emph{DeepWordBug} algorithm proposed by \cite{gao2018blackbox}, targeting tasks like sentiment analysis and topic classification, ranks words based on their importance to the model’s prediction and applies character-level perturbations (e.g., insertion, deletion, substitution, or swapping). However, these changes often produce unnatural text. 

\emph{PWWS} (Probability Weighted Word Saliency) introduced by \citet{ren2019pwws} solves this problem by operating at the word level. It scores words based on their contribution to the 
prediction and replaces high-saliency words with their synonyms iteratively to reduce the model's confidence. 

Similar black-box algorithms have been developed for classification types of coding tasks. These include functionality classification, defect detection, clone detection etc. Examples of score-based algorithms include \emph{Metropolis-Hastings Modifier (MHM)}  \citep{zhang2020}, \emph{ALERT (Naturalness Aware Attack)} \citep{henkel2022}, and \emph{Style Transfer} \citep{alipour2021generalizability, li2022ropgen, pour2021search}. Perturbations, which must be semantics-preserving, include replacements of variable, function, or class names with alternatives and inserting code blocks. 
An example of a decision-based algorithm is \emph{WIR-Random} \citep{zeng2022} for clone detection, again with identifier-replacements as the perturbations.

\subsubsection{Metrics of Adversarial Robustness}
 
The most widely used adversarial robustness metric is perhaps the \emph{average relative distortion} of the attacks, i.e. the average distance between the original test case $x$ and the adversarial test case $x'=G(x,M)$ generated by the AE generation algorithm $G$. Formally, 

\begin{equation}
R_{G}(T,M) = \frac{1}{|T|} \sum_{x \in T}\frac{\|x,G(x,M) \|}{\|x\|} \label{equ:RMetric}
\end{equation}
where $T$ is the test dataset, $\| \cdot \|$ is a norm of the input data, $\|\cdot,\cdot\|$ is the distance metric induced from $\|\cdot\|$. 

Alternatives to this metric are possible. For example, \citet{fawzi2016robustness, fawzi2018analysis} calculated the \emph{average distortion} (i.e. not relative):
\begin{equation}
D_{G}(T,M) = \frac{1}{|T|} \sum_{x \in T} \|x,G(x,M) \| 
\end{equation}

Earlier, \citet{szegedy2014} 
used a variation of that metric that was restricted to the cases for which the ML model $M$ produces correct classifications; that is,

\begin{equation}
D^C_{G}(T,M) = \frac{1}{|T^C|} \sum_{x \in T^C}\|x,G(x,M) \| 
\end{equation}
where $T^C = \{x \in T | M(x) ~is~ correct \}$ is the subset of test dataset $T$ on which the ML model $M$ produces correct output. 

When the AE generation algorithm $G(x,M)$ always produces the minimum perturbation of the input $x$, then Equ. (\ref{equ:RMetric}) is a statistical estimation of the following theoretical metric for classification models proposed by \citet{moosavi2016}:
\begin{equation}
\rho(M) = \mathbb{E}_x  \frac{\Delta(x, M)}{\|x\|},
\end{equation}
where $\mathbb{E}_x$ denotes the expectation over the data distribution, and $\Delta(x,M)$ is the \emph{robustness of $M$ at the point $x$}: 
\begin{equation}
\Delta(x, M) = \min \{\|x', x\| ~| ~ M(x') \neq M(x)\}. 
 \end{equation} 

When input domain of $M$ is a vector space, a minimal $x'$ such that $M(x') \neq M(x)$ can be represented as $x'=x+r$. Thus, $r$ is called a \emph{minimal perturbation} of $x$. 

By definition, for every $x''$ such that $\|x, x''\| < \|r\|=\Delta(x, M)$, we have that $M(x'')=M(x)$. In other words, the size of minimal perturbation defines a safe zone of the ML model $M$.  Within the safe zone, the model $M$ will not fail. The robustness metric $\rho$ measures the expected size of the safe zones over the benchmark dataset. The greater the value of $\rho$, the larger the safe zone and the better the robustness. 

Since AE generation may fail, another important metric is the \emph{success rate} of attacks, which is formally defined as follows. 
\begin{equation}
S_{G}(T,M) = \frac{|\{x \in T ~|~ G(x,M) ~ \textit{is successful} \}|} {| T|}. 
\end{equation} 
where $G$ is a given AE generation algorithm and $T$ is the test dataset. The higher this success rate, the poorer the adversarial robustness.

It is worth noting that for both white-box and black-box algorithms, there is a trade-off between the success rate in finding an AE and the quality of that AE, in terms of distance from the original test case.

For example, in the FGSM algorithm (white-box), a greater step size is more likely to change the model's output but the distance would be greater. Similarly, 
for Boundary Attack (black box), conducting more random walks increases the probability that an AE will be found within the threshold specified.

Consequently, \citet{Li2021} used the \emph{area under the curve of success rate versus quality} to take both of these factors into account when comparing the performance of different AE generation algorithms (and the robustness of different models). Let $q=\varphi(s)$ denote the curve that plots attack quality $q$ against success rate $s$. The area under the curve (AUC) can be formulated as follows. 

\begin{equation}
AUC_{(SuccessRate,Quality)} = \int^1_0 \varphi(s) ~d s
\end{equation}

\subsection{Operational Robustness}\label{sec:OperationalRobustness}

Roughly speaking, in existing methods, the steps required to test and evaluate an ML model for operational robustness are as follows:

\begin{enumerate}
\item Identify the scenarios to be tested;
\item Construct a test dataset for each scenario;
\item Execute the model on each dataset and evaluate the performance of the ML model on each scenario;
\item Evaluate the robustness of the model based on its performance on different scenarios.  
\end{enumerate}

We now review the existing works by highlighting the differences in each of these steps. 


\subsubsection{Identification of Test Scenarios}

Three categories of methodologies to identify scenarios have been advanced in the literature. 
  
(a) \emph{Functional} approaches aim to cover all possible operational scenarios with respect to the functional requirements and the application domain. Typical work of this approach include Zhu \emph{et al}'s datamorphic test method \citep{zhu2019a}, Gao \emph{et al}'s model-driven test method \citep{gao2021model, gao2024ai, he2023framework, shu2024computer},  and Wotawa \emph{et al}'s ontology-based method \citep{kulmburg2025ontology, kulmburg2025state}. 
  
(b) \emph{Fault-based} approaches aims to evaluate the model's ability to maintain its level of performance when faults occur in the application system and in the operational environment where the input data comes from. Typical work in this category include \citet{hendrycks2019benchmarking}'s benchmarks \emph{ImageNet-C} and \emph{ImageNet-P} for evaluating image classifiers, and \citet{wozniak2022, wozniak2023, wozniak2024}'s design theory based method. 
  
(c) \emph{Error-based} approaches identify  erroneous operational scenarios by analysing the test cases on which the ML model fails. A typical example of this category is \citet{zhu2023scenario}'s work, where the model's performance is also improved on the scenarios identified. 


\subsubsection{Construction of Test Dataset}

There are two approaches in the literature for  constructing test datasets for different scenarios: \emph{profile-based} and \emph{transformational} approaches.

\noindent{(a) \emph{Profile-based approach}.} 

The profile-based approach for constructing test datasets for different scenarios starts with a large dataset taken from the real world. One technique is to filter the large dataset and check that each element meets the operational conditions. Alternatively, the large dataset could be divided up beforehand into a number of subsets, each representing an operational condition. Within the domain of code generations tasks, this has been done with the benchmarks APSS \citep{hendrycks2measuring}, for three difficulty levels, and with CoderEval \citep{yu2024codereval} for six levels of context dependency for the function to be written.

Alternatively, each test in the dataset can be associated with metadata to describe its operation conditions. This approach is more flexible as it allows many dimensions of operational conditions to be described. It was proposed by \citet{miah2024user} in a test dataset for R code generation and advanced further by \citet{ghosh2024} in the dataset ScenEval for Java code generation. 

which attempts to foil ML models for reading The profile based approaches can only be applied when data is naturally abundant and easy to collect. However, data in abnormal and hazardous operation conditions are usually rare and difficult to collect. Thus, in such scenarios, the transformational approach becomes necessary. 

\noindent{(b) \emph{Transformational approach}. }

The transformational approach takes a dataset $T$ for a different scenario and, for each input $x$ in $T$ applies a transformation $f$ to give an instance $x'=f(x)$ of the desired scenario. These instances form the dataset $T'=\{f(x)|x \in T\}$ for the new scenario. Typically, the existing dataset $T$ is for a typical and/or normal scenario.

A computer vision example is the work by \citet{hendrycks2019benchmarking}, who employed 75 image modification algorithms to construct benchmarks \emph{ImageNet-C} and \emph{ImageNet-P}. A subset of these algorithms were used by \citet{wozniak2024} to test an object detection model. Similarly, \citet{zhu2019a} used a Generative Adversarial Network (GAN) model to implement 13 transformations of face images, including changes of skin tone, age, hair colour, hair style, gender etc. to test face recognition models. Then, \citet{zhu2023scenario} used both a GAN model and coded algorithms to transform  images used to test the object detection ML model of an autonomous racing cars. 

The transformational approach has also been applied to code generation with semantics-preserving paraphrasing as the transformation. \citet{mastropaolo2023} investigated GitHub Copilot and compared two transformations: (i) a deep learning based paraphrasing tool PEGASUS \citep{zhang2020pegasus}, and (ii) translation from English to French and back again, a technique known as translation pivoting (TP). More recently, \citet{yan2024robustness} proposed a fully automatic technique called COCO in which the coding task is augmented with features extracted from the code generated for that task to form a paraphrase.
  

\subsubsection{Evaluation of ML Model's Performance}

Many performance metrics have been proposed and operational robustness can be evaluated based on any one of them.
%
%
It is out of the scope of this paper to review them in detail but readers are referred to the literature, e.g. \citep{sokolova2009systematic} and the \citep{iso24029-1-2023}. 


\subsubsection{Metrics for Evaluation of Operational Robustness}\label{sec:OperationalRobustnessMetrics}

As far as we know, all existing works on operational robustness are comparisons of the performances of the ML model on different operational scenarios. Typically, normal scenarios are compared with abnormal or atypical scenarios.

However, the performances on various scenarios are rarely aggregated into an overall measure of operational robustness. The only exception to this is \citet{hendrycks2019benchmarking}'s work on robustness of image recognition against various types of image corruptions. They introduced the notion of \emph{corruption robustness}, which is defined by the following equation: 
\begin{equation}
CR_C^D(M) = \mathbb{E}^M_{c \in C}[P_{x \in D} (M(c(x)) = l(x))]. \label{equ:OpR1}
\end{equation}
where $M$ is the classifier ML model under test, $D$ is a set of uncorrupted input data sampled according to a distribution $P$, and $C$ is a set of corruptions on the input data. For each $c \in C$ and $x \in D$, $c(x)$ is an image obtained by corrupting $x$ according to $c$, $l(x)$ is the true label of the image $x$, and $M(c(x))$ is the ML model $M$'s output label on the corrupted image $c(x)$. 

Note that the probability term denotes, as a measure of performance, the probability of classifier $M$ producing the correct output despite the corruption of the input by transformation $c$. Therefore, Equ. (\ref{equ:OpR1}) actually calculates the macro average of $M$'s accuracy over a number of scenarios. It can be generalised by replacing accuracy with any performance metric. 

In their experiment, the corruption robustness was evaluated and compared against a baseline model \emph{AlexNet} as follows. For each corruption $c \in C$, a test dataset was constructed by transforming uncorrupted images with 5 different levels of severity $i=1, \cdots, 5$. They calculated the error rate $E^M_{i,c}$ of a classifier $M$ when images are corrupted by $c$ at severity level $i$, then aggregated the error rates by taking their average. That is, 
\[ E^M_c =\sum_{i=1}^5{E^{M}_{i,c}}. \]

They further compared the corruption robustness of image recognition models with the baseline model \emph{AlexNet} using the following formula. 
\begin{equation}
CE^M = \frac{1}{|C|} \sum_{c \in C}{\left( E^M_{c}/E^{B}_{c} \right) } \label{equ:OpR2}
\end{equation}
when $B$ is the baseline model. 

They also proposed a \emph{relative robustness metric}, which was defined as follows. 
\begin{equation}
RelativeCE^M = \frac{1}{|C|}\sum_{c \in C} \left( \frac{\sum_{i=1}^5\left(E^M_{i,c} - E^M_{Clean}\right)} {\sum_{i=1}^5\left( E^{B}_{i,c} - E^{B}_{Clean}\right)} \right) \label{equ:OpR3}
\end{equation}
where $E^M_{Clean}$ is the error rate on the clean dataset $D$.

In summary, existing work on micro robustness is all about adversarial robustness, while work on operational robustness is all from the macro point of view. Testing and evaluating operational robustness from the micro point of view is still an open research problem. The next subsection identifies the main challenges and outlines our approach to overcome them. 

\subsection{The Challenges And Our Approach}

There are four reasons why existing methods and techniques are not applicable to measuring operational robustness of LLM code generation at the micro level.

\subsubsection{Generative Nature of the Task}

Robustness has only been explored for NLP on classification problems such as sentiment analysis and news classification tasks, and so on. For a generative task, the notion of failure is more complicated than a misclassification in a classification task. Code generation is such a generative task and it is hard to determine if a generated code is correct even if a reference solution is provided. Addressing this challenge, we do not require an absolute criterion to determine the correctness of the generated solution, but only a test oracle that can distinguish the solutions generated from two different inputs if they are not close enough. 

Assuming the benchmark of code generation comes with a reference solution and/or a set of test cases for the coding task, several approaches to automate such a test oracle have been proposed in the literature, which include (a) matching the string of tokens of generated code with the reference solution, (b) measuring the syntactical and structural similarity of the two solutions, (c) testing the functional equivalence between the generated and reference solutions on a set of test cases \citep{mastropaolo2023, yan2024robustness}.

The approach (c) above was further developed our previous work \citep{ghosh2024}, where we generated the required test cases from both the generated code and the reference solution. That approach will be adopted in this paper, too. 

\subsubsection{Discrete and Sparse Data Space}

Code generation is a task in the domain of natural language processing (NLP) and it relies heavily on a correct understanding of the natural language in which the code requirements are written. 

Not every string input is valid and meaningful text, however, so the input space is sparse, in contrast to that of image processing. It is also discrete rather than continuous so the metrics to measure the distance between two text inputs are far more complicated than those between two images, and it is not clear which metric is best. Furthermore, most algorithms for generating AEs depend on the data space being continuous.

We have therefore studied the effectiveness of many different metrics on texts for robustness evaluation. In response to the challenge of the data space being discrete and sparse, we only explore the meaningful points surrounding each input text rather than all of them. 

\subsubsection{Lack of an Attack Model}

Existing methods for evaluating micro robustness are based on the notion of adversarial inputs, which explicitly or implicitly relies on attack models. However, these attack models are meaningless for code generation task descriptions. The AEs generated by such algorithms do not represent input of the operational scenarios. 

Addressing this problem, we employ data augmentations to generate meaningful input data in a way similar the transformational approach to generate scenario-specific datasets. In particular, for code generation descriptions, we replace words in the natural language description with those that are semantically close by using a word embedding technique. 

\subsubsection{High Cost}

Leaving aside the problems above, existing AE generation algorithms are prohibitively expensive to apply to LLMs. White box algorithms are only applicable in the rare cases of open LLM models for which details are available and even in those situations, there would be so many parameters that it would take too long to generate test cases and / or it would require expensive powerful GPUs to process the data. Black box algorithms are even more expensive because they require so many invocations of the LLM to generate each AE. 

Addressing this challenge, we regard robustness testing and evaluation as a problem of exploratory testing \citep{zhu2022discovering} that aims at discovering the maximal safe distance within which a deviation from the original input would not cause the ML model to produce a different solution. We propose an automated exploratory testing method by integrating the ideas outlined above with a highly efficient exploration strategy that only explores meaningful data representing the operational scenario surrounding each input test case. Such meaningful data are generated by data augmentation using a word embedding technique and ordered by a distance metric. The search for the maximal safe deviation starts from the semantically closest meaningful data moving gradually to more distant ones, and stops when the difference from the original input is detected by the test oracle. Taking advantage of the sparse input space, it requires only a small number (less than 20) of black-box invocations of the LLM. 

The proposed method is presented in the next section. 

\section{The Proposed Method}\label{sec:ProposedMethod}

This section presents a micro method called \emph{scenario domain analysis} for scenario-based testing and evaluating the operational robustness of ML models. We first formally define a general theoretical framework and study its mathematical properties. The proofs of the theorems can be found in the Appendix. Then, we instantiate the theoretical framework with technical details to the problem of robustness of code generation. 

\subsection{Theoretical Framework} 

The basic idea of the proposed method is to explore the input space in the sub-domain of operational scenario surrounding a test case in a benchmark to discover the maximal size of the safe zone in which the ML model will not fail. The theoretical framework consists of two parts: (a) the definitions of the notion of "safe zone" and the metrics using the notion of safe zones to measure an ML model's robustness; (b) an algorithm or strategy to explore the sub-domain surrounding each test case in a benchmark. 

\subsubsection{Basic Concepts}

Generally speaking, a scenario in the use of an ML model $M$ is a specific type of situation defined by a specific operational condition in which the user invokes the ML model to achieve a specific purpose \citep{zhu2019a,zhu2023scenario, ghosh2024}. A scenario $s$, therefore, can be characterised by a sub-domain $D^s$ of the input space $D$ of the ML model $M$ that satisfies the operational condition that defines the type of usage situations. 

For example, driving in rain is a scenario in the operation of an autonomous vehicle. For the object detection ML model of the autonomous vehicle, this scenario can be characterised by the set of images of road traffic that satify the raining weather condition. Another example of a scenario in using a LLM as a coding tool is to develop a database component as the back-end of a SaaS application. This scenario can be characterised by the set of prompts that describe coding tasks to generate relational database SQL statements. 

As discussed in Section \ref{sec:RelatedWork}, micro operation robustness is concerned with how sensitive a ML model is to variations in the input data. Therefore, we assume that there is a distance metric (or similarity metric) $\|\cdot , \cdot \|$ defined on the input data space $D$ to measure the degree of variation. 

Moreover, we also assume that there is a test oracle that can determine if the ML model $M$ behaves differently between two input data $x$ and $y$. Considering $x$ as a deviation from a test case $t$, we say that ML model $M$ fails on $x$ if $M$ behaves differently according to the test oracle on $x$ from its behaviour on $t$, and we write $Fail^M_t(x)$. 

For a ML model to be robust in operation, one would expect that it resists small variations in its input; that is, it will behave in the same way when the input is changed slightly. Inspired by \citet{moosavi2016}'s notion of the minimal perturbation, we now define the notion of \emph{minimal deviation} in order to measure the operational robustness of an ML model $M$ in a scenario $s$ in terms of its capability to resist small variations in the input. 

\begin{definition}(Minimal distance to failure)\label{def:MinimalDistanceToFailure}

Let $t \in D$ and $D^s \subset D$ be the subset of the input space representing an operational scenario $s$.
 
An ML model $M$'s \emph{minimal distance from $t$ to failure} in the scenario domain $D^s$, denoted by $\Delta_{D^s}(t,M)$, is defined as follows. 

\begin{equation}
\Delta_{D^s}(t,M)=\min_{x \in D^s}\{\|x,t\| ~|~ Fails^M_t(x)\}. 
\end{equation}
The data $x\in D^s$ such that $\|x,t\|=\Delta_{D^s}(t,M)$ is called a \emph{minimal deviation} of $t$ in scenario $s$. 
\qed
\end{definition}

By the definition of $\Delta_{D^s}(t,M)$, we have that the following lemmas about the properties of minimal distance to failure. 

\begin{lemma}\label{lmm:lmmA}
$\forall x \in D^s. \left(\|x,t\| < \Delta_{D^s}(t,M) \Rightarrow \neg Fails^M_t(x)\right)$. \qed
\end{lemma}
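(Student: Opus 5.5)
The lemma follows directly from Definition~\ref{def:MinimalDistanceToFailure}, and I would prove it by contradiction. Fix $t$, the scenario domain $D^s$ and the model $M$, and write
\[
F=\{\|x,t\| ~|~ x \in D^s, Fails^M_t(x)\}
\]
for the set of distances realised by failing inputs in the scenario. By definition, $\Delta_{D^s}(t,M)=\min F$.

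First I would settle the meaning of the minimum. If $F$ is empty, there is no failing input in $D^s$, so $\neg Fails^M_t(x)$ holds for every $x\in D^s$ and the lemma is trivial. The natural convention is $\min\emptyset=+\infty$, under which the hypothesis $\|x,t\|<\Delta_{D^s}(t,M)$ always holds and the conclusion is still true. If $F$ is nonempty, the only fact I need is that $\Delta_{D^s}(t,M)$ is a lower bound of $F$:
\[
\Delta_{D^s}(t,M)\le\|x,t\| \quad \text{for every failing } x\in D^s.
\]
This holds whether the minimum is attained or is read as an infimum.

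The main step is then immediate. Take $x\in D^s$ with $\|x,t\|<\Delta_{D^s}(t,M)$ and suppose, for contradiction, that $Fails^M_t(x)$. Then $\|x,t\|\in F$, so $\Delta_{D^s}(t,M)\le\|x,t\|<\Delta_{D^s}(t,M)$, which is impossible. Hence $\neg Fails^M_t(x)$, and since $x$ was arbitrary the universally quantified statement follows.

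The only point needing care, and the closest thing to an obstacle, is that the minimum in the definition is well defined. There are two cases to cover:
\begin{itemize}
\item the empty case, handled by the $+\infty$ convention above;
\item a possibly non-attained minimum, handled by reading it as an infimum, since the argument uses only the lower-bound property.
\end{itemize}
In the paper's intended setting, where $D^s$ is explored over a finite set of augmented inputs, the minimum is attained whenever $F$ is nonempty. I would mention this as a remark rather than rely on it.
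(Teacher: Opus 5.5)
Your argument is correct and is essentially the paper's: the paper gives no separate proof, saying only that the lemma follows from Definition~\ref{def:MinimalDistanceToFailure}, and your contradiction via the lower-bound property of the minimum is exactly that unfolding. Your handling of the empty case and of a non-attained minimum, which the paper tacitly ignores, is a sound refinement; your closing remark slightly misdescribes the setting, since $\Delta_{D^s}(t,M)$ is taken over the whole scenario domain $D^s$ rather than the finite mutant set $Mut(t)$, but this is harmless because you do not rely on it.
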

\begin{lemma}\label{lmm:lmmB}
$\forall x \in D^s. \left(Fail^M_t(x) \Rightarrow \|x,t\| \geq \Delta_{D^s}(t,M)\right)$. \qed
\end{lemma}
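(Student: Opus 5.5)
The claim follows directly from Definition~\ref{def:MinimalDistanceToFailure}, by unfolding what the minimum means. First fix the conventions. Let
\[ F = \{\|x,t\| ~|~ x \in D^s \wedge Fails^M_t(x)\} \subseteq \mathbb{R}_{\geq 0}, \]
so that $\Delta_{D^s}(t,M)=\min F$. When $F$ is empty, no failure exists in the scenario domain, and I will take $\min\emptyset=+\infty$. Under this convention Lemma~\ref{lmm:lmmB} holds vacuously and Lemma~\ref{lmm:lmmA} holds trivially, because every $x$ has finite distance and none fails. I will also read $\min$ as $\inf$ in case the minimum is not attained, since only the lower-bound property is needed. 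I will also note that $Fail^M_t$ and $Fails^M_t$ denote the same predicate.

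\textbf{Direct proof of Lemma~\ref{lmm:lmmB}.} Take any $x\in D^s$ with $Fails^M_t(x)$. Then $\|x,t\|\in F$ by the definition of $F$. A minimum (or infimum) is a lower bound of its set, so $\Delta_{D^s}(t,M)=\min F \leq \|x,t\|$. That is the claim.

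\textbf{Alternative route via Lemma~\ref{lmm:lmmA}.} Lemma~\ref{lmm:lmmB} is the contrapositive of Lemma~\ref{lmm:lmmA}. Lemma~\ref{lmm:lmmA} states $\|x,t\|<\Delta \Rightarrow \neg Fails^M_t(x)$. Contraposing gives $Fails^M_t(x)\Rightarrow \neg(\|x,t\|<\Delta)$. Because the reals (extended with $+\infty$) are totally ordered, $\neg(\|x,t\|<\Delta)$ is equivalent to $\|x,t\|\geq\Delta$. I would present the contrapositive argument as the main proof and the direct argument as a remark.

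\textbf{Main obstacle.} The mathematics here is trivial; the only real care needed is in the degenerate cases. These are an empty failure set, and an infinite or non-closed $D^s$ on which the minimum need not exist. Both are handled by the conventions fixed at the start.
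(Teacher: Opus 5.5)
Your direct argument is correct and matches the paper, which gives no separate proof and simply says the lemma follows from the definition of $\Delta_{D^s}(t,M)$ as a minimum, i.e.\ as a lower bound on the distances of failing points in $D^s$. Your conventions for the empty and non-attained cases are sensible additions. However, the paper justifies Lemma~\ref{lmm:lmmA} by that same definition, so the direct argument, not the contrapositive, is the more self-contained choice for your main proof.
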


In other words, $\Delta_{D^s}(t,M)$ defines a ``safe zone'' in which the ML model will behave the same way as $t$. Outside the safe zone, there is no guarantee that the ML model will behave the same way as expected. The data points in the safe zone can be considered to be ``safe''. The notion of \emph{safe} can be defined as follows. 

\begin{definition}(Safe)\label{def:Safe}

Let $t \in D$. We say that $x \in D^s$ is \emph{safe} for model $M$ w.r.t. to $t$, denoted by $Safe^M_t(x)$, if and only if (a) the model $M$ does not fail on $x$, and (b) $M$ does not fail on any points $x'$ in $D^s$ whose distance to $t$ is less than the distance from $x$ to $t$. Formally, 
\begin{eqnarray}
\lefteqn{Safe^M_t(x) = \neg Fails^M_t(x)~ \wedge }\nonumber \\
&& \forall x' \in D^s.( \| x', t\| < \| x,t\| \Rightarrow \neg Fails^M_t(x'))
\end{eqnarray}
\qed
\end{definition}

\begin{definition}(Maximal safe distance)\label{def:SafeDistance}

The \emph{maximal safe distance} from $t$ in a scenario domain $D^s$ for ML model $M$, denoted by $\nabla_{D^s}(t,M)$, is defined as follows. 
\begin{equation}
\nabla_{D^s}(t,M)=\max_{x \in D^s_t}\{\|x,t\| ~| Safe^M_t(x)\}. 
\end{equation}
\qed
\end{definition}


From Definition \ref{def:Safe} and \ref{def:SafeDistance}, it is easy to prove the following lemmas. Proofs are omitted for the sake of space. 

\begin{lemma}\label{lmm:lmmC}
$\forall x \in D^s.\left(\|x,t\| < \nabla_{D^s}(x,M) \Rightarrow Safe^M_t(x)\right)$. \qed
\end{lemma}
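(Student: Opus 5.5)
The plan is to argue directly from Definitions \ref{def:Safe} and \ref{def:SafeDistance}. I read the statement as $\nabla_{D^s}(t,M)$; the argument $x$ in the displayed formula appears to be a typo. I also assume the maximum in Definition \ref{def:SafeDistance} is attained by some safe point $y \in D^s$ with $\|y,t\| = \nabla_{D^s}(t,M)$. This holds whenever the set of distances $\{\|x,t\| \mid Safe^M_t(x)\}$ is finite or closed. If it is merely a supremum, the same argument works with a safe $y$ whose distance lies strictly between $\|x,t\|$ and the supremum.

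Fix $x \in D^s$ with $\|x,t\| < \nabla_{D^s}(t,M)$, and let $y$ be as above. Since $Safe^M_t(y)$ holds, its second conjunct gives, for every $x' \in D^s$, that $\|x',t\| < \|y,t\|$ implies $\neg Fails^M_t(x')$. Applying this to $x' = x$ gives $\neg Fails^M_t(x)$, which is clause (a) of $Safe^M_t(x)$.

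For clause (b), take any $x' \in D^s$ with $\|x',t\| < \|x,t\|$. By transitivity of $<$, we have $\|x',t\| < \|y,t\|$. The same universal property of $y$ then gives $\neg Fails^M_t(x')$. Combining (a) and (b), Definition \ref{def:Safe} yields $Safe^M_t(x)$.

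The only genuine obstacle is the existence of the witness $y$, that is, whether the maximum is attained. Beyond that, the proof is a two-line unfolding of the definitions, and it uses only the strict inequality and transitivity. The same chain also shows safety is downward closed in distance, which I would note in passing since it underlies the later lemmas relating $\nabla$ and $\Delta$.
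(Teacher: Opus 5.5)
Your proof is correct and is the direct unfolding of Definitions~\ref{def:Safe} and~\ref{def:SafeDistance} that the paper intends; the paper omits this proof as easy, and you correctly read $\nabla_{D^s}(x,M)$ as $\nabla_{D^s}(t,M)$. Your supremum variant already works in full generality, including an unbounded or non-attained $\nabla_{D^s}(t,M)$, so attainment of the maximum is not a real obstacle. Also, ``closed'' alone would not guarantee the maximum is attained unless the set is bounded too.
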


\begin{lemma}\label{lmm:lmmD}
$\forall x \in D^s.\left(Safe^M_t(x) \Rightarrow \|x,t\| \leq \nabla_{D^s}(t,M)\right)$. \qed
\end{lemma}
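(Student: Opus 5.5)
Lemma \ref{lmm:lmmD} follows almost immediately from Definition \ref{def:SafeDistance}. I would argue directly from that definition, reading the domain of the maximum, written $D^s_t$, as $D^s$ itself; this matches the quantification in the lemma. I make the standing assumption, implicit in the definition, that the maximum exists. For instance, it exists when the set of distances $\{\|x,t\| \mid x\in D^s\}$ is finite or well-ordered, as in the discrete text setting the paper targets; otherwise $\max$ can be read as $\sup$.

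\textbf{Step 1.} Fix an arbitrary $x\in D^s$ with $Safe^M_t(x)$. By construction, the set $S=\{\|y,t\| \mid y\in D^s,\ Safe^M_t(y)\}$ contains $\|x,t\|$.

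\textbf{Step 2.} By Definition \ref{def:SafeDistance}, $\nabla_{D^s}(t,M)=\max S$. Since every element of a set is bounded above by its maximum, $\|x,t\|\le \nabla_{D^s}(t,M)$. Because $x$ was arbitrary, the universally quantified implication holds.

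The only step needing care is the well-definedness of $\nabla_{D^s}(t,M)$. If $S$ is nonempty and has a maximum (or supremum), the bound is the defining property of an upper bound. If no safe point exists, the implication is vacuously true, so emptiness causes no problem. I also note that this direction does not use the second conjunct of Definition \ref{def:Safe}; only membership in the set over which the maximum is taken matters.
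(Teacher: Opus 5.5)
Your argument is correct, and it is the direct unfolding of Definition \ref{def:SafeDistance} the paper has in mind when it calls the lemma easy and omits the proof. One parenthetical remark is wrong, though harmlessly so: a well-ordered set of distances guarantees least elements, not greatest ones, so it does not ensure the maximum exists. For example, safe points at distances $1-1/n$ for all $n\geq 1$ and a failure at distance $1$ give a well-ordered distance set with no maximal safe distance. Your fallback reading of $\max$ as $\sup$ covers this case, and the lemma still holds there.
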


\begin{lemma}\label{lmm:lmmE}
$\forall x \in D^s.\left(\|x,t\| > \nabla_{D^s}(t,M) \Rightarrow \neg Safe^M_t(x)\right)$. \qed
\end{lemma}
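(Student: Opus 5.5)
The statement to prove is Lemma~\ref{lmm:lmmE}: if $\|x,t\| > \nabla_{D^s}(t,M)$ then $\neg Safe^M_t(x)$. I would prove it as the contrapositive of Lemma~\ref{lmm:lmmD}, so almost nothing new is needed. Fix $x \in D^s$ with $\|x,t\| > \nabla_{D^s}(t,M)$ and suppose, for contradiction, that $Safe^M_t(x)$ holds. Lemma~\ref{lmm:lmmD} then gives $\|x,t\| \leq \nabla_{D^s}(t,M)$. This contradicts the hypothesis, because distances are real numbers and the order on $\mathbb{R}$ is total, so $a > b$ and $a \leq b$ cannot both hold.

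To keep the argument self-contained, I would also justify Lemma~\ref{lmm:lmmD} directly from Definition~\ref{def:SafeDistance}. By definition, $\nabla_{D^s}(t,M)$ is the maximum of the set $S=\{\|x,t\| \mid x \in D^s,\ Safe^M_t(x)\}$. Any $x$ with $Safe^M_t(x)$ contributes its distance $\|x,t\|$ to $S$, and so this distance is bounded above by the maximum. I would read the index set $D^s_t$ in the definition as $D^s$, so that every safe $x\in D^s$ really does lie in the set being maximised.

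The only real obstacle is well-definedness. The maximum must exist, or at least $\nabla_{D^s}(t,M)$ should be read as a supremum, for the bound $\|x,t\|\le\nabla_{D^s}(t,M)$ to be meaningful.
\begin{itemize}
\item If $S$ is finite and nonempty, which is the practical case of finitely many augmented inputs, the maximum exists.
\item If the maximum may fail to be attained, I would replace $\max$ by $\sup$. The upper-bound property still holds for a supremum, so the argument goes through unchanged.
\item If $S$ is empty, I would adopt the convention $\sup\emptyset=-\infty$. Then the hypothesis $\|x,t\| > \nabla_{D^s}(t,M)$ holds for every $x$, and the conclusion $\neg Safe^M_t(x)$ holds trivially because no $x$ is safe when $S$ is empty.
\end{itemize}
With these conventions stated, the contrapositive argument covers every case.
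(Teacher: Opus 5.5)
Your proof is correct and follows the same route as the paper, which omits the proof and says only that it follows easily from Definitions~\ref{def:Safe} and~\ref{def:SafeDistance}: Lemma~\ref{lmm:lmmE} is the contrapositive of Lemma~\ref{lmm:lmmD}, and Lemma~\ref{lmm:lmmD} is just the upper-bound property of $\nabla_{D^s}(t,M)$ as a maximum over safe points. Reading the index set $D^s_t$ as $D^s$ is actually necessary for the lemma to hold as stated for all $x \in D^s$, and your $\sup$ and empty-set conventions cover cases the paper leaves implicit.
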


From Definition \ref{def:MinimalDistanceToFailure} and \ref{def:SafeDistance}, it is easy to prove the following lemma about minimal distance to failure and maximal safe distance. 

\begin{lemma}\label{lmm:lmmF}
$\forall t \in D. \left(\nabla_{D^s}(t,M) \leq \Delta_{D^s}(t,M)\right)$. 
\qed
\end{lemma}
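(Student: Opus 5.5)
The plan is to show that every safe point lies strictly closer to $t$ than every failing point. Taking the supremum over safe points and the infimum over failing points then gives $\nabla_{D^s}(t,M) \leq \Delta_{D^s}(t,M)$. I will read the inconsistent notations $Fail$ and $Fails$ as the same predicate, and $D^s_t$ in Definition \ref{def:SafeDistance} as $D^s$. I will also take both the max and the min to be attained, as the definitions implicitly assume. If no failing point exists, I use the convention $\Delta_{D^s}(t,M)=+\infty$, and the inequality is trivial.

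\textbf{Step 1 (the core inequality).} Fix $x$ with $Safe^M_t(x)$ and $y$ with $Fails^M_t(y)$. I claim $\|x,t\| \leq \|y,t\|$, and argue by contradiction. Suppose $\|y,t\| < \|x,t\|$. Clause (b) of Definition \ref{def:Safe} applied to $x' = y$ gives $\neg Fails^M_t(y)$, a contradiction. The case $y = x$ cannot occur, since clause (a) says $x$ does not fail. So $\|x,t\|\le\|y,t\|$.

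\textbf{Step 2 (pass to extremes).} Choose $x^*$ attaining $\nabla_{D^s}(t,M)$, so $Safe^M_t(x^*)$ holds. Choose $y^*$ attaining $\Delta_{D^s}(t,M)$, so $Fails^M_t(y^*)$ holds. Step 1 then gives $\nabla_{D^s}(t,M)=\|x^*,t\|\le\|y^*,t\|=\Delta_{D^s}(t,M)$. Without attainment, the same argument works with sup and inf: Step 1 makes every safe distance a lower bound for all failing distances. Step 1 gives $\|x,t\| \le \|y,t\|$, so $\sup$ over safe points $\le \inf$ over failing points.

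Alternatively, Step 2 can be routed through Lemma \ref{lmm:lmmB} and Lemma \ref{lmm:lmmD}, but the direct argument above is cleaner.

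\textbf{Main obstacle.} The mathematics is trivial; the only delicate point is the degenerate cases. If the set of safe points is empty, $\nabla_{D^s}(t,M)$ is undefined; this happens, for example, when $t \notin D^s$ and the nearest point fails. I would adopt the convention $\max\emptyset = 0$ (or $-\infty$). If the failure set is empty, I use $\min\emptyset=+\infty$. I will state these conventions explicitly before running Steps 1 and 2.
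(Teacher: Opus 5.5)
Your proof is correct and matches the paper, which writes out no proof of this lemma and only remarks that it follows easily from Definitions \ref{def:MinimalDistanceToFailure} and \ref{def:SafeDistance}. Your Step 1 (clause (b) of Definition \ref{def:Safe} rules out a failing point of $D^s$ strictly closer to $t$ than a safe point), followed by passing to the max and min (or sup and inf), is the direct argument that remark points to.

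There are two harmless slips:
\begin{itemize}
\item Your opening claim that safe points lie \emph{strictly} closer to $t$ than failing points is false in general: a non-failing point of $D^s$ at distance exactly $\Delta_{D^s}(t,M)$ is safe, which is the equality case the paper mentions after Theorem \ref{thm:R*}. Step 1 rightly proves only $\leq$, so the argument is unaffected.
\item The aside about routing Step 2 through Lemmas \ref{lmm:lmmB} and \ref{lmm:lmmD} would not work. Those lemmas only say that $\Delta_{D^s}(t,M)$ is a lower bound on failing distances and $\nabla_{D^s}(t,M)$ is an upper bound on safe distances. These are the wrong halves of extremality for comparing the two.
\end{itemize}
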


Note that for discrete spaces, there may be a gap between the maximal safe distance and minimal distance to failure. However, no matter whether or not the input space is discrete or continuous, we have the following property. 

\begin{theorem}\label{thm:NablaDelta}
There is no $x \in D^s$ such that $\nabla_{D^s}(t,M) < \|x, t\| < \Delta_{D^s}(t,M)$. \qed
\end{theorem}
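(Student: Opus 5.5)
The plan is a proof by contradiction that combines Lemma \ref{lmm:lmmA} with Lemma \ref{lmm:lmmD} (equivalently Lemma \ref{lmm:lmmE}). Suppose some $x \in D^s$ satisfies $\nabla_{D^s}(t,M) < \|x,t\| < \Delta_{D^s}(t,M)$. The aim is to show that $x$ is safe with respect to $t$. Lemma \ref{lmm:lmmD} would then give $\|x,t\| \leq \nabla_{D^s}(t,M)$, which contradicts the left-hand strict inequality.

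To establish $Safe^M_t(x)$, I will check the two clauses of Definition \ref{def:Safe} separately.

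\emph{Clause (a).} Since $\|x,t\| < \Delta_{D^s}(t,M)$, Lemma \ref{lmm:lmmA} directly gives $\neg Fails^M_t(x)$.

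\emph{Clause (b).} Take any $x' \in D^s$ with $\|x',t\| < \|x,t\|$. Transitivity of $<$ gives $\|x',t\| < \Delta_{D^s}(t,M)$. Lemma \ref{lmm:lmmA} applied to $x'$ then yields $\neg Fails^M_t(x')$.

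Hence $Safe^M_t(x)$ holds. Lemma \ref{lmm:lmmD} then forces $\|x,t\| \le \nabla_{D^s}(t,M)$, contradicting $\nabla_{D^s}(t,M) < \|x,t\|$. Equivalently, Lemma \ref{lmm:lmmE} applied to $\|x,t\| > \nabla_{D^s}(t,M)$ gives $\neg Safe^M_t(x)$, which is the same contradiction.

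The main obstacle is not the logic but the well-definedness of $\nabla$ and $\Delta$. Definitions \ref{def:MinimalDistanceToFailure} and \ref{def:SafeDistance} use $\min$ and $\max$, which may fail to exist, for instance when no point fails or when the safe distances accumulate without being attained. Lemma \ref{lmm:lmmD} relies on the maximum being an upper bound of all safe distances. I will therefore state the following conventions:
\begin{itemize}
\item $\Delta_{D^s}(t,M)=+\infty$ when no failing point exists, so that Lemma \ref{lmm:lmmA} holds vacuously;
\item $\min$ and $\max$ are read as $\inf$ and $\sup$ over the relevant sets.
\end{itemize}
Under these readings, Lemma \ref{lmm:lmmA} still holds, since any point strictly below the infimum does not fail, and Lemma \ref{lmm:lmmD} still holds, since every safe distance is at most the supremum. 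The argument above then goes through unchanged. With these conventions the argument also never uses continuity of the space, which matches the remark that the theorem holds for discrete as well as continuous input spaces.
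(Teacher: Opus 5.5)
Your proof is correct and is essentially the paper's argument. The paper assumes such an $\alpha$, splits on whether $Fail^M_t(\alpha)$ holds, and rules out each case with Lemma~\ref{lmm:lmmB} and Lemma~\ref{lmm:lmmE}; these are the contrapositives of your two applications of Lemma~\ref{lmm:lmmA} and your final appeal to Lemma~\ref{lmm:lmmD}. Your $\inf$/$\sup$ convention for $\Delta_{D^s}$ and $\nabla_{D^s}$ is a sensible extra that the paper does not address, and it leaves the argument unchanged.
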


Having defined the notions of minimal distance to failure and maximal safe distance, we can now revise the robustness metrics of adversarial robustness to define micro operational robustness. 

\begin{definition}(Operational Robustness in A Scenario)

The operational robustness in a scenario $s$, denoted by $\rho^{o}_s$, is defined by the following equation: 
\begin{equation}
\rho^{o}_s(M)= E_{x}\left(\Delta_{D^s}(x,M)\right) 
\label{equ:Rho-Metric}
\end{equation}

As an alternative to $\rho^{o}_s$, a new metric $\rho^*_s$ can also be defined with the following equation. 
\begin{equation}
\rho^*_s(M) = E_{x}\left(\nabla_{D^s}(x,M)\right) 
\label{equ:RhoStar-Metric}
\end{equation}
\qed
\end{definition}

\subsubsection{Exploration Strategy}

We regard robustness testing and evaluation as a form of exploratory testing aimed at discovering the minimal distance to failure and/or the maximal safe distance from a starting point $t$ in a benchmark $T$. As with the exploratory testing to find the boundary of classification models studied by \citet{zhu2022discovering}, there may be many different strategies to explore the sub-domain $D^s_\delta(t)$ with a starting point $t \in T$. Here, we propose a highly efficient strategy that requires a very small number of invocations of the ML model when the input space is discrete and sparse like the input domain of code generation. 

\begin{algorithm}[!h]
\scriptsize
\caption{Scenario Domain Analysis}
\label{alg:ScenarioDomainAnalysis}
\begin{algorithmic}[1]
\REQUIRE $ $ \\
    $T: \mathcal{P}(D)$; \textcolor{blue}{/* Test dataset */}\\
    $M: D \rightarrow C$; \textcolor{blue}{/* The ML model under test */}\\
    $GenMutants_n: D \rightarrow \mathcal{P}(D)$; \textcolor{blue}{/* Procedure to generate a set of mutants */}\\  
    $n$: Integer; \textcolor{blue}{/* To control the size of mutant set to be generated */}\\
    $\|\cdot,\cdot\|: D \times D \rightarrow \mathcal{R}^+$; \textcolor{blue}{/* Distance metric on $D$ */}\\      
	$Fail: C \times C \rightarrow Bool$; \textcolor{blue}{/* Test oracle; $Fail(c_1,c_2)=true$ means that $c_1$ and $c_2$ are different */}
\ENSURE  $ $\\
    $R^o: \mathcal{R}^+$; 
    $R^*: \mathcal{R}^+$;
\STATE $ListLS \leftarrow []$; \textcolor{blue}{/* To store the list of $LS(t)$*/}
\STATE $ListFF \leftarrow []$; \textcolor{blue}{/* To store the list of $FF(t)$*/}
\FOR{each $t$ in $T$}
	\STATE $ListMD' \leftarrow []$; \textcolor{blue}{/* To store the list of mutants and their distances for use after domain expansion */}
	\STATE $Mut_t \leftarrow GenMutants_n(t)$; \textcolor{blue}{/* Generate mutants from $t$*/}
	\STATE \texttt{Label1:} 
	\STATE $ListMD \leftarrow []$; \textcolor{blue}{/* To store the list of mutants and their distances to the seed $t$ */}
	\FOR{each $x$ in $Mut_t$}
    		\STATE $ListMD \leftarrow ListMD + \left<x,\|t,x\| \right>$;
	\ENDFOR
	\STATE $Sort(ListMD)$; \textcolor{blue}{/* in ascending order according to the distances  */}
	\STATE $Bool: AllSuccess \leftarrow$ \textbf{true}; 
	\STATE $LS \leftarrow \left<t,\|t,t\|\right>$;
	\STATE $C_t \leftarrow M(t)$; \textcolor{blue}{/* Test ML model $M$ on input $t$ to generate $C_t$ */}
	\FOR{each $\left<x, d_x \right>$ in $ListMD$ \textcolor{blue}{/* in ascending order of distances */}}
    		\STATE $C_x \leftarrow M(x)$ \textcolor{blue}{/* Test ML model $M$ on $x$ to generate $C_x$ */}  
   		\IF{$Fail(C_t,C_x)$}
        		\STATE $FF \leftarrow \left<x, d_x\right>$;      
        		\STATE $AllSuccess \leftarrow $  \textbf{false};      
        		\STATE \textbf{break}
   		\ELSE	
   			\STATE $LS \leftarrow \left<x, d_x\right>$;
    		\ENDIF
	\ENDFOR
	\IF{$AllSuccess$ \textcolor{blue}{/* No FF is found */}}
		\STATE \textcolor{blue}{// Expand the set of mutants}; 
		\STATE increase $n$; 
		\STATE $Mut_t \leftarrow GenMutants_n(t) - Mut_t$; 
		\STATE $ListMD' \leftarrow Merge(ListMD', ListMD)$; \textcolor{blue}{/* Ensure $ListMD'$ sorted in ascending order */}
    		\STATE \textbf{GoTo} \texttt{Label1}; \textcolor{blue}{/* To process the expanded set of mutants */}
    	\ELSE 
    		\STATE $LS \leftarrow BinarySearch(ListMD', FF)$ \textcolor{blue}{/* To find the maximal $\left<x, d_x\right>$ in $ListMD'$ such that  $d_x \leq FF.d$ /*} 
    		\STATE $ListLS \leftarrow ListLS + LS$;
    		\STATE $ListFF \leftarrow ListFF + FF$; 
	\ENDIF
\ENDFOR
\STATE $LSSum \leftarrow 0$; $FFSum \leftarrow 0$;
\FOR{each $\left<x,d_x\right>$ in $ListLS$}
    \STATE $LSSum \leftarrow LSSum + d_x$;
\ENDFOR
\FOR{each $\left<x,d_x\right>$ in $ListFF$} 
	\STATE $FFSum \leftarrow FFSum + d_x$;
\ENDFOR
\STATE $R^{o} \leftarrow FFSum / Length(T)$; 
\STATE $R^{*} \leftarrow LSSum / Length(T)$;
\RETURN $R^{o}, R^{*}$
\end{algorithmic}
\end{algorithm}

As shown in Algorithm \ref{alg:ScenarioDomainAnalysis}, in this strategy, the testing process consists of the following four steps. 

\begin{list}{Step 1.}
\item \emph{Generating sample sets of the scenario domain.}
\end{list}

For each test case $t$ in the benchmark test dataset $T$, a set $Mut(t)$ of test cases in the sub-domain $D^s$ of the scenario $s$ is generated systematically so that they are of varying similarities to $x$; see Line 5 of Algorithm \ref{alg:ScenarioDomainAnalysis}. 

The generated test cases in $Mut(t)$ are called \emph{mutants} of $t$, while the original test case $t \in T$ is called the \emph{seed} of the mutants. We require that the set $Mut(t)$ of mutants satisfy the validity and completeness conditions defined below.

\begin{definition}\emph{(Validity)}\label{def:validity}

A set $Mut(t)$ of mutants of a seed $t$ is valid, if every mutant correctly represents the operational scenario $s$ surrounding the seed $t$. Formally, $Mut(t) \subset D^s$. \qed
\end{definition}

\begin{definition}\emph{(Completeness)}\label{def:completeness}

A set $Mut(t)$ of mutants of seed $t$ is complete, if it contains all points of the sub-domain of the operational scenario $s$ that surround the seed $t$ within a certain distance $\delta_t$. Formally, there is a real number $\delta_t>0$ such that \[\forall x \in D^s. \|x,t\| \leq \delta_t \Rightarrow (x \in Mut(t)).\] 
The maximal $\delta_t$ that makes $Mut(t)$ complete is called $Mut(t)$'s \emph{coverage size} of the scenario $s$ surrounding $t$. \qed
\end{definition}

It is worth noting that, in practice, it may be difficult to formally prove that the set $Mut(t)$ of mutants are valid and complete and to find out its coverage size $\delta_t$, because the scenario sub-domain $D^s$ is usually not formally defined. Therefore, we require that the subset $Mut(t)$ of mutants are systematically generated, aiming to satisfying the validity and completeness conditions so that it provides a good characterisation of the scenario. 

\begin{list}{Step 2.}
\item \emph{Measuring the distances from mutants to the seed.} 
\end{list}

The mutants $x$ in the sample set $Mut(t)$ generated from a seed $t$ are measured for their distances to the seed $t$ using a distance metric $\|x,t\|$. These mutants are then sorted in ascending order according to their distances from the seed. See Line 7 - 11 of Algorthm \ref{alg:ScenarioDomainAnalysis}. 

For all $x,y \in D$, we write $x \prec_t y$ for $\|x,t\|< \|y,t\|$, $x \approx_t y$ for $\|x,t\| = \|y,t\|$, and $x \preccurlyeq_t y$ for $\|x,t\| \leq \|y,t\|$. 

So, the sorted set $Mut(t)$ of mutants can be represented in the form 
\[x_1 \preccurlyeq_t x_2 \preccurlyeq_t \cdots \preccurlyeq_t x_n, \]
where $n=|Mut(t)|$. 

In Section \ref{sec:ImpactOfSimilarityMetrics}, we will study various distance (and similarity) metrics on natural language texts and assess their suitability for use in robustness evaluations. 

\begin{list}{Step 3.}
\item \emph{Testing ML model and assess correctness of the output.} 
\end{list}    
    
The ML model is tested first on the seed test case $t \in T$ and then on the mutants $x \in Mut(t)$ in ascending order of the distance $\|x,t\|$ between the mutant $x$ and the seed $t$. That is from $x_1$ to $x_n$. The process stops when ML model fails on a mutant $a = x_k \in Mut(t)$. This is called the \emph{first failure} mutant and denoted by $FF(t)$. The immediate previous mutant $b=x_{k-1}$ before the first failure mutant $a$ is called the \emph{last successful} mutant and denoted by $LS(t)$. See Lines 14 -24 of Algorithm \ref{alg:ScenarioDomainAnalysis}. 

Note that it is possible that there are two or more mutants that are of the same distance to the seed $t$. We assume that the sorting in Step 2 puts such mutants in the sequence at random. Thus, the order that mutants are tested is not unique, if the testing process is executed more than once. In such cases, there may be more than one possible mutants that can be found as the first failure (and the last success) in different orderings of mutants. However, the first failure and the last success mutants always have the following properties. 

\begin{lemma}\label{lmm:lmm1}
For all $x \in Mut(t)$, $Fail^M_t(x) \Rightarrow (FF(t) \preccurlyeq_t x)$. \qed
\end{lemma}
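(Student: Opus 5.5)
The argument goes by contradiction and uses only the way Step 3 scans the sorted sequence $x_1 \preccurlyeq_t x_2 \preccurlyeq_t \cdots \preccurlyeq_t x_n$ produced in Step 2. Throughout, the lemma is read under the standing assumption of Step 3 that a first failure exists, i.e. $FF(t)=x_k$ for some index $k$.

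Let $x \in Mut(t)$ with $Fail^M_t(x)$, and suppose for contradiction that $FF(t) \preccurlyeq_t x$ is false. Then $\|x,t\| < \|x_k,t\|$, that is, $x \prec_t x_k$. Since $x \in Mut(t)$, we have $x = x_j$ for some index $j$. Because the list is sorted in ascending order of distance, $x_j \prec_t x_k$ forces $j < k$. A tie cannot place $x_j$ after $x_k$, since ties only permute elements of equal distance, and here the distances differ strictly.

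Step 3 tests the mutants in order and stops at the first $x_i$ on which $M$ fails, so $M$ does not fail on any $x_i$ with $i<k$. In Algorithm \ref{alg:ScenarioDomainAnalysis}, every such $x_i$ passed through the \textbf{else} branch of the test $Fail(C_t,C_{x_i})$. In particular $\neg Fail^M_t(x_j)$, which contradicts $Fail^M_t(x)$. Hence $FF(t) \preccurlyeq_t x$.

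The only delicate point is the treatment of equal distances, which is why the conclusion is $\preccurlyeq_t$ and not the strict $\prec_t$. A failing mutant at the same distance as $FF(t)$ may come after $FF(t)$ in the chosen ordering, and the weak inequality covers it. A second subtlety is domain expansion (the \textbf{GoTo} in the algorithm). On that path, every mutant in the earlier batches was tested and passed, so no failing $x$ lies in them, and $FF(t)$ is the first failure in the final sorted batch. I would state explicitly that $Mut(t)$ in the lemma denotes the set tested in the run that found $FF(t)$, so that the same ordering argument applies to it.
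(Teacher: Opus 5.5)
Your proof is correct and follows the paper's own argument: if $x \prec_t FF(t)$, then $x$ is sorted and tested before $FF(t)$, so $Fail^M_t(x)$ contradicts $FF(t)$ being the first failure. Your remarks on ties and on domain expansion are sound extra care that the paper omits. Because every mutant in the earlier batches was tested and passed, the lemma also holds for the full expanded set, so the restriction of $Mut(t)$ you propose is not actually needed.
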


\begin{lemma}\label{lmm:llm2}
$LS(t) \preccurlyeq_t FF(t)$. \qed
\end{lemma}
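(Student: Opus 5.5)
The plan is to argue directly from the way $LS(t)$ and $FF(t)$ are produced in Step~3 (Lines 14--24 of Algorithm~\ref{alg:ScenarioDomainAnalysis}), using only the fact that the mutants are tested in the sorted order $x_1 \preccurlyeq_t x_2 \preccurlyeq_t \cdots \preccurlyeq_t x_n$ produced in Step~2. The claim $LS(t) \preccurlyeq_t FF(t)$ means $\|LS(t),t\| \leq \|FF(t),t\|$. It should follow from the positions of the two mutants in this sequence together with the transitivity of $\leq$ on distances.

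First I will fix one admissible ordering of $Mut(t)$. Ties are broken arbitrarily, so the argument must hold for every ordering compatible with $\preccurlyeq_t$. Let $k$ be the index at which the loop breaks, so $FF(t)=x_k$.

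I then split into two cases. If $k>1$, then by definition $LS(t)=x_{k-1}$, and since the list is sorted ascending, $\|x_{k-1},t\| \leq \|x_k,t\|$. This gives $LS(t) \preccurlyeq_t FF(t)$ immediately. If $k=1$, the variable $LS$ still holds its initial value $\left<t,\|t,t\|\right>$ from Line~13. Here I use that a distance metric satisfies $\|t,t\|=0 \leq \|x_1,t\|$, which gives the same inequality.

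Finally I must handle the domain-expansion branch. There $LS$ is recomputed by $BinarySearch(ListMD', FF)$ as the maximal entry of the merged sorted list whose distance satisfies $d_x \leq FF.d$. The inequality $LS.d \leq FF.d$ is then built into the specification of the search. If no such entry exists, the default value $\left<t,0\right>$ again satisfies it.

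The main obstacle is not mathematical depth but bookkeeping. I need to make sure every path by which $LS$ is assigned, including the initial value and the expansion path, yields a distance no larger than that of $FF$. The step I would check most carefully is the binary-search case: I must confirm that merging keeps $ListMD'$ sorted and that the search result is taken relative to $FF.d$. Once that is confirmed, each case reduces to a single inequality.
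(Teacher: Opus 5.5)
Your proposal is correct and takes the same route as the paper. The paper's proof is just the observation that $LS(t)=x_{k-1}$ precedes $FF(t)=x_k$ in the ascending sequence $x_1 \preccurlyeq_t \cdots \preccurlyeq_t x_n$, which is exactly your case $k>1$; your extra handling of $k=1$ (via $\|t,t\|=0$) and of the $BinarySearch$ path goes beyond what the lemma needs, since it concerns $LS(t)$ as defined in Step~3 for a fixed sorted $Mut(t)$, and the default you assume when no entry qualifies is your own convention rather than something the paper specifies.
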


\begin{lemma}\label{lmm:lmm3}
There is no $x \in Mut(t)$ such that $LS(t) \prec_t x \prec_t FF(t)$. \qed
\end{lemma}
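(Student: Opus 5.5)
The argument rests entirely on the structure of Step 3. I will use the sorted enumeration $x_1 \preccurlyeq_t x_2 \preccurlyeq_t \cdots \preccurlyeq_t x_n$ of $Mut(t)$ from Step 2. By construction, $FF(t)=x_k$ for some index $k$ and $LS(t)=x_{k-1}$, where $x_0$ is taken to be the seed $t$ itself (Line 13 of Algorithm \ref{alg:ScenarioDomainAnalysis} initialises $LS$ to $\left<t,\|t,t\|\right>$). The seed is the closest possible point, since $\|t,t\|=0 \leq \|x,t\|$ for every $x$.

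I will argue by contradiction. Suppose some $x \in Mut(t)$ satisfies $LS(t) \prec_t x \prec_t FF(t)$, that is, $\|x_{k-1},t\| < \|x,t\| < \|x_k,t\|$. Since $x \in Mut(t)$, we have $x = x_j$ for some index $j$. I then compare $j$ with $k$ using the sortedness of the sequence.
\begin{itemize}
\item If $j \geq k$, then $x_k \preccurlyeq_t x_j$, which contradicts $x \prec_t x_k$.
\item If $j \leq k-1$, then $x_j \preccurlyeq_t x_{k-1}$, which contradicts $x_{k-1} \prec_t x$.
\end{itemize}
No index is left, so no such $x$ exists. The argument uses only transitivity and irreflexivity of $<$ on the real distances, together with the monotonicity of the sorted list. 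It therefore remains valid regardless of how ties among equidistant mutants are broken.

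The main obstacle is bookkeeping rather than mathematics. When no failure is found, Algorithm \ref{alg:ScenarioDomainAnalysis} expands the mutant set and merges lists, and at the end $LS$ is recomputed by $BinarySearch(ListMD', FF)$. So I must check that the final $LS(t)$ is still the element of the merged sorted list immediately preceding $FF(t)$, rather than an arbitrary element. Merging preserves the ascending order, so the full tested sequence is again one sorted enumeration of the whole (expanded) $Mut(t)$, and the same index argument applies to it.

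If the binary-search characterisation ("maximal $d_x \leq FF.d$") were taken literally, $LS(t)$ would instead be the maximal element not exceeding $FF(t)$ in distance. In that case I would argue directly: any $x$ with $x \prec_t FF(t)$ has $d_x \leq FF.d$, so by maximality $x \preccurlyeq_t LS(t)$, which contradicts $LS(t) \prec_t x$. I would present this second form as the primary proof, since it is robust to the algorithmic details, and treat the index argument as the intuition behind it.
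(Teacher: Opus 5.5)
Your index argument is correct and complete for the lemma as stated, where Step~3 defines $FF(t)=x_k$ and $LS(t)=x_{k-1}$. It is, however, a different route from the paper's. The paper argues by contradiction with a case split on the oracle:
if the intermediate $\alpha$ fails, Lemma~\ref{lmm:lmm1} gives $FF(t)\preccurlyeq_t\alpha$;
if it passes, it is a success lying after $LS(t)$ and before $FF(t)$, contradicting that $LS(t)$ is the last success.
You never use $Fail^M_t$ or Lemma~\ref{lmm:lmm1}. You use only that two adjacent entries of a list sorted by distance admit no list entry strictly between them. Your version is more elementary and shows that the lemma is purely a fact about sortedness, independent of the oracle and of how ties are broken. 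The paper's version instead rests on two semantic properties: $FF(t)\preccurlyeq_t x$ for every failing $x$, and every success strictly closer than $FF(t)$ is $\preccurlyeq_t LS(t)$. It would therefore still work if these properties were taken as the definitions of $FF(t)$ and $LS(t)$ rather than list positions.

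Do not, however, promote your ``second form'' to the primary proof. Its step ``by maximality, $x\preccurlyeq_t LS(t)$'' needs the maximum to range over every mutant strictly closer than $FF(t)$. For the Step~3 definition, that is exactly what the index argument proves, so the second form is not independent of it.

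Nor is the second form robust to the algorithm. In Algorithm~\ref{alg:ScenarioDomainAnalysis}, $BinarySearch$ searches $ListMD'$, which at that point holds only the earlier, all-success batches, not the final one. Suppose the earlier batch has mutants at distances $1$ and $2$, and the final batch has a success at $3$ and a failure at $4$. The search returns the mutant at distance $2$, while the one at distance $3$ lies strictly between, so the literal output can violate the lemma. If instead the maximum ranged over all of $Mut(t)$, it would return something equidistant with $FF(t)$, making the lemma vacuous.

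Likewise, after expansion the tested sequence is not one sorted enumeration, since batches are tested one after another. The right bookkeeping is to apply your index argument to the merged sorted enumeration of the expanded set, taking $LS(t)$ to be the predecessor of $FF(t)$ there. That predecessor is determined by the recorded outcomes, because every mutant in the earlier batches passes.
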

    
Let $\left<\alpha^*, \alpha^o \right>$ and $\left<\beta^*, \beta^o \right>$ be the pairs of last success and first failure points found in $Mut(t)$ and $Mut'(t)$, respectively. Then we have that the following theorem. 

\begin{theorem}\label{thm:Expansion2}
Assume that both $Mut(t)$ and $Mut'(t)$ are valid sets of the mutants of $t$. Then $Mut(t) \subseteq Mut'(t)$ implies that 
\begin{enumerate}
\item $\beta^o \preccurlyeq_t \alpha^o$. 
\item $\alpha^* \preccurlyeq_t \beta^*$, if $\|\alpha^*, t\| \leq \nabla_{D^s}(t,M)$.  \qed
\end{enumerate}
\end{theorem}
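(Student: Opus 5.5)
Both parts can be proved by comparing what the test process finds in $Mut(t)$ and in $Mut'(t)$, using Lemmas \ref{lmm:lmm1}--\ref{lmm:lmm3} and the relation between safe points and failures. Throughout, we assume that a first failure exists in both sets. For $Mut(t)$ this is the case in which $\alpha^o$ is defined, and then $Mut'(t)$ also contains a failing point, because $Mut(t)\subseteq Mut'(t)$.

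\emph{Part 1.} The mutant $\alpha^o=FF(t)$ found in $Mut(t)$ satisfies $Fail^M_t(\alpha^o)$, since the process stops exactly at a failing mutant. Because $Mut(t)\subseteq Mut'(t)$, we also have $\alpha^o\in Mut'(t)$. Applying Lemma \ref{lmm:lmm1} to the sorted set $Mut'(t)$, whose first failure is $\beta^o$, gives $\beta^o \preccurlyeq_t \alpha^o$. This part is therefore a direct consequence of Lemma \ref{lmm:lmm1}.

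\emph{Part 2.} Assume $\|\alpha^*,t\|\le \nabla_{D^s}(t,M)$. The plan is to show that no point of $Mut'(t)$ at distance at most $\|\alpha^*,t\|$ fails, except possibly points tied with $\alpha^*$, which I treat separately below.

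First, since $Mut(t)$ is valid we have $\alpha^*\in D^s$. If $\|\alpha^*,t\|<\nabla_{D^s}(t,M)$, Lemma \ref{lmm:lmmC} gives $Safe^M_t(\alpha^*)$ directly. In the boundary case of equality, I would instead use the definition of $\nabla$ as a maximum over safe points together with the definition of $Safe$. These show that every $x\in D^s$ with $\|x,t\|<\nabla_{D^s}(t,M)$ does not fail.

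Hence, by validity of $Mut'(t)$, every $x\in Mut'(t)$ with $x\prec_t\alpha^*$ does not fail. So the first failure $\beta^o$ in $Mut'(t)$ satisfies $\alpha^*\preccurlyeq_t\beta^o$. By Lemma \ref{lmm:lmm3} applied to $Mut'(t)$, there is no mutant strictly between $\beta^*$ and $\beta^o$.

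Now $\alpha^*\in Mut'(t)$, and it is not a failure whenever $\alpha^*\prec_t\beta^o$. Since mutants are tested in ascending order, the last success $\beta^*$ is the last tested mutant before $\beta^o$. Therefore $\beta^*$ is at least as far from $t$ as any non-failing mutant that precedes $\beta^o$, and in particular $\alpha^*\preccurlyeq_t\beta^*$. Equivalently: if $\beta^*\prec_t\alpha^*\prec_t\beta^o$, this would contradict Lemma \ref{lmm:lmm3}.

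The main obstacle is the tie case $\alpha^*\approx_t\beta^o$, where random ordering of equal-distance mutants matters. I would handle it by noting that $\alpha^*\approx_t\beta^o$ forces $\|\beta^o,t\|=\|\alpha^*,t\|\le\nabla_{D^s}(t,M)$. I would then invoke Theorem \ref{thm:NablaDelta} and Lemma \ref{lmm:lmmF}, which give $\nabla\le\Delta\le\|\beta^o,t\|$, to conclude that all these distances are equal. Since Lemma \ref{lmm:llm2} gives $\beta^*\preccurlyeq_t\beta^o$, I would then need $\beta^*$ to sit at that same distance. Where the ordering does not guarantee this, the plan is to read the BinarySearch step of Algorithm \ref{alg:ScenarioDomainAnalysis} as selecting the maximal-distance non-failing mutant with $d_x\le\|\beta^o,t\|$, so that $\alpha^*$ itself is a candidate and $\alpha^*\preccurlyeq_t\beta^*$ follows.
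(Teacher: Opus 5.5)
Your Part~1 is the paper's proof. Your Part~2 in the case $\alpha^*\prec_t\beta^o$ agrees with it in substance. Lemma~\ref{lmm:lmmC}, applied to every $x\in D^s$ with $\|x,t\|<\|\alpha^*,t\|\le\nabla_{D^s}(t,M)$, together with validity of $Mut'(t)$, shows that no mutant of $Mut'(t)$ strictly closer than $\alpha^*$ fails. This way there is no need to split off the boundary case or to establish $Safe^M_t(\alpha^*)$. Then $\alpha^*$ is tested before $\beta^o$, and $\alpha^*\preccurlyeq_t\beta^*$ follows.

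The tie case $\alpha^*\approx_t\beta^o$ that you flagged is a genuine defect, but of the statement rather than of your argument. Step~3 defines $LS(t)$ as the immediate predecessor of $FF(t)$ under random tie-breaking, and with that definition statement~(2) can fail.

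Take $D^s=\{t,z,a,b,c\}$ with $\|z,t\|=0.5$, $\|a,t\|=\|b,t\|=1$ and $\|c,t\|=2$, where only $b$ and $c$ fail. Let $Mut(t)=\{z,a,c\}\subseteq Mut'(t)=\{z,a,b,c\}$. Then $\alpha^*=a$ and $\nabla_{D^s}(t,M)=1$, so the hypothesis holds. Yet if $b$ is sorted before $a$, then $\beta^o=b$ and $\beta^*=z$, so $\alpha^*\preccurlyeq_t\beta^*$ is false.

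The paper's proof hides exactly this in its claim that $\beta^*$ is \emph{the largest} $\tau$ with $\neg Fail^M_t(\tau)$ and no failing point of $Mut'(t)$ strictly closer than $\tau$. That set consists precisely of the non-failing mutants at distance at most $\|\beta^o,t\|$. The claim is guaranteed only if ties are broken so that non-failing mutants are tested before equidistant failing ones.

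So your reading of $\beta^*$ is the same convention the paper uses tacitly, and under it your proof goes through. However, it has to be stated as an explicit hypothesis, either a tie-breaking rule or a redefinition of $LS(t)$. It should not be extracted from the BinarySearch line of Algorithm~\ref{alg:ScenarioDomainAnalysis}. That line searches only $ListMD'$, the mutants of earlier all-passing rounds, and it is not the notion of last success the theorem refers to.

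Once the convention is adopted, the detour through Theorem~\ref{thm:NablaDelta} and Lemma~\ref{lmm:lmmF} is unnecessary. $\alpha^*$ is a non-failing element of $Mut'(t)$ with $\|\alpha^*,t\|\le\|\beta^o,t\|$, so it is directly a candidate for $\beta^*$.
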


Note that the condition $\|\alpha^*, t\| \leq \nabla_{D^s}(t,M)$ of Theorem \ref{thm:Expansion2} Statement (2) normally holds when the set $Mut(t)$ is large enough. 

Theorem \ref{thm:Expansion2} proves that $LS(t)$ and $FF(t)$ moves monotonically with the expansion of the mutant set. The difference between their distances to the seed decreases with the increase of the set of mutants. Later, we will see (in Theorem \ref{thm:Ro} and \ref{thm:R*}) that they will eventually achieve accurately the value of $\nabla_{D^s}(t,M)$ and $\Delta_{D^s}(t,M)$, respectively. Therefore, the larger the set of mutants, the better accuracy of using $LS(t)$ and $FF(t)$ as approximations of $\nabla_{D^s}(t,M)$ and $\Delta_{D^s}(t,M)$. 

Here, we assume that the size of the mutant set $Mut(t)$ can be controlled, for example, by a parameter $n$ of the mutant generation function $GenMutants_n(t)$ so that the larger $n$ will results in a larger set of mutants. See Line 5, 27- 28 of Algorithm \ref{alg:ScenarioDomainAnalysis}. 

However, the larger the set of mutants the larger the cost of testing. This raises two questions in the practical application of the strategy: (a) how to determine the optimal size of the initial mutant set $Mut(t)$, and (b) how to determine the size of expansion when it is required. These problems are called the \emph{expansion strategy} problems. 

Generally speaking, according to Theorem \ref{thm:Expansion2}, the accuracy of using the last success and first failure points as approximations of $\nabla_{D^s}(t,M)$ and $\Delta_{D^s}(t,M)$ can be measured by the value $\|t, FF(t)\| - \|t, LS(t)\|$. It can also be used to determine if the set of mutants should be expanded. 

Moreover, there are two situations in which expansion could be considered. 

First, it is possible that the first failure mutant may not be found after testing all test cases in $Mut(t)$. This means that the coverage size  of $Mut(t)$ is not big enough. In that case, additional test cases should be generated to cover a larger zone surrounding $t$. 

Second, it is also possible the ML model $M$ fails on the first mutant in $Mut(x)$. Thus, there is no last successful mutant. This may also be caused by inadequate coverage of the set of mutants. In such a case, $\|t,FF(t)\|$ could be used to determine if the set of mutants should be expanded because by Theorem \ref{thm:Expansion2}, the true values of $\nabla_{D^s}(t,M)$ and $\Delta_{D^s}(t,M)$ will be less than or equal to $\|t,FF(t)\|$. 

When no first failure was found, the following theorems about domain expansion enable us to process domain expansion incrementally. Assume that  
\begin{equation}
\forall x \in Mut(t). \left(\neg Fail^M_t(x)\right), \label{eqn:thm3Proof2}
\end{equation}
and the set $Mut(t)$ of mutants of $t$ is expanded to a larger set $Mut'(t)$, i.e. 
\begin{equation}
Mut(t) \subset Mut'(t). \label{eqn:thm3Proof1} 
\end{equation}

\begin{theorem}\label{thm:Expansion}
Let $\left<\alpha^*, \alpha^o\right>$, and $\left<\beta^*,\beta^o\right>$ be the pairs of last success and first failure points found in the set $\left(Mut'(t)-Mut(t)\right)$ and $Mut'(t)$, respectively. Then, we have that
\begin{enumerate}
\item $\alpha^o \approx_t \beta^o$.
\item If $\alpha^* \approx_t \alpha^o$, we have that $\beta^* \preccurlyeq_t \alpha^*$ and there is no $x \in Mut'(t)$ such that $\beta^* \prec_t x \prec_t \alpha^*$. 
\item If $\alpha^* \prec_t \alpha^o$, we have that $\alpha^* \preccurlyeq_t \beta^*$. Moreover, 
\[\forall x \in Mut'(t).\left(\alpha^* \prec_t x \prec_t \beta^* \Rightarrow x \in Mut(t)\right). \] \qed
\end{enumerate} 
\end{theorem}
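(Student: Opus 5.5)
We work directly from the definitions of first failure and last success as produced by Step 3. Recall that in any sorted mutant set the first failure is a failing element with minimal distance to $t$ (Lemma \ref{lmm:lmm1}). The last success is the element immediately preceding it in the sorted order. By Lemma \ref{lmm:llm2} and Lemma \ref{lmm:lmm3}, the last success has distance at most that of the first failure, and no element of the set lies strictly between them. We also fix the convention implicit in the algorithm: if the first failure is the first element of the list, the last success is the seed $t$ itself, at distance $0$. The whole proof rests on assumption (\ref{eqn:thm3Proof2}), that no element of $Mut(t)$ fails, together with (\ref{eqn:thm3Proof1}).

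\emph{Statement 1.} Since every failing element of $Mut'(t)$ lies in $Mut'(t)-Mut(t)$, the failing elements of the two sets coincide. The minimal distance of a failing element is therefore the same in both sets. Concretely, Lemma \ref{lmm:lmm1} applied in $Mut'(t)$ gives $\beta^o \preccurlyeq_t \alpha^o$. Applying it in $Mut'(t)-Mut(t)$ to $\beta^o$, which fails and so belongs to that set, gives $\alpha^o \preccurlyeq_t \beta^o$. Hence $\alpha^o \approx_t \beta^o$.

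\emph{Statements 2 and 3.} We split into cases on whether $\alpha^*$ is tied in distance with $\alpha^o$.

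If $\alpha^* \approx_t \alpha^o$, then Lemma \ref{lmm:llm2} in $Mut'(t)$ together with Statement 1 gives $\beta^* \preccurlyeq_t \beta^o \approx_t \alpha^*$. Suppose some $x\in Mut'(t)$ satisfied $\beta^* \prec_t x \prec_t \alpha^*$. Then $\beta^* \prec_t x \prec_t \beta^o$, which contradicts Lemma \ref{lmm:lmm3} for $Mut'(t)$.

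If $\alpha^* \prec_t \alpha^o$, we first show $\alpha^* \preccurlyeq_t \beta^*$. Suppose instead $\beta^* \prec_t \alpha^*$. Since $\alpha^*\in Mut'(t)$ (or $\alpha^*=t$ at distance $0$, in which case the inequality is trivial), we get $\beta^* \prec_t \alpha^* \prec_t \alpha^o \approx_t \beta^o$, so $\alpha^*$ lies strictly between $\beta^*$ and $\beta^o$ in $Mut'(t)$. This contradicts Lemma \ref{lmm:lmm3}.

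For the ``moreover'' part, take $x\in Mut'(t)$ with $\alpha^* \prec_t x \prec_t \beta^*$. Then $x \prec_t \beta^* \preccurlyeq_t \beta^o \approx_t \alpha^o$, so $\alpha^* \prec_t x \prec_t \alpha^o$. If $x$ were in $Mut'(t)-Mut(t)$, this would contradict Lemma \ref{lmm:lmm3} applied to that set. Hence $x\in Mut(t)$.

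The main obstacle is bookkeeping rather than depth. Lemmas \ref{lmm:lmm1}--\ref{lmm:lmm3} must be applied to the right set each time, and ties must be handled: the random ordering of equidistant mutants means $\beta^*$ may be an element at the same distance as $\beta^o$. We must also treat the degenerate case where the last success is the seed $t$. Both are handled by working purely with the relations $\prec_t,\preccurlyeq_t,\approx_t$ rather than with positions in the sorted list.
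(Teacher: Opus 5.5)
Your proof is correct and follows essentially the same route as the paper's: Lemma \ref{lmm:lmm1} in both directions for Statement 1, Lemmas \ref{lmm:llm2} and \ref{lmm:lmm3} on $Mut'(t)$ for Statement 2, and for Statement 3 a contradiction with Lemma \ref{lmm:lmm3} on $Mut'(t)$ followed by Lemma \ref{lmm:lmm3} on $Mut'(t)-Mut(t)$. Your version is slightly tidier. It avoids the paper's case split on whether $\beta^*\approx_t\beta^o$ by using $\beta^*\preccurlyeq_t\beta^o$ throughout, and it explicitly handles the degenerate case where the last success is the seed $t$, which the paper's proof does not address because it takes $\alpha^*\in Mut'(t)-Mut(t)$.
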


Note that, for statement (3) of Theorem \ref{thm:Expansion}, when $\|\alpha^*, t\| \geq Max\{\|x, t\|~|~ x \in Mut(t)\}$, there is no $x$ in $Mut(t)$ that satisfy the condition $\alpha^* \prec_t x \prec_t \beta^*$. 

Statement (1) of Theorem \ref{thm:Expansion} implies that, if the sub-domain is expanded from $Mut(t)$ to $Mut'(t)$ when no first failure is found on a set $Mut(t)$ of mutants, it is unnecessary to re-run the testing process on the whole expanded set $Mut'(t)$ of mutants. Instead, we can just run the testing process on the set $Mut'(t) - Mut(t)$ new mutants to find the first failure $\alpha^o$ and the last success $\alpha^*$ in the set $Mut'(t) - Mut(t)$. Using $\alpha^*$ and $\alpha^o$, we can work out the first failure $\beta^o$ and last success $\beta^*$ of $Mut'(t)$ as follows. 
\begin{eqnarray}
&& \beta^o = \alpha^o \label{eqn:corollary1}\\
&& \beta^* = \left\{ \begin{array}{ll}
\alpha^* & \textrm{if $\alpha^* \approx_t \alpha^o$} \\
Max\{x \in Mut(t) | x \preccurlyeq_t \alpha^o\} & \textrm{if $\alpha^* \prec_t \alpha^o$}
\end{array} \right. \label{eqn:corollary2}
\end{eqnarray}

The following Corollary of Theorem \ref{thm:Expansion} formally proves that the calculations by equations (\ref{eqn:corollary1}) and (\ref{eqn:corollary2}) are correct. 

\noindent\textbf{Corollary.}

Assume that $\forall x \in Mut(t).(\neg Fail^M_t(x))$, $Mut(t) \subset Mut'(t)$, and $\left<\alpha^*, \alpha^o\right>$ be the last success and first failure mutants found in testing on $Mut'(t) - Mut(t)$. 
There is a way to order the elements $\{x_1, \cdots, x_n\}$ in $Mut'(t)$ in the ascending order according to their distances to $t$ (i.e. $x_1 \preccurlyeq_t \cdots \preccurlyeq_t x_n$) such that the $FF(t) = \beta^o$ and the $LS(t) = \beta^*$, where $\beta^o$ and $\beta^*$ are defined by equations equations (\ref{eqn:corollary1}) and (\ref{eqn:corollary2}), respectively. \qed

Therefore, when a domain expansion happens, the testing process can be performed incrementally, which is much more efficient. The implementation of incremental processing can be found in Algorithm \ref{alg:ScenarioDomainAnalysis} in Lines 25 - 35. 

From Theorem \ref{thm:Expansion}, one might think that the smaller initial set of mutants the better for the efficiency of the algorithm. However, according to Theorem \ref{thm:Expansion2}, if the coverage size $\delta$ of the initial set of mutants is too small, even if you found the first failure point, it may result in an optimistic estimation of robustness. 

\begin{list}{Step 4.}
    \item \emph{Evaluating robustness.}
\end{list}   

Finally, the robustness of the ML model is calculated using the following formulas; see Line 37 - 43 of Algorithm \ref{alg:ScenarioDomainAnalysis}. 

\begin{equation}
R^o= \frac{1}{|T|} \sum_{t \in T}{\|t, FF(t) \|}  
\label{equ:RbyFF}
\end{equation}
\begin{equation}
R^*= \frac{1}{|T|} \sum_{t \in T}{\|t, LS(t) \|} 
\label{equ:RStarbyLS}
\end{equation}

Theorem \ref{thm:Ro} and \ref{thm:R*} below prove that, under the condition that the coverage size of the mutant set is large enough, $R^o$ and $R^*$ are accurate statistical estimations of the metrics $\rho^{o}$ and $\rho^*$, respectively. 

\begin{theorem}\label{thm:Ro}
If the set $Mut(t)$ is valid and complete with a coverage size $\delta_t \geq \|t,FF(t)\|$, then, we have that $\|t,FF(t)\| = \Delta_{D^s}(t,M)$. \qed
\end{theorem}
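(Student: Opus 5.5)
We prove the equality by establishing the two inequalities $\Delta_{D^s}(t,M) \leq \|t,FF(t)\|$ and $\|t,FF(t)\| \leq \Delta_{D^s}(t,M)$ separately.

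For the first inequality we use validity. By the definition of the testing process, $FF(t)$ is a mutant on which $M$ fails, i.e.\ $Fail^M_t(FF(t))$ holds. Validity (Definition \ref{def:validity}) gives $Mut(t) \subset D^s$, so $FF(t) \in D^s$. Hence $FF(t)$ belongs to the set over which the minimum in Definition \ref{def:MinimalDistanceToFailure} is taken. Equivalently, by Lemma \ref{lmm:lmmB}, $\|FF(t),t\| \geq \Delta_{D^s}(t,M)$. Symmetry of the distance metric is assumed throughout, so $\|t,x\|=\|x,t\|$.

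For the reverse inequality we use completeness and Lemma \ref{lmm:lmm1}. Let $x^\dagger \in D^s$ be a minimal deviation, so that $Fail^M_t(x^\dagger)$ holds and $\|x^\dagger,t\| = \Delta_{D^s}(t,M)$. This requires the minimum to be attained. It is, because the set of distances is nonempty (it contains $\|FF(t),t\|$), and we may restrict attention to failing points in the ball of radius $\|t,FF(t)\|$. By completeness, all of these lie in the finite set $Mut(t)$, so the minimum over them exists.

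Since $\|x^\dagger,t\| = \Delta_{D^s}(t,M) \leq \|t,FF(t)\| \leq \delta_t$ by the first step and the hypothesis, completeness (Definition \ref{def:completeness}) gives $x^\dagger \in Mut(t)$. Lemma \ref{lmm:lmm1} then yields $FF(t) \preccurlyeq_t x^\dagger$, i.e.\ $\|t,FF(t)\| \leq \Delta_{D^s}(t,M)$. Combining the two inequalities gives the theorem.

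The main obstacle is the attainment of the minimum in Definition \ref{def:MinimalDistanceToFailure}. Should the set of failing points within the ball not be finite, I would instead argue by contradiction. Suppose $\Delta_{D^s}(t,M) < \|t,FF(t)\|$. Then by the definition of the infimum there is a failing $x \in D^s$ with $\|x,t\| < \|t,FF(t)\| \leq \delta_t$. Completeness puts this $x$ in $Mut(t)$, and Lemma \ref{lmm:lmm1} gives $\|t,FF(t)\| \leq \|x,t\|$, which is a contradiction. This version avoids needing the minimum to be attained.
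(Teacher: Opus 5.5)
Your proposal is correct and takes essentially the paper's route. Validity together with the definition of $\Delta_{D^s}(t,M)$ gives $\|t,FF(t)\| \geq \Delta_{D^s}(t,M)$; for the reverse inequality, completeness with $\delta_t \geq \|t,FF(t)\|$ combined with Lemma \ref{lmm:lmm1} shows that no point of $D^s$ strictly closer to $t$ than $FF(t)$ fails. Your fallback contradiction argument is exactly the paper's reasoning in contrapositive form, and like it, it needs neither attainment of the minimum nor finiteness of $Mut(t)$.
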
 

Similarly, we have the following theorem about $LS(t)$. 

\begin{theorem}\label{thm:R*}
If the set $Mut(t)$ is valid and complete with a coverage size $\delta \geq \|t,FF(t)\|$, then, we have that $LS(t) \preccurlyeq_t \nabla_{D^s}(t,M)$, and there is no $x \in D^s$ such that $LS(t) \prec_t x \prec_t \nabla_{D^s}(t,M)$. \qed
\end{theorem}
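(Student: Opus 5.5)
The plan is to pin down $\nabla_{D^s}(t,M)$ exactly in terms of $LS(t)$ and $FF(t)$, using completeness together with Theorem \ref{thm:Ro}. Here $LS(t)\preccurlyeq_t \nabla_{D^s}(t,M)$ is read as $\|LS(t),t\|\le\nabla_{D^s}(t,M)$.

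The first step is to show that $LS(t)$ is safe in the sense of Definition \ref{def:Safe}. By construction $LS(t)$ does not fail. Take any $x'\in D^s$ with $\|x',t\|<\|LS(t),t\|$. By Lemma \ref{lmm:llm2}, $\|LS(t),t\|\le\|t,FF(t)\|\le\delta_t$, so completeness puts $x'$ in $Mut(t)$. If $x'$ failed, Lemma \ref{lmm:lmm1} would give $FF(t)\preccurlyeq_t x' \prec_t LS(t)$, contradicting Lemma \ref{lmm:llm2}. Hence no such $x'$ fails, $Safe^M_t(LS(t))$ holds, and Lemma \ref{lmm:lmmD} yields $\|LS(t),t\|\le\nabla_{D^s}(t,M)$. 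If $LS(t)$ is the seed itself, the same argument applies, or the bound is trivial since the distance is $0$.

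For the second claim, suppose $x\in D^s$ satisfies $\|LS(t),t\|<\|x,t\|<\nabla_{D^s}(t,M)$. Two facts bound $\|x,t\|$ from above:
\begin{itemize}
\item by Lemma \ref{lmm:lmmF}, $\nabla_{D^s}(t,M)\le\Delta_{D^s}(t,M)$;
\item by Theorem \ref{thm:Ro}, $\Delta_{D^s}(t,M)=\|t,FF(t)\|$.
\end{itemize}
Therefore $\|x,t\|<\|t,FF(t)\|\le\delta_t$, so completeness gives $x\in Mut(t)$. We then have an element of $Mut(t)$ with $LS(t)\prec_t x\prec_t FF(t)$, which Lemma \ref{lmm:lmm3} forbids. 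This proves the non-existence claim.

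The main obstacle is the definitional subtlety of $\nabla$ as a maximum: it may not be attained, and the ``$\max$ over $D^s_t$'' needs care. I would therefore rely on Lemma \ref{lmm:lmmD} and Lemma \ref{lmm:lmmF}, taken as given, rather than re-deriving them. The other point to check is that the coverage size $\delta_t$ in the hypothesis is the one used in completeness (the statement writes $\delta$), so that every point within distance $\|t,FF(t)\|$ lies in $Mut(t)$.
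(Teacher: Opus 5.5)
Your proof is correct. The first half, showing $Safe^M_t(LS(t))$ and applying Lemma \ref{lmm:lmmD}, matches the paper's. You are in fact more explicit about why points closer than $LS(t)$ lie in $Mut(t)$: you derive $\|t,LS(t)\|\le\delta$ from Lemma \ref{lmm:llm2}, whereas the paper simply asserts it. You should also cite validity to get $LS(t)\in D^s$ before speaking of $Safe^M_t(LS(t))$.

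The second half takes a different route. The paper splits on whether $\|t,LS(t)\|=\nabla_{D^s}(t,M)$. In the strict case it picks a safe witness $\alpha$ beyond $LS(t)$. It uses Lemma \ref{lmm:lmm3} and completeness to get $FF(t)\preccurlyeq_t\alpha$, and the failure of $FF(t)$ to exclude $FF(t)\prec_t\alpha$. This gives $\nabla_{D^s}(t,M)=\|t,FF(t)\|$, and the paper then applies Lemma \ref{lmm:lmm3} again.

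You instead get the uniform bound $\nabla_{D^s}(t,M)\le\Delta_{D^s}(t,M)\le\|t,FF(t)\|$ from Lemma \ref{lmm:lmmF} and Theorem \ref{thm:Ro}. Only the easy inequality $\Delta_{D^s}(t,M)\le\|t,FF(t)\|$ is needed, which is just validity plus the definition of $\Delta$. With that bound there is no case split and no witness, and one application of completeness and Lemma \ref{lmm:lmm3} finishes.

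What your version buys is brevity and modularity. The cost is that it rests on Lemma \ref{lmm:lmmF}, whose proof the paper omits; that lemma is essentially the paper's inline step showing $FF(t)\prec_t\alpha$ is impossible. The paper's case analysis, in turn, makes explicit the fact used in its remark after the theorem: $\|t,LS(t)\|$ can fall short of $\nabla_{D^s}(t,M)$ only if some safe point lies at exactly the distance of $FF(t)$, in which case $\nabla_{D^s}(t,M)=\Delta_{D^s}(t,M)$. That fact also follows from your bound in one line, using the absence of $D^s$-points strictly between $LS(t)$ and $FF(t)$.
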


Theorem \ref{thm:R*} means that $\|t,LS(t)\|$ is either accurate or the best conservative estimation of $\nabla_{D^s}(t,M)$. Moreover, the proof of the theorem shows that $\|t,LS(t)\|$ is not accurate (i.e. $\|t,LS(t)\| < \nabla_{D^s}(t,M)$) only if there is $\alpha \in D^s$ such that $\alpha \approx_t FF(t) \wedge  Safe^M_t(\alpha)$. In this case, $\nabla_{D^s}(t,M)=\Delta_{D^s}(t,M)$.

To ensure the accuracy of the results, by Theorem \ref{thm:Ro} and \ref{thm:R*}, we require that the set $Mut(t)$ is big enough to ensure that its coverage size $\delta_t$ is greater than the minimal distance to failure. 

It is worth noting that in practice, it may be difficult, even impossible, to find out accurately the coverage size $\delta_t$ of a set of mutants, because the scenario sub-domain $D^s$ is usually not formally defined. A practical solution is to set the parameter $n$ of mutant generation function $GenMutants_n(t)$ large enough so that $\delta_t > \|t,FF(t)\|$. 
    
It is worth noting two differences that this method has from the methods used to measure adversarial robustness. First, all mutants created are in the operational scenario concerned, whereas with adversarial robustness they can represent any scenario, or even be meaningless. Secondly, the mutants can be generated independently of the model under test, whereas almost all adversarial robustness algorithms depend on the model under test; see Section \ref{sec:AdversarialRobustness}. Thus, our method can provide a fair comparison of the ML models according to their operational robustness. 

In the following subsections, we give the technical details of the proposed method for the problem of measuring the robustness of program code generation. 

\subsection{Generating Sample Sets of the Scenario Domain}

The input data space for code generation consists of coding tasks described in natural language. The mutants are therefore paraphrases of the original descriptions of the coding tasks. 

To generate a large number of paraphrases of different distances from an original description of the coding task, we employ word embedding techniques, which provide a vector representation of the semantics and usages of words. These techniques have been intensively studied and widely used in NLP; two of the most popular are Word2Vec \citep{mikolov2013} and GloVe \citep{pennington2014}. 

\emph{Word2Vec}, developed by \citet{mikolov2013}, is a ML model that provides a high-dimensional vector for each word in such a way that words with similar meanings have vector representations close to each other in the vector metric space. It also allows for vector arithmetic operations applied to the vector representations of words to reveal their meanings. For example, $\left<king\right> - \left<man\right> + \left<woman\right> = \left<queen\right>$, where $\left<word\right>$ is the vector representation of the word.

\emph{GloVe}, introduced by \citet{pennington2014}, is a model obtained by an unsupervised ML algorithm that generates word embeddings by statistics of word co-occurrences in a text corpus. GloVe constructs a global word-word co-occurrence matrix, capturing the frequency with which words appear together in a given context. GloVe produces dense vector representations of words by factoring the matrix, ensuring that the resulting embeddings reflect both local and global statistical information about the words. 

We utilize GloVe, which is pre-trained, for creating paraphrases rather than Word2Vec because Word2Vec primarily focuses on local context but GloVe captures the relationships between words based on their global co-occurrence in the text. This allows it to facilitate the generation of paraphrases that are both contextually relevant and semantically meaningful. 

The paraphrase generation algorithm is based on the algorithm developed by \citet{zhu2020} to generate high order mutants via combinations of data mutation operators. Let's first introduce some notions and notations. 
		
Let $x$ be a sentence in natural language and $w$ be a word in $x$. The operation that replaces an occurrence of the word $w$ in $x$ by another word $w'$ is called the $ReplaceWord$ operation and denoted by $x[w/w']$. This operator is a \emph{data mutation operator} or \emph{datamorphism} \citep{zhu2020}.   
		
A paraphrase $x'= x[w/w']$ obtained by an application of the $ReplaceWord$ operator is a \emph{first order mutant} of $x$. 
		
Let $y$ be a $k$'th order mutant of $x$ obtained by replacing different word occurrences $w_1, \cdots, w_k$ with words $w'_1, \cdots, w'_k$ in $x$. A paraphrase $y'$ obtained by applying the $ReplaceWord$ operator on $y$, i.e. replacing a word $w_{k+1}$ in $y$ by a word $w'_{k+1}$, is an \emph{$(k+1)$'th order mutant}, if $w_{k+1} \neq w_i$ for all $i=1, \cdots, k$. 
		
The paraphrase generation algorithm consists of the following steps. 

\begin{enumerate}
    \item \emph{Sequentialising text}: The original text of the coding task description $x$ in natural language is converted into a linear sequence $Seq(x)=\left<w_1, \cdots, w_n \right>$ of words. 
    \item \emph{Retrieving neighbourhood}: Let $n>0$ be a given natural number. For each word $w$ in the sequence $Seq(x)$, the GloVe word embedding model is invoked to produce a set of $n$ words nearest to $w$ in the embedding vector space, where the set of words is denoted by $Neighbour_n(w)$. 
    \item \emph{Generating paraphrases}: Let $k>0$ be a given natural number and $k\leq L(x)$, where $L(x)$ is the length of the sequence $Seq(x)$. Our paraphrase generation algorithm first generates all first order mutants of the original coding task description by applying the $ReplaceWord$ operator $x[w/w']$, where $w \in Seq(x) \wedge w' \in Neighbour_n(w)$. It then generates all second order mutants by applying the $ReplaceWord$ operator on the first order mutants, and so on, until all $k$'th order mutants are generated. 
\end{enumerate}

For example, consider the coding task description: \textit{Write a Java program to replace a specified character with another character}. Table \ref{tab:Neighbourhood} gives the nearest 4 words to each of the words in the text of the coding task description, where the word with ranking equals to 1 is the nearest one according to GloVe. Table \ref{tab:Paraphrases} shows a small sample of paraphrases generated from the sentence with different values of parameter $k$ for the number of words replaced and $n$ for the highest ranking of the replacement words. As shown in this example, the semantic similarity between a paragraph and the original sentence cannot be easily determined by the parameters $n$ and $k$. Thus, it is important to order the set of paragraphs by measuring their similarity or distance to the original coding task description. 

\begin{table*}[t]
\caption{Examples of The Neighbourhoods of Words}\label{tab:Neighbourhood}
\scriptsize
\centering
\begin{tabular}{|c|l|l|l|l|l|l|l|l|l|}
\hline
\textbf{Rank ($n$)} &\textit{write} &\textit{a} &\textit{program} &\textit{to} &\textit{replace} &\textit{specified} &\textit{character} &\textit{with} &\textit{another}\\\hline
1 &writing &another &programs &take &replacing &applicable &characters &and &one\\\hline
2 &read &an &project &would &replacement &parameter &protagonist &while & a\\\hline
3 &publish &one &funding &instead &replaced &specifying &portrayed &made &an\\\hline
4 &notes &same &programme &could &replaces &depending &portrays &both &same\\\hline
\end{tabular} 
\end{table*}

\begin{table*}
\caption{Examples of Paraphrases}\label{tab:Paraphrases}
\centering
\scriptsize
\begin{tabular}{|l|c|c|}
\multicolumn{3}{l}{\textbf{Original}: \textit{Write a Java program to replace a specified character with another character.}} \\\hline
\textbf{Paraphrase} &$k$ &$n$\\\hline
\textit{Writing} a Java program to replace a specified character with another character.&1 &1\\\hline
Write a Java program to replace a \textit{applicable} character with another character. &1 &1 \\\hline
Write a Java program to replace a specified character with \textit{one} character. &1 &1\\\hline
Write a Java program to replace a specified character with \textit{a} character.&1 &2\\\hline
Write a Java program to replace a \textit{parameter} character with another character. &1 &2\\\hline
Write a Java program to replace \textit{one} specified character with another character. &1 &3\\\hline
Write a Java program to replace a specified \textit{protagonist} \textit{while} another character.&2 &2\\\hline
Write a Java program to replace \textit{an} \textit{applicable} character with another character. &2 &2 \\\hline
Write a Java program to replace a \textit{parameters} character with \textit{a} character.&2 &2\\\hline
Write \textit{an} Java program \textit{would} replace a \textit{parameter} character with another character.&3 &2\\\hline
\textit{Notes} a Java program to replace \textit{same} \textit{depending} character with another character. &3 &4\\\hline
\textit{Publish} a Java program to replace \textit{one} specified \textit{portrayed} \textit{made} another character. &4 &3\\
\hline 
\end{tabular} 
\end{table*}

\subsection{Measuring Distances from Mutants to the Seed}

At this step, each paraphrase generated in the previous step is measured for its distance to the original task description using a similarity metric between natural language sentences. 

In linguistics, the similarity between sentences can be measured not only based on the geometric distance, but also based on the lexical overlapping, structure and semantics as well. 

Text similarity metrics can be categorized into the following types. 
\begin{itemize}
\item \emph{Overlap-based metrics} measure similarity by counting common elements such as words or n-grams between texts and measuring the extent of n-gram overlap. Examples in this category include the metrics \emph{BLEU} \citep{papineni2002}, \emph{ROUGE} \citep{lin2004}, \emph{Meteor} \citep{banerjee2005}, and \emph{ChrF} \citep{popovic2015}.
\item \emph{Distance-based metrics}  embed texts in a vector space and quantify the similarity between texts by measuring either the spatial or angular distance between these vector representations in the embedding vector space. A typical example is the $L_2$ norm, which is most commonly used because it measures the \emph{Euclidean} distance and captures both the magnitude and direction of differences between vectors. Another example is the cosine similarity metric, which calculates the \emph{Cosine} of the angle of two vectors that represent the embeddings of the sentences in comparison. 
\item \emph{Edit distance metrics} quantify the number of editing operations required to transform one text into another. These operations typically include insertions, deletions, and substitutions. \emph{Levenshtein} distance is a widely-used edit distance metric, and it is classified into word-level and character-level types. The character-level distance calculates the minimum number of single-character edits needed to convert one text to another, while the word-level distance measures edits at the word level. 
\item \emph{Semantics-based metrics} employ ML models that map a pair of sentences to a similarity score according to the similarity in their meanings. For example, \emph{STSb-RoBERTa-base}\footnote{https://huggingface.co/cross-encoder/stsb-roberta-base} is such a ML model in the cross-encoder architecture RoBERTa \citep{liu2019roberta} fine-tuned on the Semantic Textual Similarity (STS) benchmark dataset to produce semantic similarity scores for pairs of text. In this paper, it is referred to as the \emph{RoBERTa} metric.

\end{itemize}


The metrics used and studied in this paper are: \emph{Euclidean}, \emph{Cosine}, \emph{BLEU}, \emph{ROUGE}, \emph{Meteor}, \emph{ChrF}, \emph{Levenshtein}, and \emph{RoBERTa}. 

\subsection{Testing ML Models and Assess the Correctness of Generated Code}

In this step, the ML model is tested first on the original coding task description $t$, and then on its paraphrases sequentially one by one in ascending order of distance from $t$. For each paraphrase $x$ of $t$, the algorithm queries the LLM under test with the paraphrase $x$ as the input, and checks if the code generated by the model is functionally equivalent to the code generated from the original task description $t$. 

The process terminates when it finds a paraphrase for which the generated code is not functionally equivalent to the original code. This paraphrase is referred to as the first failed paraphrase and denoted by $FF(t)$, and the paraphrase just before the first failed one is called the last successful paraphrase and denoted by $LS(t)$. These are used to calculate the $R^o$ and $R^*$ metrics, respectively, later in Step 4. This process is repeated on all code generation tasks in the test dataset. 

It is worth noting that the functional equivalence between two programs is not computable in general. We take a pragmatic approach by automatically generating test data from both programs and test the programs on the union of the test data; see \citep{ghosh2024} for more details. Formally, the predicate $Fails^M_t(x)$ is defined as follows. 

\[Fails^M_t(x) \Leftrightarrow \exists d \in TS(P) \cup TS(P'). (P(d) \neq P'(d)). \]
Here, $P=M(t)$ and $P'=M(x)$ are programs generated by ML model $M$, and $TS(P)$ and $TS(P')$ are the sets of functional test cases generated for $P$ and $P'$, respectively. 

\subsection{Evaluation of Robustness}

When the testing of the ML model $M$ finishes, we have the data $\{ \left(LS(t),FF(t)\right) ~|~ t \in T \}$. Then, the metrics $R^{o}$ and $R^*$ for model $M$ can be calculated using formula (\ref{equ:RbyFF}) and (\ref{equ:RStarbyLS}), respectively. 

The algorithms for testing and evaluation of the operational robustness of code generation is given in Algorithm \ref{alg:EvalRobustness}. 
 
\begin{algorithm}[ht]
\scriptsize
\caption{Robustness Evaluation of Code Generation}
\label{alg:EvalRobustness}
\begin{algorithmic}[1]
\REQUIRE $ $ \\
         $T$: Test dataset contains a list of coding tasks;\\
         $L_m$: The Large Language Model under test;\\
         $WE$: Word embedding model; \\  
         $\|\cdot,\cdot\|$: Distance metric on texts;\\      
         $n$: Integer; \textcolor{blue}{/* The number of nearest neighbours to used to generate paraphrases */}\\
         $k$: Integer; \textcolor{blue}{/* The number of words to be replaced to generate paraphrases */} \\
         $c_n$: Integer; \textcolor{blue}{/* The domain expansion size for $n$ */}\\
         $c_k$: Integer; \textcolor{blue}{/* The domain expansion size for $k$ */}
\ENSURE  $ $\\
        $\left<R^o, R^*\right>$: Robustness of the LLM $L_m$;
\STATE $ListLS \leftarrow []$; \textcolor{blue}{/* to store the list of $LS(x)$*/}
\STATE $ListFF \leftarrow []$; \textcolor{blue}{/* to store the list of $FF(x)$*/}
\FOR{each $t$ in $T$}
	\STATE $C_t \leftarrow L_m(t)$; \textcolor{blue}{/* Generate code $C_t$ using LLM $L_m$ for $t$ */}
	\STATE $T_t \leftarrow GenTest(C_t)$; \textcolor{blue}{/* Generate test cases for $C_t$ */}
	\STATE $P_t \leftarrow GenParaphrases(t, n, k, WE)$; \textcolor{blue}{/* Generate paraphrases from $t$*/}
	\STATE \texttt{Label1:} 
	\STATE $ListPD \leftarrow []$; \textcolor{blue}{/* The list of paraphrases and their distances to the original text */}
	\FOR{each $p$ in $P_t$}
    		\STATE $ListPD \leftarrow ListPD + \left<p,\|t,p\| \right>$;
	\ENDFOR
	\STATE $Sort(ListPD)$; \textcolor{blue}{/* Sort $ListPD$ according to the distances */}
	\STATE $AllPass \leftarrow$ \textbf{true}; 
	\STATE $LS \leftarrow \left<t,\|t,t\|\right>$;
	\FOR{each $\left<p, d\right>$ in $ListPD$ \textcolor{blue}{/* from shortest distance to the largest */}}
    		\STATE $C_p \leftarrow L_m(p)$ \textcolor{blue}{/* Generate code $C_p$ using LLM $L_m$ for $p$ */}  
   		\STATE $T_p \leftarrow GenerateTest(C_p)$; \textcolor{blue}{/* Generated test cases fro $C_p$*/}
   		\IF{$\forall x \in T_t \cup T_p. C_t(x) = C_p(x)$ \textcolor{blue}{/* $C_t$ and $C_p$ are functional equivalent */}}
   			\STATE $LS \leftarrow \left<p, d\right>$;
   		\ELSE	
        		\STATE $ListLS \leftarrow ListLS+LS$; 
        		\STATE $ListFF \leftarrow ListFF+\left<p, d\right>$;   
        		\STATE $AllPass \leftarrow $  \textbf{false};      
        		\STATE \textbf{break}
    		\ENDIF
	\ENDFOR
	\IF{$AllPass = $ \textbf{true} \textcolor{blue}{/* No FF is found */}}
	\STATE \textcolor{blue}{// Expand the set of paraphrases}; 
		\STATE $n \leftarrow n+c_n$; $k \leftarrow min\{k+c_k, |t|\}$; \textcolor{blue}{/* Increase the values of $n$ and $k$ */} 
		\STATE $P_t \leftarrow GenParaphrases(t, n, k, WE) - P_t$; \textcolor{blue}{/* Generate new paraphrases */}
    		\STATE \textbf{GoTo} \texttt{Label1}; \textcolor{blue}{/* To process the expanded set of paraphrases */}
	\ENDIF
\ENDFOR

\STATE $LSSum \leftarrow 0$; $FFSum \leftarrow 0$;
\FOR{each $\left<p,d\right>$ in $ListLS$}
    \STATE $LSSum \leftarrow LSSum + d$;
\ENDFOR
\FOR{each $\left<p,d\right>$ in $ListFF$} 
	\STATE $FFSum \leftarrow FFSum + d$;
\ENDFOR
\STATE $R^{o} \leftarrow FFSum / Length(T)$; 
\STATE $R^{*} \leftarrow LSSum / Length(T)$;
\RETURN $\left<R^{o}, R^{*}\right>$
\end{algorithmic}
\end{algorithm}

\section{Implementation of The Method}\label{sec:Implementation}

This section presents an implementation of our scenario domain analysis method to automate the testing and evaluation of the robustness of code generation models. It is based on the datamorphic testing methodology \citep{zhu2019a} and implemented based on the framework of test systems provided by the Morphy testing tool \citep{zhu2019}. 

\subsection{Datamorphic Test Methodology}

The datamorphic testing methodology proposed by \citet{zhu2019a,zhu2019} views software testing as a systems engineering problem, focusing on the development and implementation of a test system that automates testing processes, manages test resources efficiently, and co-evolves the test system with the software under test. 

In datamorphic testing methodology, a test system consists of two kinds of artefacts: (a) \emph{test entities}, such as test data, objects, the software under test, and test results, etc., and (b) \emph{test morphisms}, which are operations that perform test activities by using, transforming, and producing various types of test entities. 

The tool Morphy \citep{zhu2019} is designed to support datamorphic testing. It is utilised here to perform robustness testing and evaluation and to conduct the experiments. Morphy provides a framework for the implementation of test systems in Java. It consists of a set of classes that defines the general structure of test entities, including test cases and test suites, and also a set of metadata classes, so that a test system is conceptually a class with attributes that represent test entities and methods that represent test morphisms. User-defined test entities and morphisms are annotated with metadata so that they can be recognized and integrated with the automated tools, test strategies, and management functionality provided by Morphy, which recognises the following types of test morphisms. 

\begin{itemize}
\item \emph{Seed maker}: Generates test cases from other types of entities.
\item \emph{Datamorphism}: Transforms test cases.
\item \emph{Metamorphism}: Verifies the correctness of test cases and returns a Boolean value. 
\item \emph{Test set filter}: Adds or removes test cases from a test set.
\item \emph{Test set metric}: Maps a test set to a real value, such as test adequacy.
\item \emph{Test case filter}: Maps a test case to a Boolean value to determine whether it should be retained in the test set.
\item \emph{Test case metric}: Assigns a real value to test cases, such as their complexity. 
\item \emph{Analyser}: Analyses the test set and generates a test report.
\item \emph{Executer}: Runs the program under test using input data from the test case and captures the program's output. 
\end{itemize}

In addition to managing test entities, Morphy offers test automation at the following three levels. 

\begin{itemize}
\item \emph{Action}: Automates each test activity via an invocation of the corresponding test morphism and/or Morphy's predefined test management function.
\item \emph{Strategy}: Combines test morphisms using an algorithm that takes test entities and morphisms as parameters.
\item \emph{Process}: Executes actions and strategies according to an editable test script, which can be obtained by recording the user's interaction with Morphy through its GUI. 
\end{itemize}

\subsection{Structure of The Test System}

The implementation of the proposed method is built on top of the test system developed for the ScenEval benchmark \citep{ghosh2024} to test LLM's capability for code generation. It defines:  (a) test entities like test cases, test suites, and (b) test morphisms. In addition to the test morphisms implemented in \citep{ghosh2024}, a set of new datamorphisms, analysers and executors are implemented. The structure of the test system is shown in Figure \ref{fig:StructurOfTestSystem}.  

\begin{figure*}[h]
\centering
\includegraphics[width=16cm]{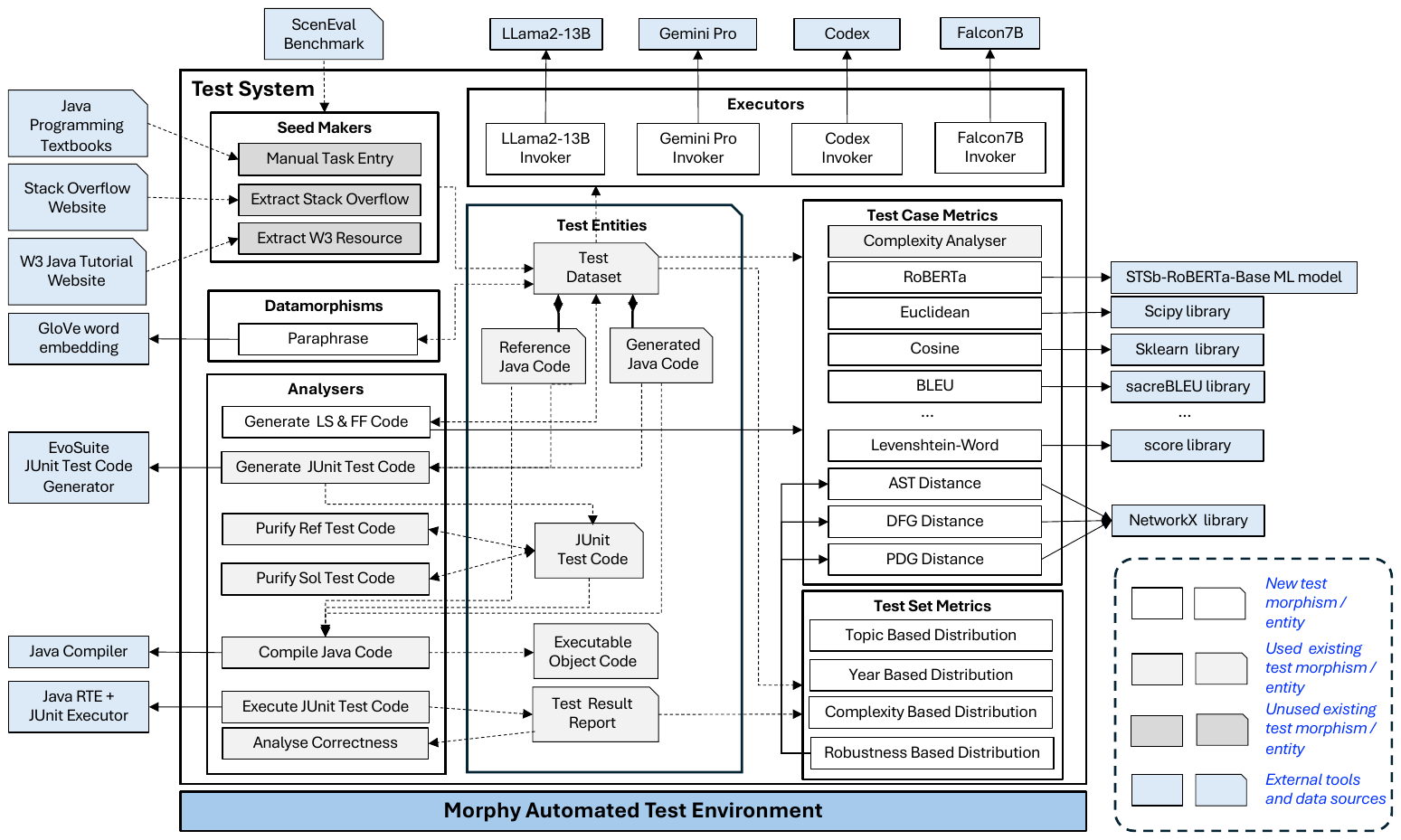} 
\caption{Structure of Test System for Evaluation of Robustness}
\label{fig:StructurOfTestSystem}
\end{figure*}

The following describes the test morphisms that have been implemented. 

\subsubsection{Datamorphisms}

A datamorphism called \emph{Paraphrase} implements the paraphrase algorithms described in Section \ref{sec:ProposedMethod}. It transforms each seed test case into a set of mutants that each contains a paraphrased description of the coding task. The datamorphism has two parameters: (a) $n$, the number of nearest words to be retrieved for each word, and (b) $k$, which is the highest order of mutants to be generated; this controls the maximal number of words to be replaced simultaneously. 

\subsubsection{Test case metrics}
Each similarity metric on texts is implemented as a test morphism that when applied to a mutant, returns the distance between the mutant and its seed. Table \ref{tab:ImplOfSimilarityMetrics} gives information about how each of them are implemented. 

\begin{table}[htbp]
	\caption{Implementation of Similarity Metrics}
	\label{tab:ImplOfSimilarityMetrics}
	\begin{scriptsize}
	\centering
	\begin{tabular}{|p{1.5cm}|p{6.5cm}|}
		\hline 
		\textbf{Metric} & \textbf{Implementation} \\ \hline
		RoBERTa & Query the STSb-RoBERTa-base ML model installed on local machine, which is downloaded from Hugging Face \\ \hline
		Euclidean & Invoke the \texttt{euclidean} function of the scipy.spatial.distance module \\ \hline
		Cosine & Invoke the \texttt{cosine\_similarity} function of the sklearn.metrics.pairwise module \\ \hline
		BLEU & Invoke the \texttt{corpus\_bleu} function in the sacreBLEU library, with parameter $n$ = 4 \\ \hline
		ROUGE-N & Use the \texttt{RougeScorer} class of the rouge\_score library \\ \hline
		ROUGE-L & Use the \texttt{RougeScorer} class of the rouge\_score library \\ \hline
		Meteor & Invoke the \texttt{meteor\_score} function of the meteor\_score library \\ \hline
		ChrF & Invoke the \texttt{sentence\_chrf} function from the sacrebleu library \\ \hline
		Levenshtein-Char & Invoke the \texttt{distance} function of the Levenshtein library with character as the unit \\ \hline
		Levenshtein-Word & Invoke the \texttt{distance} function of the Levenshtein library with word as the unit \\ \hline
	\end{tabular}
	\end{scriptsize}
\end{table}

\subsubsection{Test Executors}

Four test executors have been implemented, one for each of the LLM models GeminiPro, Codex, Falcon7B and LLama2-13B. Each executor submits a query to the LLM through API invocation, receives the responses from the LLM and then saves the code in the response to a local file. 


\subsubsection{Test Result Analysers}

This type of test morphism analyses the test results on each coding task and its paraphrases. Two types of analysers have been implemented.  

The first is \texttt{Robustness(DM: DistanceMetric)}, which calculates the robustness $R^{p}$ and $R^*$ of the LLMs according to the formulas (\ref{equ:RbyFF}) and (\ref{equ:RStarbyLS}). It takes as its parameter a distance metric and compares the LLMs by plotting the average distances from the seed to the first failed mutant and to the last successful mutant. Here, the test morphism invokes the distance metrics implemented as the test case metrics.

The second group of analysers compare the code generated from the paraphrases of $FF(x)$ and $LS(x)$ with the code generated from the original description. 
They calculate the editing distances between the program source codes using metrics defined on various models of program code structure, including: 


\begin{itemize}
    \item \emph{AST (Abstract Syntax Tree)}: Represents the syntactic structure of the source code in a hierarchical tree format, showing the grammatical structure of the code. The implementation uses the \texttt{javalang} library to create the AST and the \texttt{zss} module to compute the minimum AST edit distance between two programs.
  \item \emph{DFG (Data Flow Graph)}: Represents the flow of data between operations in the program, capturing how data is processed and transformed. The DFGs were generated using the \texttt{Soot} framework, and the graph edit distance was computed using the \texttt{NetworkX} library. 
  \item \emph{PDG (Program Dependency Graph)}: Depicts the dependencies between different parts of a program, including control and data dependencies. As with DFG, the PDGs were generated using the \texttt{Soot} framework, and the graph edit distance was computed using the \texttt{NetworkX} library. 
\end{itemize}

\section{Experiments}\label{sec:Experiment}

This section reports the experiments with the proposed method using the implemented test system in attempt to answer the following research questions. 

\begin{enumerate}
\item \emph{RQ1 (Effectiveness)} Is the proposed method capable of providing meaningful results that distinguish LLMs according to their robustness in code generation?
\item \emph{RQ2 (Efficiency)} Is the proposed method efficient enough for practical use?
\item \emph{RQ3 (Selection of Distance Metric)} Which similarity/distance metric is the best to use for robustness evaluation? 
\end{enumerate}

\subsection{Design of The Experiments}\label{sec:designOfExperiment}


Four LLMs capable of code generation were chosen as the subject in the experiment for their availability and low cost of access as well as their wide uses in practice at the time when the experiments started. The LLMs were \emph{Llama2-13B}, \emph{Codex}, \emph{Gemini Pro 1.0}, and \emph{Falcon}. 

The experiments were conducted with Morphy running on a desktop computer accessing LLM models running on the cloud through API invocations. Specifically, GeminiPro was executed via its API provided by Google. Falcon and Llama2 were deployed on Google Colab with GPU support, and Codex was accessed through GitHub Copilot. Figure \ref{fig:fig4_1} illustrates the configuration of the experimental platform.

\begin{figure}[h]
\centering
\includegraphics[width=7cm]{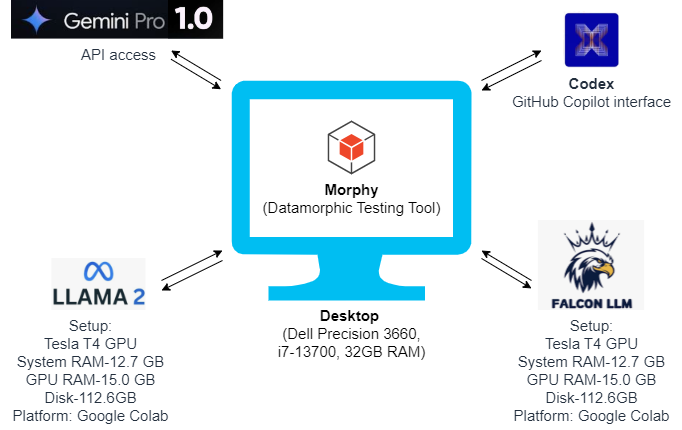}  
\caption{The Experiment Platform}
\label{fig:fig4_1}
\end{figure}


We formed a test dataset $D_1$ by selecting 900 coding tasks at random from the ScenEval benchmark \citep{ghosh2024}. 
Table \ref{tab:StatisticsOfTestDataset} gives the statistical characteristics of the test dataset $D_1$. 

\begin{table}[htbp]
\centering
\caption{Statistical Characteristics of the Test Dataset $D_1$}
\label{tab:StatisticsOfTestDataset}
\begin{scriptsize}
\begin{tabular}{|l|c|c|}
\hline
\textbf{Feature} & \textbf{\#Textbook Tasks} & \textbf{\#Real Tasks}\\ \hline\hline
Number of Topics Covered & 9 &7\\ \hline
Average Num of Tasks in A Topic & 66.67 & 42.85\\ \hline
Max Num of Tasks in A Topic & 156 & 61\\ \hline
Min Num of Tasks in A Topic & 8 & 5\\ \hline\hline
Average Complexity & 3.448 &3.200\\ \hline
Max Complexity & 13 &6\\ \hline
Min Complexity & 1 &1\\ \hline\hline
Average Length of Input & 16.800 &25.344 \\\hline
Max Length of Input & 35 &31 \\ \hline
Min Length of Input & 11 & 18\\ \hline\hline
Total Number of Tasks & 600 &300\\ \hline\hline
\end{tabular}
\end{scriptsize}
\end{table}

The testing was conducted in the process described in Section \ref{sec:ProposedMethod} for each LLM and repeated on each of the similarity metrics used to measure the distance between the paraphrases and the original coding task description. The experiment data and their analysis are reported in the following subsections to answer the research questions. 

\subsection{Effectiveness}

To answer the research question \emph{RQ1}, we first apply the method on the whole test dataset to evaluate and compare the robustness of the subject LLMs to demonstrate that the proposed method can distinguish LLMs on their robustness of code generation. Then, we apply the method to various scenarios of using LLMs in code generation to demonstrate that each LLM can be distinguished on their robustness in different scenarios. We also provide a comprehensible interpretation of the results. 

\subsubsection{Comparing LLMs on Overall Robustness} 

To compare the robustness of the LLMs, we apply the $R^o$ and $R^*$ metrics on the whole dataset of the experiment. The experiment results are given in Table \ref{tab:combined_robustness}. 

\begin{table*}[h]
\centering
\caption{Robustness Measured by $R^o$ and $R^*$ with Different Similarity Metrics on Paraphrases}\label{tab:combined_robustness}
\begin{scriptsize}
\begin{tabular}{|l|c|c|c|c|c|c|c|c|} 
\hline
\multirow{2}{*}{\textbf{Similarity Metric}} 
& \multicolumn{2}{c|}{\textbf{Gemini-Pro}} 
& \multicolumn{2}{c|}{\textbf{Codex}} 
& \multicolumn{2}{c|}{\textbf{Llama2}} 
& \multicolumn{2}{c|}{\textbf{Falcon}} \\ \cline{2-9} 
& $R^o$ & $R^*$ & $R^o$ & $R^*$ & $R^o$ &$R^*$ & $R^o$ & $R^*$ \\ \hline
RoBERTa &1.0365&1.1117&1.0879&1.1432&1.1945&1.2478&1.2046&1.2550\\ \hline
Euclidean &0.5307&0.4991&0.5103&0.4814&0.4021&0.3748&0.4065&0.3781\\ \hline
Cosine Similarity&0.9291&0.9302&0.9301&0.9381&0.9801&0.9811&0.9703&0.9712\\ \hline
BLEU &0.4273&0.4565&0.4164&0.4388&0.5336&0.5628&0.5295&0.5583\\ \hline
ROUGE-N &0.6224&0.6428&0.6354&0.6587&0.7414&0.7662&0.7381&0.7606\\ \hline
ROUGE-L &0.6950&0.7130&0.7066&0.7235&0.8104&0.8277&0.8085&0.8242\\ \hline
METEOR &0.6552&0.6785&0.6692&0.6925&0.7753&0.7995&0.7718&0.7939\\ \hline
ChrF &66.9836&68.8565&68.0947&69.8502&68.6146&70.2630&68.3506&70.0672\\ \hline
Levenshtein-Word &1.1343&1.0233&1.1288&1.0151&1.1260&1.0096&1.1248&1.0551\\ \hline
Levenshtein-Char &11.5151&10.1133&10.5876&9.4900&10.1340&9.1700&10.3320&9.3000\\ \hline
\end{tabular}
\end{scriptsize}
\end{table*}

Figure \ref{fig3}(a) and (b) depict the robustness of the LLMs under two similarity metrics RoBERTa and Euclidean, respectively, where the red lines are the average $R^{o}$ scores calculated from the first failed solutions; the black lines are the average $R^*$ scores calculated from the last successful solutions. 

\begin{figure}[htb]
\centering
\includegraphics[width=8cm]{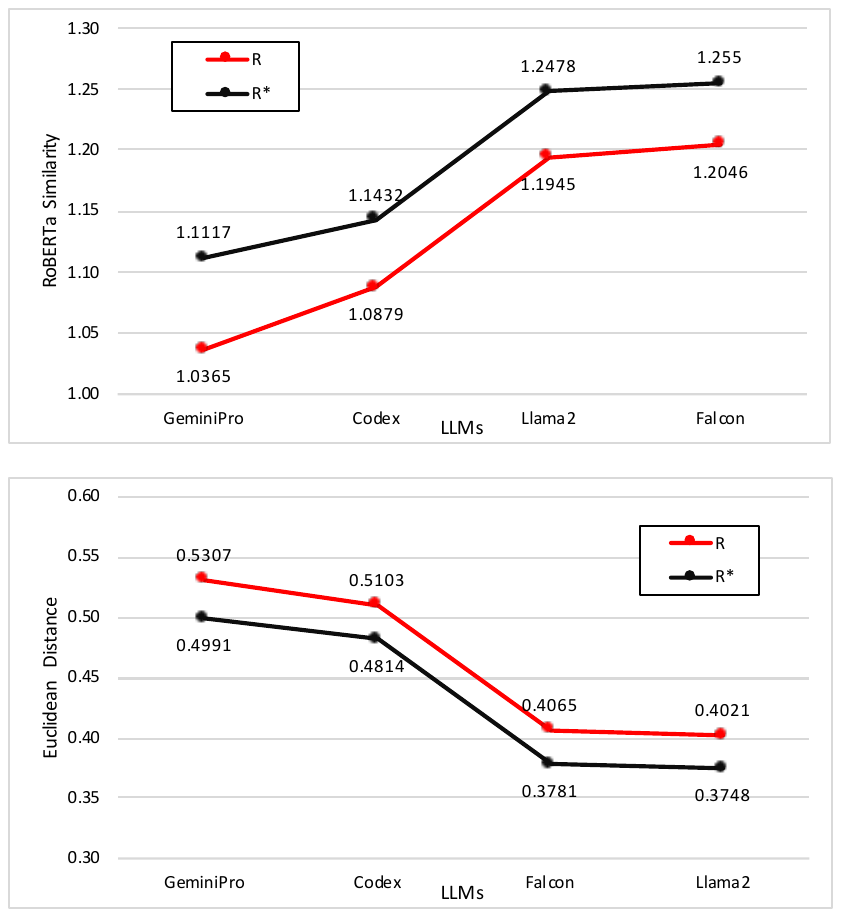} \\ 
\scriptsize{(a) With RoBERTa Similarity} \\
\includegraphics[width=8cm]{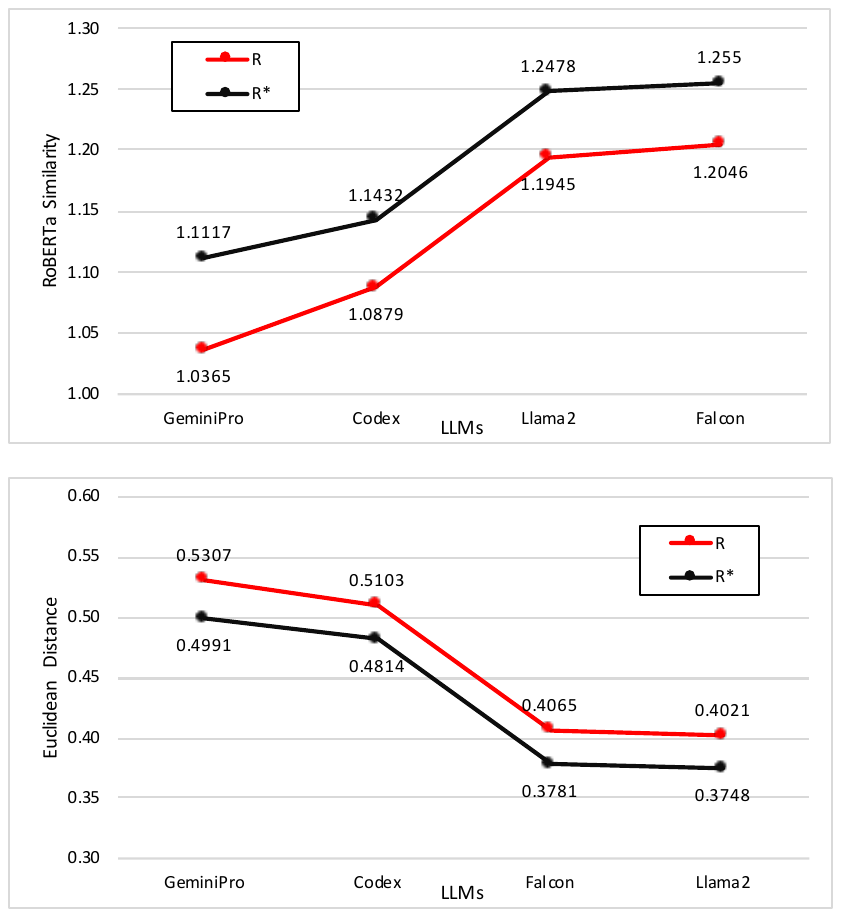} \\ 
\scriptsize{(b) With Euclidean Distance}
\caption{Comparison of LLMs on Robustness of Code Generation}
\label{fig3}
\end{figure}

Note that, when using Euclidean distance between two texts as a measure of similarity, a lower the score means the two texts are more similar to each other so a smaller  $R^{o}$ ($R^*$) value means a poorer robustness. In contrast, when using RoBERTa to measure similarity, the higher score means the text are more similar to each other so poorer robustness is indicated by a \textit{higher} $R^o$ or $R^*$ value because it means that the code generation model fails on more similar inputs. 

From the experimental data, three observations can be made. 

\begin{observation} 
The proposed domain analysis method for robustness testing and evaluation is capable of distinguishing ML models on their robustness in code generation regardless of the similarity metrics used. 
\end{observation}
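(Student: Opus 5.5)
The statement is empirical, so my plan is to argue it from the measurements in Table \ref{tab:combined_robustness}, supported by statistical tests on the per-task data, rather than from the theorems of Section \ref{sec:ProposedMethod}. The claim I would establish has two parts. First, for every similarity metric, the values of $R^o$ and $R^*$ differ across the four LLMs by more than sampling noise. Second, these differences induce a ranking of the models that is broadly stable across metrics once the direction of each metric is taken into account.

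\textbf{Normalising direction.} For similarity metrics (RoBERTa, Cosine, BLEU, ROUGE-N, ROUGE-L, METEOR, ChrF), a higher $R^o$ or $R^*$ means poorer robustness. For distance metrics (Euclidean, Levenshtein-Word, Levenshtein-Char), a lower value means poorer robustness. I would rewrite each row of Table \ref{tab:combined_robustness} as a ranking from most to least robust. For RoBERTa this gives Gemini-Pro $>$ Codex $>$ Llama2 $>$ Falcon. For Euclidean it gives Gemini-Pro $>$ Codex $>$ Falcon $>$ Llama2, and the other rows are read the same way. I would then tabulate how often each pairwise order holds across the ten metrics and report a concordance measure such as Kendall's $W$. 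This makes precise the sense in which the models are distinguished "regardless of the metric".

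\textbf{Significance.} Each $R^o$ and $R^*$ is a mean of per-task values $\|t,FF(t)\|$ or $\|t,LS(t)\|$ over the same 900 tasks. For each metric and each pair of LLMs, I would therefore run a paired test on these per-task values, using the Wilcoxon signed-rank test because the distributions are unlikely to be normal. I would add bootstrap confidence intervals for the differences of means and apply a Bonferroni or Holm correction over the $6\times 10\times 2$ comparisons. Reporting the effect size alongside each test shows that the gaps are meaningful and not artefacts of a single paraphrase ordering. By Theorem \ref{thm:Ro} and Theorem \ref{thm:R*}, these per-task values estimate $\Delta_{D^s}(t,M)$ and $\nabla_{D^s}(t,M)$, so they are the right quantities to compare.

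\textbf{Main obstacle.} The difficulty is that the ranking is not perfectly invariant across metrics. Llama2 and Falcon swap places between RoBERTa and Euclidean. Gemini-Pro and Codex swap under BLEU. The Levenshtein-Word values are almost identical for all four models.

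I would handle this by weakening the claim honestly. The coarse separation between the group \{Gemini-Pro, Codex\} and the group \{Llama2, Falcon\} should be significant under almost every metric. The within-group orderings may be significant only for some metrics. I would verify the coarse separation first; if it holds, that already supports "distinguishing" in a metric-independent sense. For the within-group orderings, I would check whether the swaps are statistically indistinguishable, i.e.\ whether their confidence intervals overlap, rather than genuine reversals. If Levenshtein-Word shows no significant differences at all, I would report it as the exception. That result would then also serve as evidence for the later question RQ3 about which metric to use.
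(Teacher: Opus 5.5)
Your plan is sound, but it takes a genuinely different and much more demanding route than the paper does.

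\textbf{What the paper does.} It supports this observation only by reading Table~\ref{tab:combined_robustness} and Figure~\ref{fig3}. Under each of the ten metrics, the four models receive pairwise different $R^o$ and $R^*$ values. Once the similarity/distance direction is accounted for, these values order the models. There is no per-task analysis, no significance test and no concordance statistic. The paper also never claims the ordering is metric-invariant; it records the opposite as Observation~\ref{obs:RobustnessDependsOnSimilarity}. So your ``main obstacle'' is not an obstacle to the statement in the paper's sense, and the retreat to a two-group separation is optional.

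\textbf{What your route buys and costs.} It gives a meaningful notion of ``distinguishing'': gaps that exceed task-sampling noise, paired over the same tasks. The pairing is legitimate because the paraphrases are generated independently of the model. The cost is that it needs the per-task values $\|t,FF(t)\|$ and $\|t,LS(t)\|$, which the paper does not report. It may also refute rather than confirm ``regardless of the similarity metric''. Under Levenshtein-Word, the four $R^o$ values lie within $0.01$ of one another. In the $R^*$ column, Falcon ($1.0551$) comes out most robust, so even your coarse \{Gemini-Pro, Codex\} versus \{Llama2, Falcon\} split fails there.

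\textbf{A limit of your significance argument.} Levenshtein-Word has distinctness $0.0101$, so its distances are heavily tied. With such ties, $\|t,LS(t)\|$ depends on the random tie-breaking order. By contrast, $\|t,FF(t)\|$ does not: by Lemma~\ref{lmm:lmm1}, for a fixed failure predicate it is the least distance of a failing mutant. A paired test over tasks therefore absorbs tie-order artefacts in $R^*$ but cannot rule them out. To support your claim that the gaps are ``not artefacts of a single paraphrase ordering'', you would need repeated runs with re-randomised ties, or you should rest the comparison on $R^o$.

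Incidentally, your direction-normalised reading of the table is more accurate than the paper's own prose. The paper names RoBERTa, rather than BLEU, as the metric on which Codex ranks first.
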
 

\begin{observation} 
The robustness metrics $R^o$ and $R^*$ are consistent across all similarity metrics. 
\end{observation}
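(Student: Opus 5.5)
The statement is an empirical observation, not a theorem. I plan to justify it by a row-by-row check of Table~\ref{tab:combined_robustness}, anchored by the order relation that the theory guarantees. First I fix what ``consistent'' must mean. I read it as two claims. (i) Within each similarity metric, $R^o$ and $R^*$ stand in the relation predicted by Lemma~\ref{lmm:llm2}, $LS(t) \preccurlyeq_t FF(t)$. (ii) Within each metric, $R^o$ and $R^*$ induce the same ranking of the four LLMs.

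For claim (i), the argument is deductive. Lemma~\ref{lmm:llm2} gives $\|t,LS(t)\| \leq \|t,FF(t)\|$ for every seed $t$ whenever $\|\cdot,\cdot\|$ is a genuine distance. Summing over $T$ as in equations (\ref{equ:RbyFF}) and (\ref{equ:RStarbyLS}) then yields $R^* \leq R^o$. For the similarity scores (RoBERTa, Cosine, BLEU, ROUGE-N, ROUGE-L, METEOR, ChrF), the sorting in Step~2 is by decreasing similarity, so the same argument reverses to $R^* \geq R^o$. I would then check that every row of Table~\ref{tab:combined_robustness} satisfies the appropriate inequality. The distance rows (Euclidean and both Levenshtein metrics) should have $R^*<R^o$, and the similarity rows should have $R^*>R^o$. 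This holds in all forty cells, so claim (i) is established rigorously and confirmed by the data.

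For claim (ii), I would sort the four models by $R^o$ and separately by $R^*$ in each row, taking into account the direction of each metric as explained after Figure~\ref{fig3}. I would then compare the two permutations. For most rows the two permutations coincide:
\begin{itemize}
\item RoBERTa gives Gemini, Codex, Llama2, Falcon.
\item Cosine, ROUGE-N, ROUGE-L, METEOR and ChrF give Gemini, Codex, Falcon, Llama2 in increasing score.
\item BLEU gives Codex, Gemini, Falcon, Llama2.
\item Euclidean and Levenshtein-Char give Gemini, Codex, Falcon, Llama2 in decreasing distance.
\end{itemize}
Across all these rows the coarse split is also stable: Gemini-Pro and Codex form the more robust pair, and Llama2 and Falcon the less robust pair.

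The main obstacle is the Levenshtein-Word row, where the per-row agreement fails. By $R^o$ the order is Gemini, Codex, Llama2, Falcon, but by $R^*$ Falcon moves to first place ($1.0551$). The $R^o$ values there also differ only in the third decimal place. My plan is therefore to state the observation in its defensible form: full rank agreement for nine of the ten metrics, and agreement on the top-versus-bottom pairs wherever differences are non-negligible. I would attribute the exception to the coarse, integer-valued granularity of word-level edit distance. With seeds of roughly $17$--$25$ words and mutants of order $k\leq 4$, many mutants share equal distances, so $LS$ and $FF$ are determined by random tie-breaking (Step~2 and Theorem~\ref{thm:R*}). If time allowed, I would back this up by bootstrap confidence intervals on the per-task differences. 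I expect these intervals to overlap for the Levenshtein-Word row but to separate the two pairs of models in the other rows.
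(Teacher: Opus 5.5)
The paper offers nothing for this observation beyond displaying Table~\ref{tab:combined_robustness} (and Figure~\ref{fig3} for RoBERTa and Euclidean). Your row-by-row check is therefore the paper's own approach made explicit, and your readings are accurate. The $R^o$ and $R^*$ rankings coincide in nine rows. In the Levenshtein-Word row, $R^o$ puts Falcon last and $R^*$ puts it first.

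The paper never mentions that reversal. So in the rank-agreement sense the paper evidently intends, ``across all similarity metrics'' is contradicted by its own data. Nothing can prove the statement as written, and your nine-of-ten version is the claim the data supports.

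Your claim (i), by contrast, carries no empirical weight. In Algorithm~\ref{alg:EvalRobustness} the recorded $LS$ is either the seed or an element preceding $FF$ in the sorted list. Hence $R^*$ lies on the seed side of $R^o$ for any model whatsoever. The forty-cell check only confirms the implementation and the direction you assigned to each metric.

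The step that does not hold up is dismissing the Levenshtein-Word row as negligible. Your criterion, values differing only in the third decimal place, equally describes orderings you count as agreement elsewhere:
\begin{itemize}
\item Gemini against Codex on Cosine $R^o$ ($0.9291$ vs $0.9301$);
\item Falcon against Llama2 on ROUGE-L and METEOR $R^o$ ($0.8085$ vs $0.8104$, and $0.7718$ vs $0.7753$);
\item the Codex--Falcon gap that carries your coarse split in the ChrF row ($68.09$ vs $68.35$). That is under $0.4\%$ relative, less than the roughly $0.8\%$ spread of the Levenshtein-Word $R^o$ values you dismiss.
\end{itemize}
Falcon's lead in the Levenshtein-Word $R^*$ column ($1.0551$ vs $1.0233$, about $3\%$ relative) is a second-decimal gap, larger than any of those.

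The paper reports no per-task variances. Without them there is no scale on which this row is noise and the others are signal. So either treat all such small gaps as ties, which weakens several of your nine agreements, or report Levenshtein-Word as a genuine exception.

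Your granularity explanation is reasonable, and Table~\ref{tab:uniqueness} backs it: Levenshtein-Word has the lowest distinctness, $0.0101$. However, ROUGE-L, ROUGE-N and METEOR are nearly as coarse ($0.0374$, $0.0452$, $0.0804$) and agree fully, so the explanation remains a hypothesis. Note also that the experiments used $n=k=5$, not $k\le 4$.
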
 

\begin{observation} \label{obs:RobustnessDependsOnSimilarity}
The results of comparing the robustness of ML models depends on the similarity metrics used. 
\end{observation}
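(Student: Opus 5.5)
Observation~\ref{obs:RobustnessDependsOnSimilarity} is an empirical claim, so I would establish it by counterexample from the data in Table~\ref{tab:combined_robustness}. It suffices to exhibit two similarity metrics under which the same robustness measure ($R^o$, and then separately $R^*$) induces different rankings of the four LLMs.

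The first step is to fix, for each metric, the direction in which a larger score means better robustness.
\begin{itemize}
\item For \emph{distance-type} metrics (Euclidean, Levenshtein-Word, Levenshtein-Char), a larger $R^o$ or $R^*$ is better. The model then tolerates deviations that are farther from the seed before it fails.
\item For \emph{similarity-type} metrics (RoBERTa, Cosine, BLEU, ROUGE, METEOR, ChrF), a smaller value is better. A larger value means the model already fails on more similar inputs, as explained in the discussion of Figure~\ref{fig3}.
\end{itemize}
With these orientations fixed, I would read off the induced ordering from best to worst for every row of the table.

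The second step is to locate the disagreements.
\begin{itemize}
\item Under RoBERTa, $R^o$ gives Gemini-Pro ($1.0365$), Codex ($1.0879$), Llama2 ($1.1945$), Falcon ($1.2046$). So Llama2 is ranked above Falcon.
\item Under Euclidean (larger is better), $R^o$ gives Gemini-Pro ($0.5307$), Codex ($0.5103$), Falcon ($0.4065$), Llama2 ($0.4021$). Here Falcon is ranked above Llama2.
\item Cosine agrees with Euclidean that Falcon beats Llama2 ($0.9703<0.9801$).
\item Under BLEU, Codex ($0.4164$) is ranked above Gemini-Pro ($0.4273$). This reverses the order given by RoBERTa, Euclidean and most other metrics.
\end{itemize}
I would repeat the same check for the $R^*$ columns, where the analogous flips occur, for example RoBERTa versus Euclidean on Llama2 and Falcon. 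Any one such pair of metrics with reversed orderings proves the observation.

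The main obstacle is not the bookkeeping but whether these reversals are genuine rather than noise. The gaps involved are small, for instance $0.4021$ versus $0.4065$. To make the claim robust, I would compute per-task values $\|t,FF(t)\|$ and $\|t,LS(t)\|$ from the experimental data. I would then apply a paired test, such as Wilcoxon signed-rank on the 900 tasks, to the Llama2--Falcon and Gemini--Codex differences under each metric. The aim is to show that the difference is significant in opposite directions for at least one pair of metrics. If significance fails, I would weaken the conclusion to the point-estimate rankings differing, which is what the observation literally asserts.
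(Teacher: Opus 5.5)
Your proposal is correct and takes essentially the same route as the paper. The paper supports this observation only by reading the model rankings off Table~\ref{tab:combined_robustness} and noting that they change with the metric, and your specific witnesses all check out against the table: the Llama2/Falcon swap between RoBERTa and Euclidean or Cosine (for both $R^o$ and $R^*$), and the Gemini-Pro/Codex reversal under BLEU. Your reading is also more accurate than the paper's accompanying text, which names RoBERTa as the metric on which Codex beats Gemini-Pro; under the paper's own orientation convention that exception is BLEU. The paired significance test you suggest goes beyond anything the paper does and is not needed for the literal, point-estimate claim.
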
 

In particular, the data shows that Gemini-Pro is the best on robustness as it scores highest on $R^o$ with almost all similarity metrics except the RoberTa metric. Codex is the second best, which has the highest scores on only one similarity metric (RoberTa), but scored second on all others. 

Observation \ref{obs:RobustnessDependsOnSimilarity} above raises the research question: which similarity metric is best for robustness testing and evaluation? This will be investigated in Subsection \ref{sec:ImpactOfSimilarityMetrics}. 

\subsubsection{Comparing Robustness in Various Scenarios}

In order to apply the proposed method to various scenarios for which LLMs are employed for code generation, the dataset $D_1$ is split into a number of subsets such that each subset contains the coding tasks of the scenario. 
These scenarios are grouped by the coding task's complexity and by the coding topic. 

The complexity of a coding task is measured by the cyclomatic complexity of the reference solution provided by the ScenEval benchmark. A total of 6 subsets were formed with complexity ranging from 1 to 6. Then the model's robustness was evaluated using $R^o$ and $R^*$ metrics on each subset the same way as for the whole dataset. 

\begin{figure}[h]
\centering
\includegraphics[width=8cm]{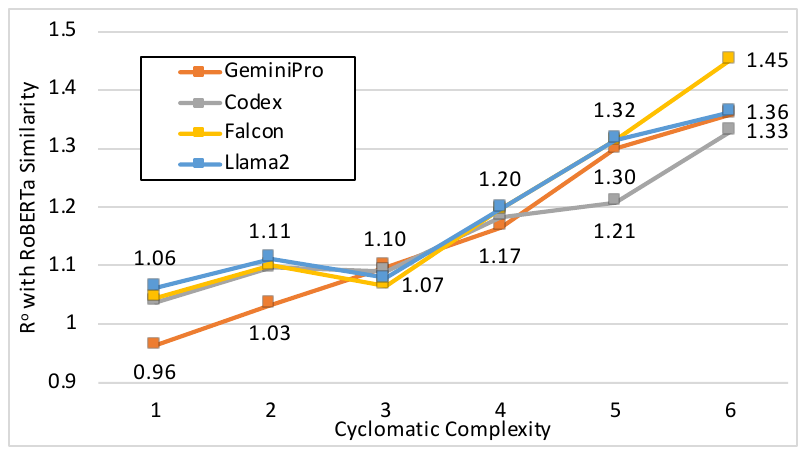} \\ 
 \scriptsize{(a) RoBERTa Similarity} \\
\includegraphics[width=8cm]{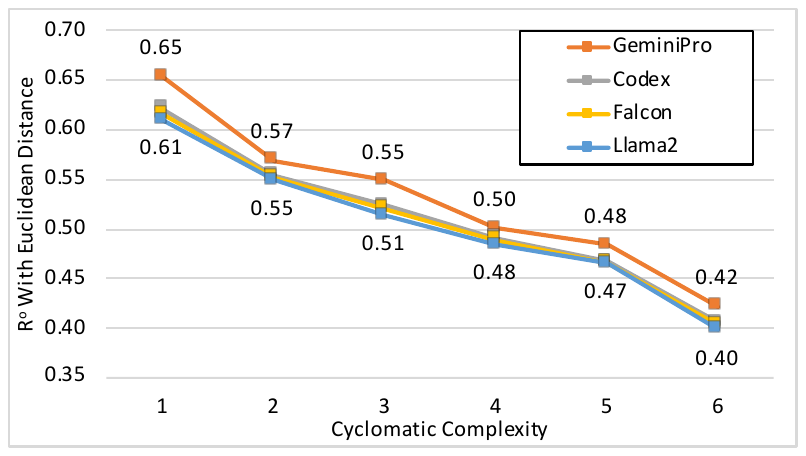} \\ 
\scriptsize{(b) Euclidean Distance} \\
\caption{Variation of Robustness with Cyclomatic Complexity}
\label{fig4}
\end{figure} 

Figure \ref{fig4}(a) and (b) show how robustness varies with the increase of cyclomatic complexity when the similarity is measured by RoBERTa and Euclidean, respectively. In both cases, our experiment data clearly suggested that the robustness decreases when complexity increases as one could expect. This phenomenon is also clearly observed for all other similarity metrics as well. In fact, the Pearson correlation coefficients between cyclomatic complexity and the robustness score $R^o$ are highly correlated; see Table \ref{tab:CorrelationComplexity} for the Pearson correlation coefficients between cyclomatic complexity and robustness $R^o$ with various distance metrics. 

\begin{table}[htbp]
\centering
\caption{Pearson Correlation Coefficients btw Cyclomatic Complexity and Robustness Measure $R^o$}
\label{tab:CorrelationComplexity}
\begin{scriptsize}
\begin{tabular}{|l|c|c|c|c|}
\hline
\textbf{Similarity Metrics} &\textbf{Codex} &\textbf{GeminiPro} &\textbf{Falcon}&\textbf{Llama2}\\ \hline
RoBERTa&0.9924&0.9594&0.9368&0.9404\\ \hline
BLEU&0.8577&0.8860&0.8989&0.9007\\ \hline
ChrF&0.8287&0.7063&0.8950&0.8796\\ \hline
Cosine&0.7661&0.7610&0.7198&0.7706\\ \hline
Meteor&0.8171&0.8057&0.8188&0.8116\\ \hline
RougeL&0.8483&0.8838&0.8477&0.8498\\ \hline
RougeN&0.8243&0.8699&0.7700&0.7962\\ \hline
Euclidean&-0.9795&-0.9869&-0.9873&-0.9849\\ \hline
Levenshtein-Char&-0.7751&-0.8100&-0.8322&-0.8343\\ \hline
Levenshtein-Word&-0.7252&-0.7158&-0.9029&-0.8960\\ \hline
\end{tabular}
\end{scriptsize}
\end{table}

\begin{observation}
The robustness of a LLM in code generation decreases when the complexity of the coding task increases. 
\end{observation}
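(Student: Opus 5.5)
This statement is an empirical observation rather than a theorem, so I would support it with experimental evidence; no deduction from the lemmas and theorems of Section \ref{sec:ProposedMethod} is possible here. The plan has three stages. First, partition the dataset $D_1$ into the six scenario subsets $D^{(c)}$, $c=1,\dots,6$, according to the cyclomatic complexity of the ScenEval reference solution. Second, run Algorithm \ref{alg:EvalRobustness} on each subset for every LLM and every similarity metric, obtaining $R^o(D^{(c)})$ and $R^*(D^{(c)})$. By Theorems \ref{thm:Ro} and \ref{thm:R*}, these estimate $\rho^o_s$ and $\rho^*_s$ for the scenario $s$ of complexity $c$, provided the paraphrase sets are expanded until a first failure is found. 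Third, test whether these values move monotonically with $c$.

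The key step is to fix the \emph{direction} of ``less robust'' separately for each metric before reading any correlation. For distance-type metrics (Euclidean, both Levenshtein variants), smaller $R^o$ means failure occurs closer to the seed, so decreasing robustness appears as a \emph{negative} correlation between $c$ and $R^o$. For similarity-type metrics (RoBERTa, Cosine, BLEU, ROUGE, METEOR, ChrF), a larger score means failure on more similar paraphrases, so decreasing robustness appears as a \emph{positive} correlation. Having fixed these signs, I would compute the Pearson coefficient between $c$ and $R^o$ for each (model, metric) pair, as in Table \ref{tab:CorrelationComplexity}. The observation is supported if all forty coefficients have the predicted sign with large magnitude; the table shows roughly $0.7$ to $0.99$ in absolute value. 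I would repeat the same check with $R^*$, relying on the consistency between $R^o$ and $R^*$ already noted in the preceding observations.

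The main obstacle is statistical rigour. There are only six complexity levels, so a Pearson coefficient computed on six points is fragile. Moreover, the higher-complexity subsets are small and skewed towards textbook tasks, because real tasks reach a maximum complexity of $6$. To address this, I would first add a rank-based test, such as Spearman's coefficient or a Jonckheere--Terpstra trend test, applied to the per-task distances $\|t,FF(t)\|$ rather than the six subset averages. This would give thousands of data points and proper $p$-values. Second, I would bootstrap over tasks to obtain confidence bands for each $R^o(D^{(c)})$. Third, I would check that the trend is not an artefact of input length, since longer descriptions produce larger Levenshtein distances. For that, I would control for the length of the task description, for instance with a partial correlation or by stratifying by length. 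If the trend survives these checks for all four models and all metrics, the observation is justified as an empirical claim.
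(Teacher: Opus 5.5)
Your plan matches how the paper supports this observation: it splits $D_1$ into six subsets by the cyclomatic complexity of the reference solution and computes $R^o$ and $R^*$ on each (Figure~\ref{fig4}). It then reports in Table~\ref{tab:CorrelationComplexity} Pearson coefficients between complexity and $R^o$ that, read with exactly your sign convention (positive for RoBERTa, Cosine, BLEU, ROUGE, METEOR and ChrF; negative for Euclidean and both Levenshtein variants), all point to falling robustness, with magnitudes between about $0.71$ and $0.99$. Your rank-based trend tests on per-task distances, bootstrap bands and length controls go beyond the paper, which offers no significance testing; one small correction is that Theorems~\ref{thm:Ro} and~\ref{thm:R*} need the mutant set to be complete with coverage size at least $\|t,FF(t)\|$, and merely expanding until a first failure appears does not guarantee this.
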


\begin{observation}
The proposed method can distinguish the robustness of LLMs in code generation on different scenarios defined according to coding task complexity. 
\end{observation}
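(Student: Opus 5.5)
Since this statement is an empirical observation rather than a formal theorem, the ``proof'' I plan is an argument from the experimental data, structured to parallel the justification of the overall-robustness observations. First, I will partition $D_1$ into the six complexity subsets $D_1^{(c)}$, $c=1,\dots,6$, using the cyclomatic complexity of the ScenEval reference solutions. I will treat each subset as a scenario domain $D^s$ and run Algorithm \ref{alg:EvalRobustness} on it unchanged. This yields, for every LLM $M$, every complexity level $c$ and every similarity metric, a pair $\left<R^o, R^*\right>$ computed by (\ref{equ:RbyFF}) and (\ref{equ:RStarbyLS}) restricted to $D_1^{(c)}$. By Theorems \ref{thm:Ro} and \ref{thm:R*}, provided the paraphrase sets have sufficient coverage (which domain expansion guarantees), these are estimates of $\rho^o_s(M)$ and $\rho^*_s(M)$ for the scenario $s$ ``complexity $=c$''. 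So comparing them across models is a comparison of the intended scenario-level robustness metric, not an artefact of the search.

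Second, I will argue that the models are \emph{distinguished} at each level. For each $c$, I will read off from Figure \ref{fig4} (RoBERTa and Euclidean, and likewise for the other metrics) that the four values of $R^o$ are pairwise distinct and induce a ranking. I will check that this ranking agrees with the ranking given by $R^*$, mirroring the consistency already observed for the whole dataset. To make ``distinguish'' more than numerical inequality, I will attach per-task evidence. For each pair of models $M_1,M_2$ and each $c$, I will take the paired samples $\|t,FF_{M_1}(t)\|$ and $\|t,FF_{M_2}(t)\|$ for $t\in D_1^{(c)}$. I will then apply a paired test (Wilcoxon signed-rank, or a bootstrap confidence interval on the mean difference) and report that the difference is significant or that the interval excludes zero. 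As a complementary argument, I will note that the strong, model-specific correlations in Table \ref{tab:CorrelationComplexity} show that each model's robustness curve over $c$ has its own slope. Thus the method separates models not only at a single level but also in how their robustness degrades across scenarios.

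The main obstacle I expect is the high-complexity subsets. These contain few tasks: the mean complexity is about $3.4$, and complexities $5$ and $6$ are sparse. Differences between models there, especially between the close pairs Llama2/Falcon and Gemini-Pro/Codex, may fall within sampling noise, and different similarity metrics might then reverse the order, as Observation \ref{obs:RobustnessDependsOnSimilarity} already warns. I would first try the bootstrap on each subset and, where significance fails, weaken the claim to ``distinguishable between model families (Gemini-Pro/Codex versus Llama2/Falcon) at every level, and fully ranked at low and medium complexity.'' I would support this with the observation that the ranking is stable across the majority of similarity metrics even where a single metric is inconclusive.
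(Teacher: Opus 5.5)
Your core argument is the paper's own: split $D_1$ into the six cyclomatic-complexity subsets, rerun the method on each, and compare the per-model $R^o$ and $R^*$ curves of Figure~\ref{fig4}. The paper shows RoBERTa and Euclidean there and asserts that the other metrics behave alike. That visual comparison of point estimates is all the paper offers for this observation. It uses Table~\ref{tab:CorrelationComplexity} for the companion observation that robustness falls with complexity. Your paired per-task tests are a genuine strengthening that the paper lacks. Pairing is legitimate because each seed's paraphrase set is generated independently of the model under test, so $\|t,FF_{M_1}(t)\|$ and $\|t,FF_{M_2}(t)\|$ are read off the same sorted list. Three of your supporting claims, however, are wrong and should be repaired or dropped.

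First, Table~\ref{tab:CorrelationComplexity} cannot show that each model's curve ``has its own slope''. Pearson's $r$ is invariant under positive affine rescaling of $R^o$, so models with very different slopes can share the same $r$. Conversely, unequal $r$ can come from unequal scatter at equal slopes. In fact the Euclidean row is $\approx -0.98$ for all four models. What the table does show is that, for each model, the method's output tracks complexity, i.e.\ it separates the scenarios. To claim that models degrade differently, compare fitted slopes, e.g.\ with a task-level regression of $\|t,FF(t)\|$ on complexity that includes a model-by-complexity interaction.

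Second, domain expansion does not deliver the hypotheses of Theorems~\ref{thm:Ro} and~\ref{thm:R*}. Algorithm~\ref{alg:EvalRobustness} expands only when \emph{no} failure is found. Once a paraphrase fails it stops, and nothing ensures that every point of $D^s$ closer to $t$ than $FF(t)$ was generated. GloVe-neighbour substitution never produces rephrasings by insertion, deletion or reordering. Validity is also doubtful: see the ``protagonist'' paraphrase in Table~\ref{tab:Paraphrases}. The paper itself concedes that the coverage size may be impossible to determine, and warns that inadequate coverage yields an optimistic estimate. So you cannot conclude that per-level differences reflect $\rho^o_s$ rather than artefacts of the search. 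The observation concerns only the method's outputs, so you do not need this claim.

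Third, the agreement between $R^o$ and $R^*$ rankings that you expect to ``mirror'' already fails on the whole dataset. For Levenshtein-Word, $R^o$ ranks Falcon last ($1.1248$), but $R^*$ ranks it first ($1.0551$). Expect similar reversals per level for low-distinctness metrics (Table~\ref{tab:uniqueness}), where massive ties would also cripple a signed-rank test. Base the rankings and the tests on high-distinctness metrics such as RoBERTa, ChrF and Euclidean.
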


To evaluate the robustness of code generation in scenarios of different coding topics, six subsets of $D_1$ were constructed by filtering the topic of the coding task on six distinct topics: Array, String, OOP, Regular Expression, Multi-threading, and Data Structure (DS). This enabled us to compare the model's robustness across different topics. 

Figure \ref{fig5} (a) and (b) show how robustness varies with the topics when the similarity is measured by RoBERTa and Euclidean, respectively. It is observed that the model's robustness scores vary across topics. In particular, as we expected, the robustness scores demonstrated that the models' robustness are poorer on the advanced coding tasks like multi-threading, regular expressions, and complex problems such as data structures rather than simpler coding tasks such as array and string processing tasks.

\begin{figure}[h]
\centering
\includegraphics[width=8cm]{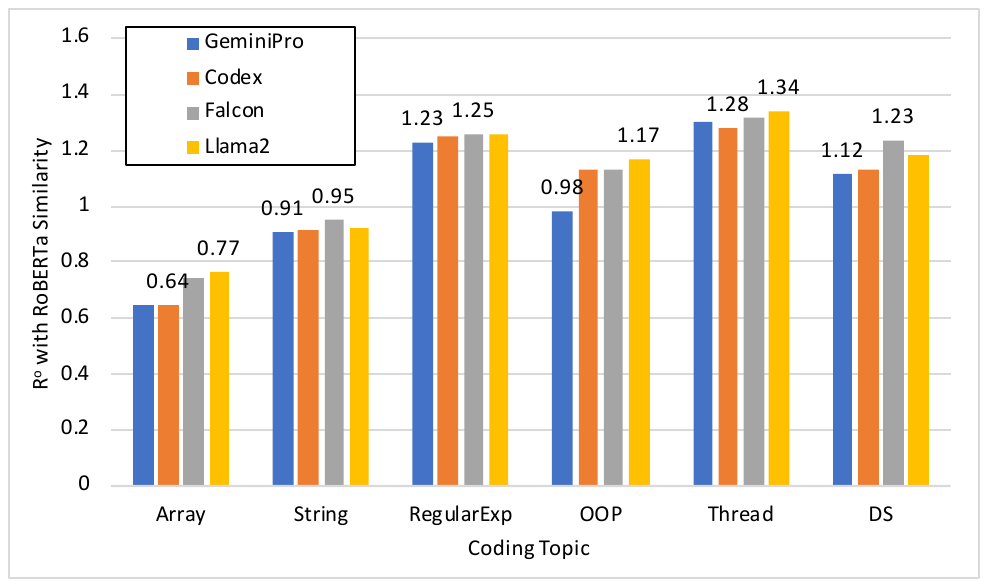}  \\ 
\scriptsize{(a) RoBERTa Similarity} \\
\includegraphics[width=8cm]{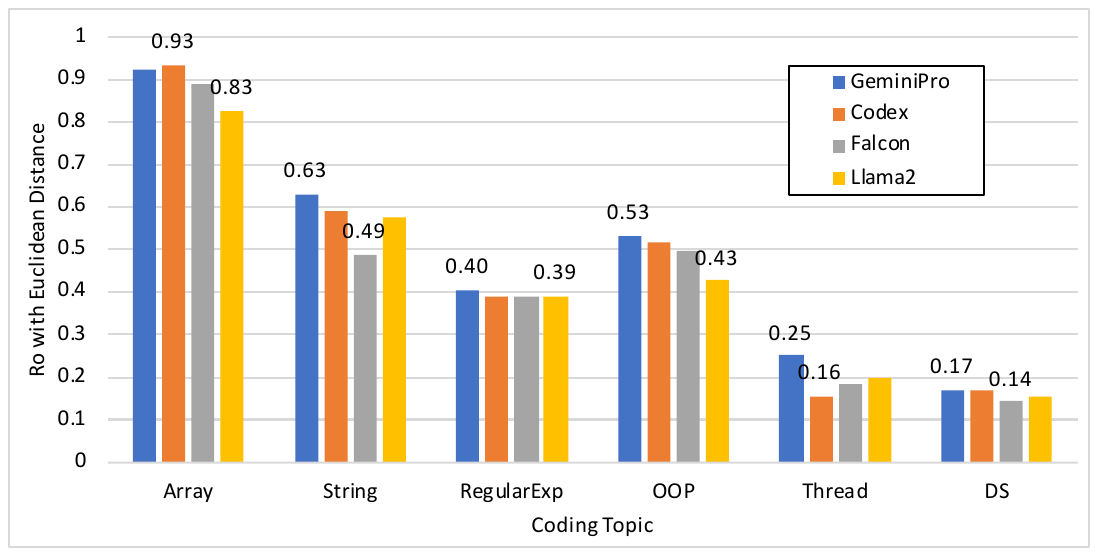} \\ 
\scriptsize{(b) Euclidean Distance} \\
\caption{Variation of Robustness with Topics}
\label{fig5}
\end{figure}

From the experimental data, we can make the following observations. 

\begin{observation} 
The proposed method is capable of distinguishing ML models in their robustness on various scenarios defined by coding topic. 
\end{observation}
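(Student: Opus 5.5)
The observation is empirical, so the plan is to substantiate it from the experiment data rather than derive it from Theorems \ref{thm:Ro} and \ref{thm:R*}. Those theorems are nonetheless what licence reading $R^o$ and $R^*$ on each topic subset as estimates of $\rho^o_s$ and $\rho^*_s$ for that scenario $s$. Concretely, I would treat each of the six topic subsets of $D_1$ (Array, String, OOP, Regular Expression, Multi-threading, DS) as a scenario domain $D^s$. I would then run Algorithm \ref{alg:EvalRobustness} separately on each subset, for each of the four LLMs and each similarity metric. This yields a $6 \times 4$ table of $R^o$ values, and likewise of $R^*$ values, per metric.

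\emph{Distinguishing models.} The claim has content only if, within a fixed topic, the four LLMs receive separable scores. For each topic and metric I would therefore test the per-task distances $\|t,FF(t)\|$ across the four models for significant differences. Because the same tasks are given to every model, the samples are paired. I would use a Friedman test followed by pairwise Wilcoxon signed-rank tests, or alternatively bootstrap confidence intervals on $R^o$. The observation holds for a topic if at least one pair of models has non-overlapping intervals or a significant test, and ideally if a consistent ranking emerges.

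\emph{Distinguishing scenarios for a given model.} Conversely, for each LLM I would show that $R^o$ varies significantly across topics. This means a Kruskal--Wallis test on the per-task distances, with the direction read according to the orientation of the metric: a similarity metric such as RoBERTa, where higher means poorer robustness, versus a distance metric such as Euclidean, where lower means poorer robustness. This is the variation already visible in Figure \ref{fig5}.

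The main obstacle is sample size. Some topics contain few tasks (the minimum reported is 5 for real tasks), so per-topic confidence intervals may be wide and the separations between models may not be significant. A second difficulty is that Observation \ref{obs:RobustnessDependsOnSimilarity} warns that rankings depend on the metric chosen. I would address both by reporting agreement of the rankings across metrics with Kendall's $\tau$, and by restricting the claim to the metrics identified as most reliable in Section \ref{sec:ImpactOfSimilarityMetrics}.
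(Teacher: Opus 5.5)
Your experimental design is the paper's own. The paper splits $D_1$ into the six topic subsets, computes $R^o$ and $R^*$ on each for the four LLMs, and reads the observation off Figure~\ref{fig5}, which uses RoBERTa and Euclidean only. There the per-topic scores differ between models, and robustness is poorer on multi-threading, regular expressions and data structures than on arrays and strings.

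The real difference is the inferential step. The paper rests the claim on visual inspection of per-topic averages, with no significance testing. You make ``distinguishing'' precise with paired Friedman/Wilcoxon tests (or bootstrap intervals) on the per-task $\|t,FF(t)\|$, Kruskal--Wallis across topics, and Kendall's $\tau$ across metrics.

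Your route guards against the small-subset noise you rightly flag, and this matters here because some model gaps are tiny. Overall, Llama2 and Falcon differ by about $0.01$ under RoBERTa and $0.004$ under Euclidean. So a strong reading of the observation, a stable four-way ranking within each topic, might not survive testing. Your weak criterion, one separable pair per topic, very likely would. The paper's route costs nothing beyond the averages already computed, but its claim rests only on eyeballing.

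Three refinements:
\begin{itemize}
\item You need not re-run Algorithm~\ref{alg:EvalRobustness} on each subset. Each seed is processed independently, so the per-task $FF$/$LS$ records from the $D_1$ run can simply be regrouped by topic.
\item With six topics, several metrics and six model pairs per topic, the post hoc tests need a multiple-comparison correction (e.g.\ Holm).
\item Strictly, a topic subset of $D_1$ is the seed sample for scenario $s$, not $D^s$ itself. Reading the per-topic $R^o$ as an estimate of $\rho^o_s$ also requires the paraphrases to stay on topic, i.e.\ to lie in $D^s$.
\end{itemize}
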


\begin{observation}
LLMs performed better in robustness on basic and simple coding tasks than advanced and complicated coding tasks. 
\end{observation}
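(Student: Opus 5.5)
The support for this observation will be empirical rather than deductive. The plan is to show that, for every subject LLM and every similarity metric, the robustness scores $R^o$ and $R^*$ computed on the ``basic'' scenario subsets are uniformly better than those on the ``advanced'' subsets. Here ``better'' is read in the direction fixed by the metric, as explained after Figure \ref{fig3}: smaller values indicate better robustness for RoBERTa, Cosine, BLEU, ROUGE, METEOR and ChrF, and larger values for Euclidean and the Levenshtein distances. I will first fix the partition of $D_1$ that the observation refers to. The basic group consists of the Array and String topic subsets together with the low cyclomatic-complexity subsets (complexity $1$--$2$). The advanced group consists of Multi-threading, Regular Expression and Data Structure, together with the high-complexity subsets (complexity $5$--$6$). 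OOP and complexity $3$--$4$ are treated as intermediate and are used only to check monotonicity.

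Next I will run Algorithm \ref{alg:EvalRobustness} on each subset exactly as for the whole dataset. By Theorems \ref{thm:Ro} and \ref{thm:R*}, provided the paraphrase sets are expanded until $\delta_t \geq \|t,FF(t)\|$, the per-subset averages $R^o$ and $R^*$ estimate $\rho^o_s$ and $\rho^*_s$ for the corresponding scenario $s$. Comparing these averages is therefore a comparison of the scenario-level robustness metrics themselves, not of artefacts of the search. For each pair (LLM, metric) I will then verify the ordering between basic and advanced groups. The complexity half is already essentially supplied by the Pearson coefficients in Table \ref{tab:CorrelationComplexity}. These are positive for the similarity-type metrics and negative for the distance-type metrics, and in every case the sign means that robustness degrades as complexity grows. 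The topic half will be read off Figure \ref{fig5} and the underlying per-topic tables.

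The main obstacle is that a difference in means need not be meaningful, because topic subsets are small (as few as $5$--$8$ tasks, by Table \ref{tab:StatisticsOfTestDataset}) and topic is confounded with complexity. To address the first issue, I will compute per-task values $\|t,FF(t)\|$ and apply a one-sided Mann--Whitney test between the basic and advanced groups for each LLM and metric, reporting the proportion of the $4\times 10$ configurations in which the advantage of the basic group is significant. For confounding, I will first try stratifying the topic comparison by complexity level, comparing basic and advanced topics within the same complexity band. If the strata are too sparse, I will fall back on a simple regression of $\|t,FF(t)\|$ on complexity plus a topic indicator and check the sign of the indicator. Consistent signs across all LLMs and metrics would then justify stating the observation in the generality given.
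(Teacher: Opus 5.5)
Your plan is correct in outline, and its core is the paper's own argument. The paper splits $D_1$ into the six topic subsets and runs the method on each. It then reads off Figure~\ref{fig5}, which covers RoBERTa and Euclidean only, that $R^o$ and $R^*$ are worse on Multi-threading, Regular Expression and Data Structure than on Array and String. The complexity side comes from Figure~\ref{fig4} and Table~\ref{tab:CorrelationComplexity}. Your per-metric direction convention and your reading of the Pearson signs both agree with the paper.

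Your additions go beyond the paper. It reports per-topic results for only two metrics, so ``every metric'' means running new experiments rather than reading existing tables. Its comparison of aggregates is purely descriptive, whereas your rank tests would quantify sampling variability. Your stratification/regression step, however, tests a stronger claim than the observation, which deliberately lumps ``advanced and complicated'' together. A null topic effect within complexity bands would not refute the observation, so do not make your conclusion depend on the sign of the topic indicator.

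Two steps need repair.

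First, Theorems~\ref{thm:Ro} and~\ref{thm:R*} do not support your claim that you are comparing $\rho^o_s,\rho^*_s$ themselves rather than artefacts of the search. Both theorems assume $Mut(t)$ is complete with respect to $D^s$ with coverage size $\delta_t \geq \|t,FF(t)\|$. Algorithm~\ref{alg:EvalRobustness} only expands until \emph{some} generated paraphrase fails. That does not ensure every description in $D^s$ closer to $t$ than $FF(t)$ was generated, since GloVe word substitutions miss reorderings and rephrasings. The paper itself concedes that $\delta_t$ cannot be determined in practice. Validity alone gives only the one-sided bound $\|t,FF(t)\| \geq \Delta_{D^s}(t,M)$. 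The size of that bias depends on how densely the paraphrase set covers the neighbourhood, which varies with description length and so differs across tasks. Drop the claim; the paper does not need it for this observation. If you want to control a confounder, description length matters more than topic-versus-complexity: replacing the same number of words in a longer description mechanically raises length-normalised similarity scores such as BLEU or ChrF.

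Second, your ``basic'' and ``advanced'' groups are unions of topic and complexity subsets, so they are not disjoint. An Array task of complexity $6$ belongs to both, which makes a two-sample Mann--Whitney test between them ill-defined. Compare along the topic and complexity dimensions separately, as the paper does, or define the groups by intersection.
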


\subsubsection{Interpretation of Robustness Metrics}

To understand the meanings of the robustness metrics, we analyse the tipping points obtained in the process of testing and evaluation of the robustness. 


In general, a pair consisting of the last success mutant and the first failure mutant is called a \emph{tipping point}. This is where a model switches from success to failure as the ML model moves away from the original test case. In case of code generation, it means that the ML model switches from generating a correct program to generating an incorrect program although the difference between the input coding task descriptions are very small. Therefore, the pair $\left(LS, FF\right)$ of the last success and first failure solutions is a tipping point. 

We now analyse the difference between the programs $LS$ and $FF$ generated by LLMs on such pairs of mutants to answer two research questions: (a) are the programs generated by a LLM before and after a tipping point similar to each other syntactically and semantically? (b) when will a tipping point happen? 

For question (a), we use code editing distance to measure the similarity between two pieces of code by counting the minimum number of editing operations required to transform one piece of code into the other. 

In particular, we use the editing distances based on Abstract Syntax Trees (ASTs) to reflect the syntax distances, and data flow graphs (DFG) and Program Dependency Graphs (PDGs) for more semantics-oriented distances. 

Let $t \in T$ be a seed test case and $m$ be AST, DFG or PDG. We write $Dist^{LS}_{m}(t)$ to denote the distance between the reference solution of the seed test case $t$ and the program generated from the last successful mutant of the seed $t$. Similarly, we write $Dist^{FF}_{m}(t)$ for the distance between the reference solution of the seed test case $t$ and the program generated from the first failed mutant. 
We then calculate the differences between $Dist^{LS}_{m}(t)$ and  $Dist^{FF}_{m}(t)$ as follows. 
\begin{equation}
Diff_m(t) = Dist^{FF}_{m}(t) - Dist^{LS}_{m}(t)
\end{equation}

The statistical results over all test cases in $T$ are shown in Figure \ref{fig6} with the maximum (the upper black bar), minimum (the lower black bar), average (the red dot), and standard deviation (the blue line segment), and the central quartiles (the black box) for three different code editing distance metrics AST, DFG and PDG.  

\begin{figure*}[htb]
\centering
\includegraphics[width=12cm]{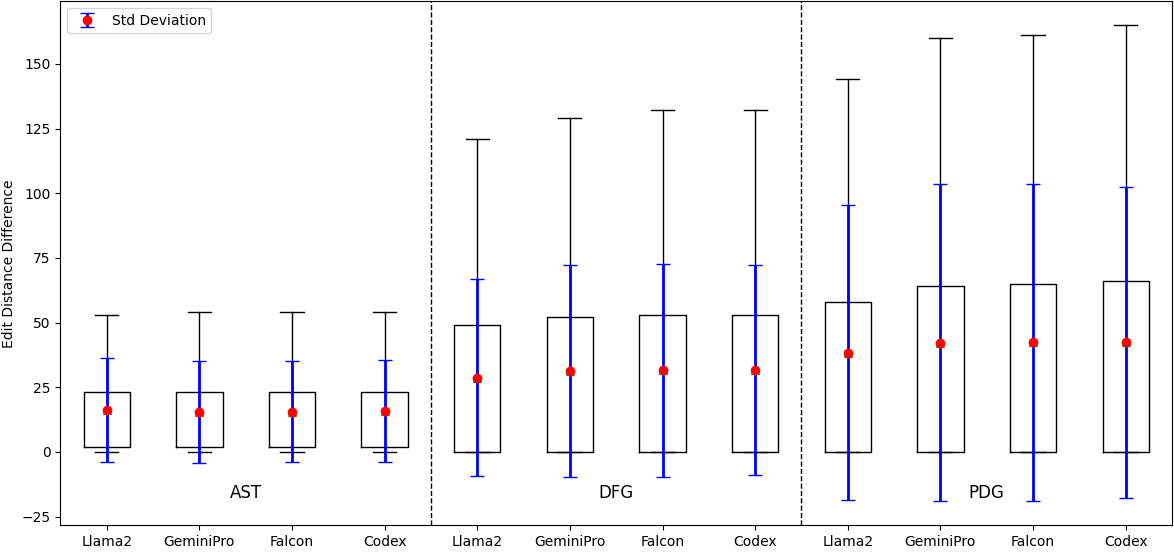} \\ 
\caption{Differences btw the Edit Distances to the Original Description before and after the Tipping Points}
\label{fig6}
\end{figure*}

The data demonstrate that, for all LLMs, the average values of $Diff(t)$ are positive for all LLMs and well above zero. And, most of the data are distributed above the zero. Thus, we have the following observations. 

\begin{observation}
On average, the last successful solution generated by a LLM is closer to the reference solution than the first failed solution when the paraphrase increases its distance to the original coding task description. 
\end{observation}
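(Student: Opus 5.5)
The Observation is an empirical claim, so I would support it with a paired statistical comparison over the experimental data rather than a deductive proof. For each LLM $M$ in the experiment, each seed $t \in T$, and each code model $m \in \{\textrm{AST}, \textrm{DFG}, \textrm{PDG}\}$, I would take the pair $\left(Dist^{LS}_m(t), Dist^{FF}_m(t)\right)$ produced at the tipping point $\left(LS(t), FF(t)\right)$ found by Algorithm \ref{alg:EvalRobustness}. From these I would form the paired differences $Diff_m(t)$. The claim ``on average, the last successful solution is closer to the reference solution'' then becomes the statement that the population mean (or median) of $Diff_m(t)$ is strictly positive, for every $M$ and every $m$.

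The key steps, in order, are as follows. First, I would fix the bookkeeping for degenerate seeds. When the model fails on the very first paraphrase, $LS(t)$ is the seed $t$ itself, as in the initialisation on Line 14 of Algorithm \ref{alg:EvalRobustness}. I would keep these seeds but also report the analysis with them excluded, so that the conclusion does not hinge on them. Second, I would compute the descriptive statistics already displayed in Figure \ref{fig6}: the mean, the quartiles, and the proportion of seeds with $Diff_m(t) > 0$. Third, I would run a one-sided paired test of $H_0: \mathbb{E}[Diff_m] \le 0$ against $H_1: \mathbb{E}[Diff_m] > 0$. I would use the Wilcoxon signed-rank test as the main test, because graph edit distances are discrete and skewed, and the paired $t$-test only as a check. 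Because there are twelve $(M,m)$ combinations, I would apply a Bonferroni (or Holm) correction. I would also report an effect size, for example the fraction of positive differences with a binomial confidence interval. Fourth, I would check that the sign is stable across the different text-similarity metrics used to order the paraphrases. This matters because the tipping points themselves depend on that metric (Observation \ref{obs:RobustnessDependsOnSimilarity}).

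I expect the main obstacle to be the interpretation of ties and zeros rather than the test itself. Many pairs may give $Diff_m(t)=0$: for instance, $LS$ and $FF$ programs can differ only in behaviour but not in AST shape, and DFG/PDG extraction by Soot can collapse small differences. Zeros reduce the power of the signed-rank test, and the way they are handled (Pratt versus Wilcoxon zero treatment) could change the conclusions. I would first try dropping the zeros and reporting how many there are. If significance is fragile, I would fall back to the weaker but robust sign test on the nonzero differences. A second caveat to state explicitly is that the distances are measured to the \emph{reference} solution, not to $M(t)$. The Observation therefore relies on the original output $M(t)$ typically being close to the reference; I would check this by reporting the distance from $M(t)$ to the reference alongside the other results.
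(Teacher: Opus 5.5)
Your plan matches the paper's support for this observation: the paper computes $Diff_m(t) = Dist^{FF}_m(t) - Dist^{LS}_m(t)$ for AST, DFG and PDG for each LLM, and reads off from the box plots in Figure~\ref{fig6} that the means are well above zero and that most values lie above zero, which is exactly your second step. Your handling of degenerate seeds, the signed-rank and sign tests with multiplicity correction, and the sensitivity checks across text metrics all go beyond the paper, which reports no significance test at all; note, however, that for the literal ``on average'' claim the directly relevant test is the paired $t$-test on $\mathbb{E}[Diff_m]$ (valid here given the sample size), because the signed-rank test targets a pseudo-median rather than the mean.
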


The distribution of the differences in the distances to the reference solutions before and after tipping points as shown in Figure \ref{fig6} shows that on average there is a big jump in the distances to the reference solution before and after the tipping points. 

\begin{observation}
In most cases, there is a big increase in the distance to the reference solution before and after a tipping point.
\end{observation}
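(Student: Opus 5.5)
Since this statement is an empirical observation rather than a theorem, I would not prove it deductively. Instead I would establish it as a statistical claim about the tipping-point data already collected, by making the two vague quantifiers precise. ``In most cases'' would mean that, for each LLM and each code model $m \in \{AST, DFG, PDG\}$, the proportion of seeds $t \in T$ with $Diff_m(t) > 0$ is significantly above one half. ``Big increase'' would mean that the typical value of $Diff_m(t)$ is large compared with the natural variation of $Dist^{LS}_m(t)$ itself. I would commit to a concrete normalisation here: the relative jump $Diff_m(t)/\left(Dist^{LS}_m(t)+1\right)$, compared against the interquartile range of $Dist^{LS}_m$ over $T$.

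The steps would run in this order. First, for every seed $t$ on which Algorithm \ref{alg:EvalRobustness} found a tipping point $(LS(t), FF(t))$, I would compute $Dist^{LS}_m(t)$ and $Dist^{FF}_m(t)$ using the AST, DFG and PDG analysers. Seeds where $LS(t)$ is the seed itself would be kept but flagged, so that their effect can be checked separately. Second, for each LLM and each $m$, I would apply a one-sided sign test to the hypothesis $P(Diff_m(t) > 0) > 1/2$, which addresses the ``most cases'' part. I would also run a Wilcoxon signed-rank test on the pairs $\left(Dist^{LS}_m(t), Dist^{FF}_m(t)\right)$ to show that the shift is systematic and not driven by a few outliers. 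Third, I would report the median and quartiles of the normalised jump, giving the box plots of Figure \ref{fig6} a quantitative reading. A median relative jump well above the within-$LS$ spread would support the word ``big''.

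As a control, I would compare the $LS$-to-$FF$ jump with the distance between consecutive successful mutants $x_{k-2}, x_{k-1}$ just before the tipping point. This separates a genuine discontinuity in the generated code from a gradual drift that would also occur without a failure. I expect this control, and more generally the operationalisation of ``big'', to be the main obstacle. The test oracle $Fails^M_t$ already guarantees that the $FF$ program differs functionally from the original, so a positive $Diff_m$ is partly built in by selection. The substantive content of the observation is therefore the size of the jump relative to this baseline. I would first try the consecutive-success baseline. If too few seeds have two successful mutants before the tipping point, I would fall back on the distances between programs regenerated from the same prompt, which gives a sampling-noise baseline.

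Finally, I would check robustness of the conclusion across all ten text similarity metrics used for ordering the mutants, since the tipping point depends on that ordering (Observation \ref{obs:RobustnessDependsOnSimilarity}). I would state the observation only if the sign test and the effect size hold for every LLM under the RoBERTa and Euclidean orderings at least.
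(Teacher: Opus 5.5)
Your proposal is sound, but it takes a genuinely different route from the paper's. The paper gives no inferential argument. It reads off the box plots of $Diff_m(t)$ in Figure~\ref{fig6} (minimum, maximum, mean, standard deviation and central quartiles, for AST, DFG and PDG and each LLM). It notes that the means are positive and well above zero and that most of the data lie above zero, and takes this as showing a big jump. That costs nothing beyond the data already collected. However, \emph{big} is never made precise and there is no baseline. The supporting sentence also speaks of a jump ``on average'', which a few large outliers (visible in the maximum bars) could produce even if the typical jump were small; ``in most cases'' really calls for a median or quartile statement, not a mean. Your sign test plus median-based effect size closes exactly this gap. Your consecutive-success control, which the paper does not attempt, is what would separate a genuine discontinuity at the tipping point from gradual drift. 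The price is extra computation and a stricter criterion that could fail for some LLM or ordering. Two details need fixing. First, the selection effect is weaker than you suggest. $Dist^{LS}_m$ and $Dist^{FF}_m$ are measured against the benchmark's reference solution, whereas the oracle compares $M(x)$ with $M(t)$. Failing the oracle therefore bears on the reference only insofar as $M(t)$ agrees with it. Even then, functional difference need not mean a larger AST, DFG or PDG edit distance, so $Diff_m(t)>0$ is not forced; the paper's very next observation reports negative values. Second, comparing the dimensionless ratio $Diff_m(t)/\left(Dist^{LS}_m(t)+1\right)$ with the interquartile range of the raw distances $Dist^{LS}_m$ mixes units. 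Compare raw jumps with a raw spread (or with your consecutive-success baseline), or ratios with ratios.
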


Although the above two observations hold on average, there are exceptional cases and we have made the following observation. 

\begin{observation}
The distance to reference solution may decrease after a tipping point. In other words, it is possible that the first failure solution has a distance closer to the reference solution than the last success solution. 
\end{observation}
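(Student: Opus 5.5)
Since this statement is existential (``may decrease'', ``it is possible''), I would prove it by exhibiting a witness. I would give two witnesses: one from the experimental data, and one from a small construction showing that nothing in the framework rules it out.

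For the empirical witness, I would go back to the per-seed data behind Figure \ref{fig6}. For each LLM and each $m \in \{\mathrm{AST}, \mathrm{DFG}, \mathrm{PDG}\}$, I would compute $Diff_m(t) = Dist^{FF}_{m}(t) - Dist^{LS}_{m}(t)$ for every seed $t \in T$. The minimum bars in Figure \ref{fig6} lie below zero, so there is at least one $t$ with $Diff_m(t) < 0$, that is, $Dist^{FF}_{m}(t) < Dist^{LS}_{m}(t)$. I would report one such seed explicitly. That means giving its original description, its $LS(t)$ and $FF(t)$ paraphrases, the two generated programs, and the reference solution, so the reader can check the two edit distances directly.

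For the structural reason, I would point out that $Fails^M_t(x)$ is defined relative to the program $P = M(t)$ generated from the seed, not relative to the reference solution. Also, functional (in)equivalence on $TS(P) \cup TS(P')$ is independent of syntactic or graph edit distance. So I would construct an abstract scenario with the following ingredients.
\begin{enumerate}
\item Let $R$ be the reference solution and $P = M(t)$ a correct but stylistically different program, for example a loop instead of a library call.
\item Let $M(LS(t))$ be functionally equivalent to $P$ on the test set and syntactically close to $P$, hence far from $R$.
\item Let $M(FF(t))$ be a near-copy of $R$ with one altered constant or boundary condition. It then differs from $P$ on some test input, so $Fails^M_t(FF(t))$ holds.
\end{enumerate}
Nothing in Definition \ref{def:MinimalDistanceToFailure} or in Algorithm \ref{alg:ScenarioDomainAnalysis} constrains these code-level distances, so this configuration is consistent with the framework. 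It gives $Diff_m(t) < 0$.

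The main obstacle is the empirical step. The claim is only as strong as the data, and edit distances computed from Soot-generated DFGs and PDGs could be artefacts of tooling noise. So I would pick a witness where the inequality holds for the AST metric as well, and where the programs are small enough to verify the distances by hand.
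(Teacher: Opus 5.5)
Your proposal is correct, and its substantive part matches the paper's own support for this observation. The paper rests it on the distribution of $Diff_m(t)$ in Figure~\ref{fig6}: the averages are positive, but not all data lie above zero, and these are the ``exceptional cases'' with $Dist^{FF}_m(t) < Dist^{LS}_m(t)$. Your structural construction goes beyond the paper and is sound: failure is judged against $M(t)$ rather than the reference solution, and functional equivalence on $TS(P)\cup TS(P')$ is independent of AST/DFG/PDG edit distance. It shows only that the phenomenon is consistent with the framework, though, so, as you note, the empirical witness is what actually establishes the claim.
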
 

The second analysis of the tipping points aims to provided a comprehensible explanation of the $R^{o}$ and $R^*$ metrics. 
We analysed the paraphrases of the first failed mutants to find out how many words were changed and the semantic similarity between the original words and the replacing words. 

In order to achieve this goal, we calculated the following two values of the paraphrases from each first failed mutants. 
\begin{itemize}
\item $k$: the number of words in the original text description of coding task in the seed test case replaced by different words to generate the paraphrase of the mutant test case. 
\item $n$: the largest distance between a word in the original text description of coding task of the seed and the word used to replace it to generate the paraphrase; here, $n$ is the order of the replacement word in the neighbourhood of the original word in the word embedding vector space.  
\end{itemize}

Table \ref{tab:averageNandK} shows the average values of $n$ and $k$ over all test cases for different metrics used in the robustness evaluation. 

\begin{table}[h]
\centering
\caption{Average Values of $n$ and $k$ of the First Failure Paraphrases}
\label{tab:averageNandK}
\begin{scriptsize}
\begin{tabular}{|l|c|c|c|c|c|c|c|c|}
\hline
\multirow{2}{*}{\textbf{Metric}} &\multicolumn{2}{|c|}{\textbf{Codex}} &\multicolumn{2}{|c|}{\textbf{GeminiPro} }&\multicolumn{2}{|c|}{\textbf{Llama2}}&\multicolumn{2}{|c|}{\textbf{Falcon}}\\ \cline{2-9}
&$n$ &$k$ &$n$ &$k$ &$n$ &$k$ &$n$ &$k$ \\ \hline
RoBERTa&3.67&2.03&3.86&2.73&2.65&2.43&2.67&2.03\\ \hline
ChrF&2.67&2.62&2.68&2.03&2.46&2.11&2.64&2.22\\ \hline
Euclidean&2.76&2.73&2.67&2.48&2.47&2.32&2.68&2.03\\ \hline
Cosine&2.86&2.23&2.67&2.33&2.48&2.41&2.38&2.33\\ \hline
BLEU&2.63&2.09&2.22&2.63&2.63&2.03&2.63&2.04\\ \hline
METEOR&2.68&2.03&2.45&2.43&2.58&2.55&2.62&2.38\\ \hline
Levenshtein-Char&2.57&2.03&2.56&2.03&2.57&2.03&2.57&2.03\\ \hline
Levenshtein-Word&3.60&1.42&3.56&1.33&2.46&1.09&2.42&1.14\\ \hline
ROUGE-N&2.77&2.2&2.99&2.16&2.46&2.11&2.33&2.12\\ \hline
ROUGE-L&2.57&2.32&2.77&2.34&2.41&2.14&2.21&2.03\\ \hline\hline
\textbf{Average}&\textbf{2.88}&\textbf{2.17}&\textbf{2.84}&\textbf{2.25}&\textbf{2.52}&\textbf{2.12}&\textbf{2.52}&\textbf{2.04}\\\hline
\end{tabular}
\end{scriptsize}
\end{table}

From the data presented in Table \ref{tab:averageNandK}, we have the following observation. 

\begin{observation}
Assume that a LLM generates a correct program from a natural language description of the coding task. Then, on average, the LLM will generate an incorrect program code if one changes 2 to 3 words in the description of the coding task by replacing them with words that are the 2nd to 3rd closest to the original words in meaning. If one changes the description smaller than that, the LLM will most probably still be able to generate a correct code. 
\end{observation}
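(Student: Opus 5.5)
This observation is an empirical claim, so my plan is to justify it from the tipping-point data in Table \ref{tab:averageNandK}, combined with the safe-zone results of Section \ref{sec:ProposedMethod}. It has two parts, and I would handle them separately. The first part says that changing about 2 to 3 words, using replacements ranked 2nd to 3rd, typically produces an incorrect program. The second part says that smaller changes typically do not.

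For the first part, I would read off the bottom row of Table \ref{tab:averageNandK}. Averaged over the ten similarity metrics, the first-failure paraphrases have $n \in [2.52, 2.88]$ and $k \in [2.04, 2.25]$ for all four LLMs. I would also check the per-metric rows to show the claim is not an artefact of averaging. Nearly every entry has $n$ between about 2.2 and 3.9 and $k$ between about 2.0 and 2.7. The only systematic exception is Levenshtein-Word, where $k$ falls to about 1.1 to 1.4 but $n$ rises to about 3.6 for Codex and Gemini-Pro. I would argue this is consistent with the claim: an edit-distance metric orders paraphrases mainly by $k$, so fewer but more distant replacements are tested first. By definition, $FF(t)$ is the first paraphrase on which $Fails^M_t$ holds. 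So these averages directly describe the typical size of the minimal change that causes failure.

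For the second part, I would appeal to the theory. By Lemma \ref{lmm:lmm1}, every failing mutant $x \in Mut(t)$ satisfies $FF(t) \preccurlyeq_t x$. Hence any paraphrase strictly closer to $t$ than $FF(t)$ is a success. When the coverage condition holds, Theorem \ref{thm:Ro} and Lemma \ref{lmm:lmmA} extend this to all of $D^s$: every $x$ with $\|x,t\| < \Delta_{D^s}(t,M) = \|t, FF(t)\|$ is not a failure. To turn "closer in the metric" into "fewer or nearer word replacements", I would rely on the fact that all the metrics studied are essentially monotone in $k$ and in the embedding rank $n$.

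I expect this last step to be the main obstacle. That monotonicity holds only approximately, as the paper itself notes when discussing Table \ref{tab:Paraphrases}. The conclusion can therefore only be stated probabilistically ("most probably"). I would support it by the consistency of the $n,k$ averages across all ten metrics.
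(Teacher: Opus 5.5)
Your first part is the paper's entire justification: the paper simply reads the observation off the averages of $n$ and $k$ for $FF(t)$ in Table \ref{tab:averageNandK}, and your figures are accurate. The gap is the bridge in your second part. Lemma \ref{lmm:lmm1}, and under coverage Lemma \ref{lmm:lmmA} with Theorem \ref{thm:Ro}, give for each $t$ a safe zone measured \emph{in the chosen metric}. Likewise, $FF(t)$ is the minimal failing change only in that metric's order, not in $(k,n)$. Restating this as ``fewer or nearer replacements are safe'' rests entirely on your monotonicity assumption, and the paper does not endorse it even approximately. Its remark on Table \ref{tab:Paraphrases} is that similarity ``cannot be easily determined by the parameters $n$ and $k$'', which is why mutants are sorted by a metric at all. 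Word-level Levenshtein, BLEU and ROUGE are essentially blind to the rank $n$: a rank-$5$ replacement costs the same as a rank-$1$ one. GloVe rank also need not track sentence meaning; \emph{protagonist} is the 2nd neighbour of \emph{character} in Table \ref{tab:Neighbourhood}. Even for a monotone metric, passing from per-task safe zones to ``a change below the \emph{average} tipping point is most probably safe'' needs the spread of $FF(t)$ across tasks. Means taken over tasks and over ten different orderings do not give that.

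The Levenshtein-Word row, which you explain away, actually shows that the monotonicity fails on this very data. All metrics order the same paraphrase sets. Word-level Levenshtein places every one-word paraphrase (distance $1$) before every paraphrase with two or more replacements, because equal-length word sequences differing in at least two positions are at least two edits apart. So under this metric $FF(t)$ has $k=1$ exactly when some one-word paraphrase fails. Let $p$ be the fraction of such tasks and $\bar k$ the reported mean $k$. Then $\bar k \ge p+2(1-p)$, so $p \ge 2-\bar k$: at least 91\% of tasks for Llama2 and 86\% for Falcon. Any ordering essentially monotone in $k$ would also put $FF(t)$ at $k=1$ on those tasks, up to run-to-run nondeterminism of the LLM. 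Instead, every other metric reports mean $k$ between $2.03$ and $2.55$ for these two models. Those orderings therefore test two-word paraphrases before failing one-word ones, and their safe zones do not contain all one-word changes.

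Moreover, for Llama2 and Falcon the Levenshtein-Word $FF(t)$ has mean $n$ of only about $2.4$--$2.5$. Your ``fewer but more distant'' reading therefore fits only Codex and Gemini-Pro. For the other two models, the typical tipping point under a $k$-ordering is one replacement of rank about $2.5$, which is below the observation's threshold.

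What the data do support is the first sentence, as an empirical summary of where $FF(t)$ falls. They also support the metric-relative fact that every paraphrase closer to $t$ than $FF(t)$ succeeds. A $(k,n)$ version of the second sentence, if true at all, is a claim about how \emph{often} small paraphrases fail. That needs failure frequencies, which neither Table \ref{tab:averageNandK} nor the lemmas provide.
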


In conclusion, from the above observations, we can conclude that the proposed method is effective in the sense that it is capable of providing meaningful measurements of LLMs that can distinguish different LLMs on robustness in code generation and the robustness of a LLM on various coding scenarios. 

\subsection{Efficiency}

We now address research question \emph{RQ2} on the efficiency of the proposed method. We will assess the feasibility of the proposed method by the computational resource required on computation time, memory space, and the cost of using the LLMs under test. 

\subsubsection{Cost of Using LLMs} 

First, we measure the cost of using the LLM under test by the number of queries to the ML models for each similarity metric. The data on the average numbers of LLM queries for each of the four LLMs and various distance metrics are given in Table \ref{tab:AverageLLMQueries}. 

\begin{table}[ht]
\centering
\caption{Average Number of LLM Queries Per Test Case}
\label{tab:AverageLLMQueries}
\begin{scriptsize}
\begin{tabular}{|l|c|c|c|c|c|c|c|}
\hline
\textbf{Metric} & \textbf{GeminiPro} &\textbf{Codex} &\textbf{Falcon} &\textbf{Llama2}\\ \hline
BLEU &14.82&14.13&13.79&13.65\\ \hline
ChrF &16.02&15.34&14.97&14.85\\ \hline
Levenshtein-Char &16.63&15.96&15.56&15.44\\ \hline
Levenshtein-Word &14.98&14.35&13.97&13.87\\ \hline
ROUGE-N &15.27&14.59&14.23&14.08\\ \hline
ROUGE-L &15.46&14.45&14.09&13.94\\ \hline
METEOR &14.45&13.77&13.42&13.29\\ \hline
RoBERTa &14.80&14.10&13.76&13.62\\ \hline
Cosine &14.99&14.32&13.99&13.85\\ \hline
Euclidean &14.90&14.24&13.91&13.78\\ \hline
\end{tabular}
\end{scriptsize}
\end{table}

From the data presented in Table \ref{tab:AverageLLMQueries}, we observed that, first, in comparison with adversarial robustness evaluation methods, the proposed method is much more efficient and less expensive in terms of the required number of LLM queries. They only require fewer than 17 queries of the LLM on average. Second, there is only a very small variation in the average numbers of queries to the LLMs required by different metrics. So, we have the following conclusion. 

\begin{observation}
From the cost of using the LLM under test point of view, the proposed method is feasible for practical use because it only requires a very modest number of queries to the LLM under test.  
\end{observation}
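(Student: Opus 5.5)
The observation is empirical, so the plan is to support it with evidence rather than deduce it. It rests on two pieces: a structural bound on the number of model invocations made by Algorithm~\ref{alg:EvalRobustness}, and the measured values in Table~\ref{tab:AverageLLMQueries}.

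First I would make the cost model precise. For a seed $t$, Algorithm~\ref{alg:EvalRobustness} queries $L_m$ once on $t$ (Line~4). It then queries once for each paraphrase in $ListPD$, in ascending order of distance, up to and including $FF(t)$, after which it breaks. If $FF(t)$ is the $k_t$-th element of the merged sorted list of paraphrases, the cost for $t$ is exactly $1+k_t$. This holds even when domain expansion occurs, because by the Corollary of Theorem~\ref{thm:Expansion} only the new mutants $Mut'(t)-Mut(t)$ are tested, and no already-tested paraphrase is re-queried. The total cost is therefore $\sum_{t\in T}(1+k_t)$, and the average per test case is $1+\bar{k}$. Feasibility thus reduces to showing that $\bar{k}$ is small. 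The point of the cost model is that the count does not grow with the size of the candidate mutant set $Mut(t)$; it depends only on how early the first failure occurs in the distance ordering.

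Second, I would instantiate this with the data. Table~\ref{tab:AverageLLMQueries} reports averages between $13.29$ and $16.63$ over all four LLMs and all ten distance metrics, so $\bar{k}<17$ uniformly. This is consistent with Table~\ref{tab:averageNandK}: a first failure typically arises at $k\approx 2$ replaced words and neighbour rank $n\approx 3$, which lies early in the sorted list of paraphrases. I would then compare against the baselines from Section~\ref{sec:AdversarialRobustness}. Decision-based black-box attacks need on the order of $10^3$ queries per input even in their state-of-the-art form, and up to millions in the original Boundary Attack. Our cost is therefore roughly two orders of magnitude lower. I would also note that the spread across metrics is small, under 4 queries, so the conclusion does not depend on which metric is chosen.

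The main obstacle is that a small \emph{average} does not rule out expensive outliers. In particular, repeated domain expansions on very robust seeds could each trigger many queries. To address this I would report, or at least argue, a bound on the tail using the expansion parameters $c_n$ and $c_k$ together with $k\le |t|$. Since Table~\ref{tab:StatisticsOfTestDataset} gives $|t|\le 35$, the number of expansion rounds is finite. The worst case is then bounded by the total number of paraphrases generated, and the empirical average shows this worst case is rarely approached. Computation time and memory for paraphrase generation and sorting are local and cheap compared with LLM calls, so the query count is the dominant cost and the observation follows.
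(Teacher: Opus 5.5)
Your main evidence is the paper's own support for this observation, and it suffices. That evidence is the measured averages of $13.29$ to $16.63$ queries in Table~\ref{tab:AverageLLMQueries}, the comparison with the thousands-to-millions of queries per input needed by decision-based black-box attacks, and the small spread across the ten metrics. The problem is the structural cost model you add on top: it is wrong whenever expansion occurs.

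In Algorithm~\ref{alg:EvalRobustness}, expansion is triggered only after \emph{every} paraphrase in the initial set $Mut(t)$ has been queried and has passed. The new mutants are then queried in order until $\alpha^o$. So the cost for such a seed is $1+|Mut(t)|+j$, where $j$ is the rank of $\alpha^o$ within $Mut'(t)-Mut(t)$. It is not $1+k_t$ with $k_t$ the rank of $FF(t)$ in the merged list. The difference comes from old mutants farther from $t$ than $\alpha^o$: they were queried, but they do not precede $FF(t)$ in the merged order. The Corollary to Theorem~\ref{thm:Expansion} only guarantees that nothing is re-queried and that $\beta^o,\beta^*$ are computed correctly; it says nothing about the count equalling the merged rank.

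Consequently, your claim that the cost ``does not grow with the size of $Mut(t)$'' fails exactly for the seeds that need expansion. Each of these costs at least $|Mut(t)|+2$ queries, and in the experiment $|Mut(t)|$ averages about $236$ and reaches $1770$. Your tail argument also does not go through as stated. The bound $k\le |t|\le 35$ caps only $k$, whereas Algorithm~\ref{alg:EvalRobustness} increases $n$ by $c_n$ with no cap, so a finite number of expansion rounds does not follow from task length. Even when the number of rounds is finite, the resulting worst-case bound is nowhere near $17$.

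The sound way to deal with outliers is the paper's. The tabulated figures are measured averages, so they already include whatever expansions actually happened. The paper also reports that with $n=k=5$, about twice the average tipping-point values, expansion was very rarely needed.

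Your remark that a first failure at $k\approx 2$, $n\approx 3$ ``lies early in the sorted list'' is also unsupported. The list is ordered by text distance, and the paper explicitly notes that this distance is not determined by $n$ and $k$.

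Either drop the structural model or restrict it to the no-expansion case. There the cost is exactly $1$ plus the rank of $FF(t)$ in $Mut(t)$, and the observation then rests, as in the paper, on the measured data.
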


\subsubsection{Computation Time and Memory Space Consumptions} 

The required computation time for each test case consists of the times for each of the following steps. 
\begin{enumerate}
\item Generating a set of paraphrases. 
\item Calculating the distances between paraphrases and the original test case. 
\item Sorting the paraphrases according to distances.
\item Invoking the LLM to generate code from a paraphrase.  
\item Compiling and testing the generated code to determine if the LLM generated code is correct.
\end{enumerate}

The final two steps are repeated until incorrect code has been generated from a paraphrase by the LLM. 

From the analysis above, a key factor of the computational time required for each test case is the number of paraphrases generated from coding task description. In the experiments, we recorded the number of paraphrases generated from each test case, where the parameters are $n=5$ and $k=5$, which are about twice the average values of $n$ and $k$ required for tipping points; see Table \ref{tab:averageNandK}. 

The experiment results are: the average number of paraphrases over all 900 test cases in the test dataset is 235.79, while the minimum and maximum number of paraphrases are 28 and 1770, respectively. The standard deviation is 269.74. Figure \ref{fig:Paraphrases} shows the distribution of the number of paraphrases generated from test cases. 

\begin{figure}[htb]
\centering
\includegraphics[width=8cm]{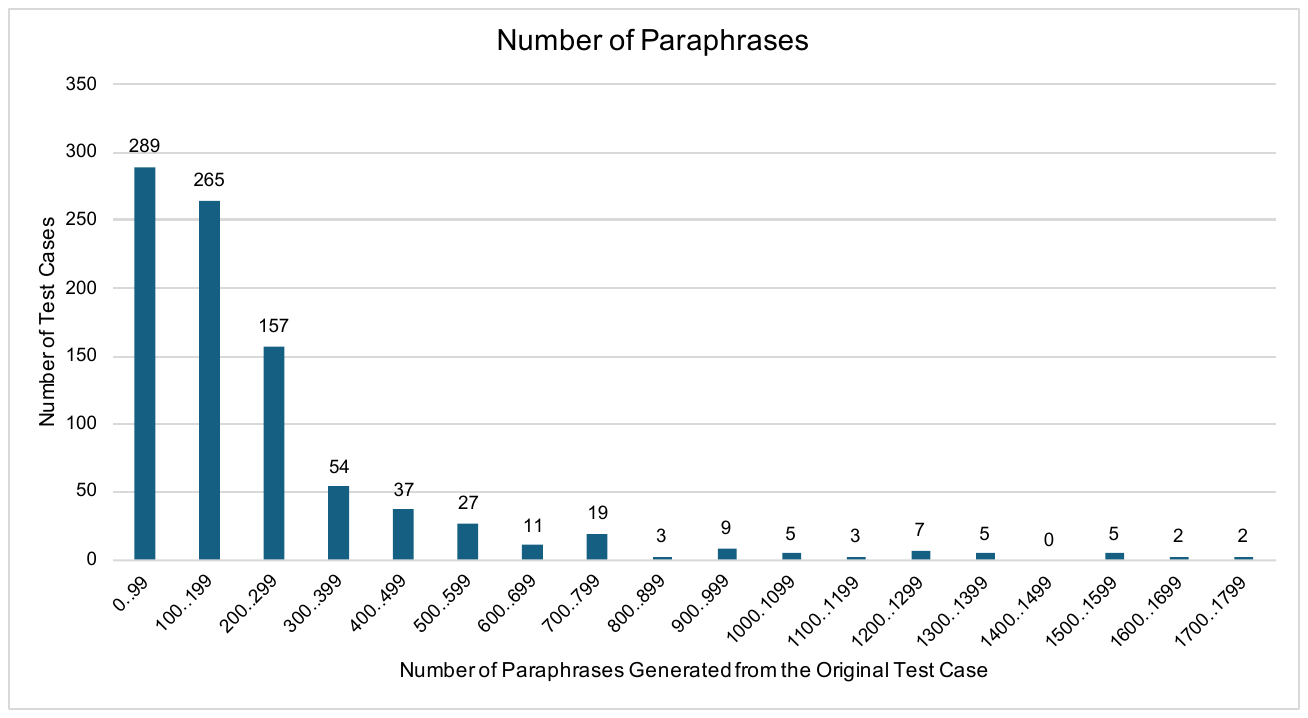} \\ 
\caption{Distribution of the Number of Generated Paraphrases}
\label{fig:Paraphrases}
\end{figure}

Therefore, we have the following observation. 

\begin{observation}
The generation of paraphrases and the sorting of them according to distance from the original coding task description can be very efficient. 
\end{observation}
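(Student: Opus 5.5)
The plan is to back this observation with two things: an asymptotic cost bound for Steps 1–3 of Algorithm \ref{alg:EvalRobustness}, and a comparison of that bound against the recorded sizes of the paraphrase sets. Let $m_t = |P_t|$ be the number of paraphrases generated from seed $t$, and let $L(t)$ be the length of $Seq(t)$.

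\emph{Generation.} Retrieving the neighbourhood costs one nearest-neighbour query to the GloVe model per word, so $L(t)$ queries in total; this is independent of $m_t$. Each application of $ReplaceWord$ then costs $O(L(t))$, so generating $P_t$ costs $O(L(t)\cdot m_t)$. The combinatorial bound is
\[ m_t \le \sum_{j=1}^{k}\binom{L(t)}{j} n^j , \]
but I would argue that the measured $m_t$ is far smaller in practice. Table \ref{tab:StatisticsOfTestDataset} gives an average input length of about 17–25 words. With $n=k=5$ we observed $m_t$ of 235.79 on average and at most 1770. So generation is linear in a few hundred mutants, each only a few dozen words long.

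\emph{Distance computation.} Each metric in Table \ref{tab:ImplOfSimilarityMetrics} is evaluated once per paraphrase. The overlap-based metrics and Levenshtein cost $O(L(t)^2)$ or less, the embedding-based metrics (Euclidean, Cosine) cost one embedding plus $O(\dim)$, and RoBERTa costs one local forward pass on a short sentence pair. Step 2 is therefore $O(m_t\cdot c_{metric})$ with a small constant $c_{metric}$. Sorting then costs $O(m_t\log m_t)$. Even at $m_t=1770$ this is only about $2\times 10^4$ comparisons, and memory stays $O(m_t\,L(t))$.

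\emph{Comparison and main obstacle.} The final step is to set this bound beside the dominant cost, namely the LLM queries. By Table \ref{tab:AverageLLMQueries} there are fewer than 17 LLM queries per test case, and each one involves a remote call plus compiling and testing the generated code. That per-query cost dwarfs the local work on a few hundred short strings. The main obstacle is that ``efficient'' is an empirical rather than a formal claim. The worst-case bound on $m_t$ is exponential in $k$, so no uniform bound can be proved. I would address this by relying on the observed distribution in Figure \ref{fig:Paraphrases}, which shows a long tail but a maximum of 1770. I would add the remark that Theorem \ref{thm:Expansion} lets the mutant set start small and be expanded incrementally, so $k$ never needs to be large in advance. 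If they were available, I would also report measured wall-clock times for Steps 1–3 as direct confirmation.
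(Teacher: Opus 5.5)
Your argument for what the observation actually asserts matches the paper's. The paper names the number of generated paraphrases as the key cost factor, and it supports the claim only with the recorded set sizes at $n=k=5$: mean 235.79, minimum 28, maximum 1770, standard deviation 269.74, with the distribution shown in Figure \ref{fig:Paraphrases}. Your $O(L(t)\,m_t)$ and $O(m_t\log m_t)$ bounds just make explicit why a few hundred short strings are cheap to generate and sort. As you concede, the combinatorial bound itself cannot deliver the claim: it is of order $2\times 10^{7}$ for $L(t)\approx 17$ and $n=k=5$. The empirical counts carry the claim, exactly as in the paper.

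The part to drop is your treatment of the distance computation. You fold it into the cheap local work, with a ``small constant $c_{metric}$'' that the LLM queries dwarf. The paper measures this step directly in Table \ref{tab:distanceMetricTime}. It takes from about 0.24 to 3.97 seconds per paraphrase. That is roughly one minute per test case for BLEU and ChrF, but about 16 minutes for RoBERTa, Cosine and Euclidean, which the paper calls one of the most time consuming steps. This is why the observation covers only generation and sorting; the distance step is claimed merely to be affordable, in the later observation. So the wall-clock evidence you ask for already exists for this step, and it contradicts your assumption. Your argument should not rely on the distance computation being negligible.
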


We have also recorded the CPU time spent on calculating the distances (or similarity scores) between the paraphrases and the original coding task descriptions for each distance metric. Note that, given a distance metric, the time to calculate the distance between two texts may depend on the lengths of the texts. Table \ref{tab:distanceMetricTime} shows the average time required to calculate one distance as well as the minimum, maximum times (in seconds) and the standard deviations.  

\begin{table}[ht]
\centering
\caption{Times to Calculate the Distance bwt The Original Task Description And A Paraphrase}
\label{tab:distanceMetricTime}
\begin{scriptsize}
\begin{tabular}{|l|c|c|c|c|}
\hline
\textbf{Metric} &\textbf{Average} &\textbf{StDev} &\textbf{Max} &\textbf{Min}\\\hline
BLEU&0.2399&0.0078&0.2922&0.2304\\\hline
ChrF&0.2489&0.0078&0.297&0.2394\\\hline
Levenshtein-Char&0.4730&0.0079&0.5229&0.4634\\\hline
Levenshtein-Word&0.4799&0.0079&0.5248&0.4704\\\hline
RougeN&0.8812&0.0082&0.9333&0.8714\\\hline
RougeL&0.9119&0.0078&0.9629&0.9024\\\hline
Meteor&1.1671&0.0082&1.2213&1.1575\\\hline
RoBERTa&3.6302&0.0082&3.6795&3.6205\\\hline
Cosine&3.7341&0.0077&3.7751&3.7244\\\hline
Euclidean&3.9699&0.0077&4.0172&3.9604\\\hline
\end{tabular}
\end{scriptsize}
\end{table}

From Table \ref{tab:distanceMetricTime}, we can observe that, first, the time required to calculate distances varies significantly among different metrics, ranging from 230 ms to nearly 4 seconds per paraphrase. Second, for a given distance metric, the variations on the times required to calculate the distances over the set of test cases is very small. Therefore, depending on the distance metric used, the time to calculate the distances takes from about 1 minute for BLEU and ChrF to about 16 minutes for RoBERTa, Cosine and Euclidean on average per test case. If the distance metric is one of RoBERTa, Cosine and Euclidean, calculating the distances for the set of paraphrases is one of the most time consuming steps. 

The required time for the final step consists of the time taken to query from the LLM, generate test data from the generated code, compile and execute the generated code on the test cases, and determine the correctness of the generated code. Our experiment data shows that on average each query of the LLM may take about 18 seconds, while generation of test data takes about 320 seconds and the execution of the code takes about 16 seconds. In total, the final two steps take about 350 seconds. The generation of test data from the LLM generated code required the most computation time because it is implemented by employing the EvoSuite software testing tool, which is built on an evolutionary computing algorithm. 

The memory space consumed in the testing process consists mostly of the space used to store the set of paraphrases together with their distances to the original text. On average, less than 7MBs of memory space is required for each test case. Note that when the test on one coding task completes, the memory space can be released. Therefore, the demand on memory space is minimal. 

From the above experiment data, we can now make the following conclusion. 

\begin{observation}
The computation time and memory space required to perform robustness test in our proposed method are affordable even when the most time consuming distance metrics are used. 
\end{observation}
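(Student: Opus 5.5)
Since this Observation is an empirical claim rather than a formal theorem, I will not derive it from the lemmas and theorems of Section~\ref{sec:ProposedMethod}. Instead I will build a per-test-case cost model for Algorithm~\ref{alg:EvalRobustness}, plug in the measurements reported above, and check that the resulting time and memory bounds fall within a reasonable budget. The model follows the five-step list of computation-time components:
\[ \mathit{Time}(t) = G(t) + N_t\cdot d_{\|\cdot\|} + S(N_t) + Q_t\cdot(q + g + e), \]
where the symbols are as follows.
\begin{itemize}
\item $G(t)$ is the time to generate the paraphrases.
\item $N_t$ is the number of paraphrases of $t$.
\item $d_{\|\cdot\|}$ is the time to compute one distance (Table~\ref{tab:distanceMetricTime}).
\item $S(N_t)=O(N_t\log N_t)$ is the sorting time.
\item $Q_t$ is the number of LLM queries (Table~\ref{tab:AverageLLMQueries}).
\item $q,g,e$ are the query, test-generation and execution times, measured at about $18$, $320$ and $16$ seconds.
\end{itemize}

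\textbf{Step 1: bound the terms independent of the LLM.} Paraphrase generation and sorting are negligible, because $N_t$ is small: $235.79$ on average and at most $1770$. For the distance term, I take the worst metric (Euclidean, about $4$\,s). This gives roughly $16$ minutes on average. In the extreme case $N_t=1770$ it gives about two hours; I will flag this tail explicitly rather than hide it in the average.

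\textbf{Step 2: bound the testing loop.} Every entry of Table~\ref{tab:AverageLLMQueries} is below $17$, and $q+g+e\approx 350$\,s. So the loop costs under about $100$ minutes per test case on average, and is dominated by EvoSuite test generation rather than by the robustness method itself.

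\textbf{Step 3: bound memory and total cost.} Memory per test case is under $7$\,MB and is freed after each task, so peak memory does not grow with $|T|$. Total time is linear in $|T|$, and the test cases are independent, so they can be parallelised across machines.

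\textbf{Main obstacle.} The difficulty is not the arithmetic but making ``affordable'' precise, and handling the heavy tail in $N_t$ (standard deviation $269.74$) together with any variance in $Q_t$. To address this, I would first state the conclusion relative to a concrete budget, namely a few hours per task on commodity hardware. Second, I would argue the tail is controllable, since a cheaper metric such as BLEU or ChrF cuts the distance term to about a minute. Third, I would note that domain expansion only adds $GenParaphrases(t,n,k,WE)-P_t$ new paraphrases, so it does not blow up the cost.
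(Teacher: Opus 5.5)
Your proposal is correct and takes essentially the same route as the paper. The paper also splits the per-task cost into the five listed steps and rests the observation on the same measurements: the paraphrase counts (average $235.79$, maximum $1770$), per-distance times of up to about $4$\,s giving roughly $16$ minutes per task for RoBERTa, Cosine and Euclidean, about $350$\,s per query/test-generation/execution round dominated by EvoSuite, under $7$\,MB of memory released after each task, and parallel execution for large experiments. Your explicit multiplication of that $350$\,s by the fewer than $17$ queries, and your worst-case tail estimate, are a more careful accounting than the paper gives. Two side remarks do not bear on the claim: switching to BLEU or ChrF is beside the point for a statement about the most time-consuming metrics, and the paper's own reason that expansion does not matter is simply that it is very rarely triggered.
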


It is worth noting that it is still time consuming to conduct a large scale experiment with thousands of test cases. However, experiments can be conducted by executing the algorithm on multiple desktop computers in parallel. 

%

\subsection{Assessment of Text Similarity Metrics}\label{sec:ImpactOfSimilarityMetrics}

There are a wide range of similarity metrics for natural language text that are available to measure the distances between natural language descriptions of coding tasks. Our experiments reported in Section \ref{sec:Experiment} shows that the robustness metrics $R^o$ and $R^*$ vary with the choice of the distance metric although they all provide valid results. Thus, research question \emph{RQ3} arises: which similarity metric is most appropriate for the proposed method? 

From the test process of the proposed method, it is apparent that the accuracy of the robustness metrics $R^o$ and $R^*$ and the usefulness of the evaluation results depend on whether the distance metric on text can separate paraphrases from each other and order them meaningfully according the the distance to the original text. Thus, we introduce the notion of \emph{distinguishability} of distance metrics to guide the selection of distance metric. 

\emph{Distinguishability} is a property of distance metrics on text data space that concerns how well paraphrases can be separated from each other according to their distances to the original text. For a distance metric to be useful in our method, it should be able to distinguish paraphrases according to their distances to the original text. Formally, we have the following definition. 

\begin{definition}(Distinguishability of Distance Metrics)

Let $\|\cdot, \cdot\|$ be a distance metric on texts. Suppose $x$ is an original text, and $x'$ and $x''$ are two paraphrases of $x$. We say that $x'$ and $x''$ are \emph{distinguishable} by the metric $\|\cdot, \cdot\|$ with respect to $x$, if and only if $\|x, x'\| \neq \|x, x''\|$. 
\qed
\end{definition}

However, it is hard to formally derive the distinguishability property from the definition of a distance metric. Therefore, in this paper, we conduct experiments to evaluate the distinguishability of commonly used distance metrics on the text data space. 

We will first define a set of metrics to measure distinguishability, and then report an experiment to demonstrate which similarity/distance metric performs best. 

\subsubsection{Measuring Distinguishability}

The following three metrics for measuring distinguishability are calculated and compared.


\noindent{\emph{A. Uniqueness}} 

Given a distance metric, if a paraphrase $x$ has a unique distances to the original text $t$ among a set $E(t)$ of paraphrases, it can be distinguished from all other paraphrases in $E(t)$ by its distance. The more paraphrases in $E(t)$ have unique distances, the better distinguishability the metric has. Thus, we define \emph{uniqueness score} as the proportion of paraphrases in $E(t)$ that have a unique distance. Formally, let $T$ is the test dataset. 
The \emph{uniqueness score} is calculated as follows. 

\begin{equation}
   Unique(T)= \frac{1}{|T|}\sum_{t \in T} \left(\frac{|\{ x \in E(t) | Unique_{E(t)}(x) \}|}{|E(t)|}\right),
\end{equation}
where 		
%
\[
    Unique_{E(t)}(x) = \forall x' \in E(t) \left(x \neq x' \Rightarrow \|x, t\| \neq \|x', t\| \right)
\]

\noindent{\emph{B. Distinctness}}
		
A distance metric is more capable of distinguishing paraphrases if it can produce more distinct values for the distances on a set of paraphrases. It can provide a good measurement of distinguishability even if the uniqueness score is not high. We calculate the distinctness score of a distance metric using the following formula. 

\begin{equation}
 Distinct(T) = \frac{1}{|T|}\sum_{t \in T}\left(\frac{\left| \{ \|x, t\| ~|~ x \in E(t) \} \right|}{|E(t)|}\right)
\end{equation}
    
\noindent{\emph{C. Differentness}}

When paraphrases are more spread out in terms of their distances to the original text, they are easier to differentiate. We define the average absolute differences of the distances by the following formula. 

\begin{equation}
Difference(T) = \frac{1}{|T|}\sum_{t \in T}\left(\frac{\sum_{x, y \in E(t)} |~ \|x, t\| - \|y, t\| ~|}{|E(t)|^2}\right) \label{equ:DifferenceScore}
\end{equation}

In equation (\ref{equ:DifferenceScore}), the numerator calculates the sum of the absolute differences between the distances of all pairs of paraphrases $x$ and $y$ in the set $E(t)$ from the original text $s$ while the denominator, $|E(t)|^2$, represents the total number of combinations of paraphrases when considering all pairs, accounting for both $(x, y)$ and $(y, x)$.

\subsubsection{Experiment Results} 
In the experiment, the test dataset is the same $D_1$ as in the experiment reported in the previous subsections; see Subsection \ref{sec:designOfExperiment}. For each $t$ in $D_1$, $E(t)$ is the same set of paraphrases used in the experiments reported in the previous experiment. They are paraphrases generated with $n$ varying in the range from a minimum of 1 to a maximum of 5 nearest neighbours, and the value of $k$ varies from 1 up to 11, limited by the length of the coding task description. 

The results are give in Table \ref{tab:uniqueness}. 

\begin{table}[h]
\centering
\caption{Comparison of Distance Metrics}
\label{tab:uniqueness}
\begin{scriptsize}
\begin{tabular}{|l|c|c|c|}
\hline
\textbf{Metric} & \textbf{Uniqueness (\%)} & \textbf{Distinctness} & \textbf{Differentness} \\ \hline
RoBERTa&100.00&1.0000&0.2334 \\ \hline
ChrF&93.22&0.9655&0.2169 \\ \hline
Euclidean&73.94&0.8721&0.2010 \\ \hline
Cosine&35.35&0.6104&0.0789 \\ \hline
BLEU&14.45&0.3242&0.1291 \\ \hline
METEOR&0.00&0.0804&0.1454 \\ \hline
Levenshtein-Char&0.55&0.0661&0.2267 \\ \hline
Levenshtein-Word&0.00&0.0101&0.3054 \\ \hline
ROUGE-N&0.00&0.0452&0.1576 \\ \hline
ROUGE-L&0.00&0.0374&0.0912 \\ \hline
\end{tabular}
\end{scriptsize}
\end{table}

In Table \ref{tab:uniqueness}, the Uniqueness column gives the uniqueness scores of various distance metrics. 
The Distinctness column of Table \ref{tab:uniqueness} gives the distinctness score of various distance metrics. 
Column Differentness gives the differentness score, i.e. the average differences in distances for various distance metrics. Since similarity metrics operate on different scales, we normalized them using min-max normalization to ensure that all metrics are within the range $[0, 1]$. 

From the experiment data given in Table \ref{tab:uniqueness}, the following observations can be made. 

\begin{observation}
RoBERTa distance consistently outperforms all other metrics in uniqueness, distinctness, and difference. Each paraphrase is assigned a unique distance to the original text. The average differences in the distances to the seed over all pairs of paraphrases is around 0.23. 
\end{observation}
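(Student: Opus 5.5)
The observation is an empirical claim, so I would establish it directly from the data in Table \ref{tab:uniqueness}, using the definitions of the three distinguishability scores. I would check the three sub-claims (uniqueness, distinctness, differentness) one at a time against every other row of the table. The middle claim, that each paraphrase receives a unique distance, I would derive from the first rather than check separately.

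\emph{Uniqueness.} RoBERTa attains $Unique(T)=100\%$. Since $Unique(T)$ is an average of ratios each bounded by $1$, this forces $Unique_{E(t)}(x)$ to hold for every $t\in T$ and every $x\in E(t)$. That is exactly the statement that each paraphrase is assigned a distance to its original text which no other paraphrase of the same seed shares. This value is the maximum possible, and the next best metric (ChrF, $93.22\%$) is strictly lower.

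\emph{Distinctness.} For each $t$, the map $x\mapsto\|x,t\|$ restricted to $E(t)$ is injective when every $x$ is unique. Hence $|\{\|x,t\| : x\in E(t)\}|=|E(t)|$, every summand of $Distinct(T)$ equals $1$, and so $Distinct(T)=1$. This matches the tabulated $1.0000$, which again is the maximum and strictly exceeds ChrF's $0.9655$ and all remaining metrics. The value $0.2334$ in the observation is simply the normalised entry of equation (\ref{equ:DifferenceScore}) read from the table.

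\emph{Differentness.} This is the step I expect to be the obstacle. The table gives Levenshtein-Word a differentness of $0.3054$, which is larger than RoBERTa's $0.2334$. Levenshtein-Char ($0.2267$) lies close below it. So a literal row-by-row comparison does not support ``outperforms in difference'' without qualification. I would first try to show that differentness is only informative jointly with distinctness. With Levenshtein-Word's distinctness of only $0.0101$, its min-max-normalised spread arises from a handful of coarse integer values; it therefore separates groups of paraphrases widely but does not order individual ones. Among metrics with non-negligible distinctness (say above $0.5$), RoBERTa has the largest differentness. If this restricted reading is not acceptable, the observation's differentness clause must be weakened to ``highest among metrics that distinguish paraphrases individually''.
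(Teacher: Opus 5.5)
Your approach matches the paper's. The observation is supported only by reading Table~\ref{tab:uniqueness}. Your uniqueness and distinctness checks are fine, including the deduction that $100\%$ uniqueness forces every summand of $Distinct(T)$ to equal $1$. Strictly, that step treats the rounded entries $100.00$ and $1.0000$ as exact, as the paper also does.

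The differentness problem you flag is real, and it lies in the paper, not in your argument. The table gives Levenshtein-Word $0.3054>0.2334$, and the paper offers nothing to reconcile this. So the literal clause ``outperforms all other metrics in \dots\ difference'' cannot be established from the paper's own data.

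Your qualified version is true, but the cutoff $0.5$ is stronger than needed. Levenshtein-Word is the only exception: RoBERTa's $0.2334$ beats every other entry, including Levenshtein-Char's $0.2267$. Any distinctness cutoff above $0.0101$ works, or you can simply say ``every metric except Levenshtein-Word''.

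Your reason for discounting Levenshtein-Word is plausible. Each paraphrase differs from its seed only by word substitutions, so its word-level Levenshtein distance is at most, and typically equal to, the number $k$ of replaced words. That yields a few widely spaced values after min-max normalisation. Be clear, though, that this is your amendment, not an argument the paper makes.
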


\begin{observation}
ChrF and Euclidean also perform reasonably well, though both significantly trail behind RoBERTa especially on uniqueness and distinctness, suggesting they may not capture paraphrase variations as effectively as RoBERTa. 
\end{observation}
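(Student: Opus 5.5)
The statement is an empirical observation, not a formal theorem, so my plan is to justify it directly from the data in Table~\ref{tab:uniqueness}. I will split it into two claims: (i) ChrF and Euclidean ``perform reasonably well'', and (ii) both ``significantly trail behind RoBERTa especially on uniqueness and distinctness''. For (i), I will rank the ten metrics column by column. On Uniqueness, ChrF ($93.22\%$) and Euclidean ($73.94\%$) are second and third. Every remaining metric is far below them: Cosine at $35.35\%$, BLEU at $14.45\%$, and the rest essentially at $0\%$. The same ordering holds for Distinctness: ChrF $0.9655$ and Euclidean $0.8721$, against Cosine $0.6104$ and all others below $0.33$. This places both metrics clearly in the top group after RoBERTa.

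For (ii), I will compute the gaps to RoBERTa, which attains the maxima $100\%$ and $1.0000$. On Uniqueness the gaps are $6.78$ and $26.06$ percentage points. On Distinctness they are $0.0345$ and $0.1279$. On Differentness the gaps are smaller, $0.0165$ and $0.0324$ on the normalised $[0,1]$ scale. That is why the statement stresses ``especially'' uniqueness and distinctness. I have to be careful at this step. Differentness is not monotone with the other two scores: Levenshtein-Word ($0.3054$) and Levenshtein-Char ($0.2267$) exceed ChrF and Euclidean on it, and Levenshtein-Word even exceeds RoBERTa. So ``reasonably well'' should be argued mainly from uniqueness and distinctness. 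The high Differentness of the Levenshtein metrics comes from spread across a few coarse distance values, which is consistent with their near-zero distinctness.

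The main obstacle is the interpretive clause ``they may not capture paraphrase variations as effectively as RoBERTa''. I would support it through the method itself rather than statistically. Any non-unique distance creates ties $x \approx_t y$ in the sorted list of Step~2, and the order of tied paraphrases is then arbitrary. By Lemmas~\ref{lmm:lmm1}--\ref{lmm:lmm3}, the identified $FF(t)$ and $LS(t)$ become order-dependent among equidistant mutants. A metric with lower uniqueness therefore yields a coarser ordering and a less informative tipping point. I would state this as a plausible explanation rather than a proof. I would also note that confirming ``significantly'' rigorously would require per-test-case scores and a paired test over $D_1$, which the table's averages alone do not provide.
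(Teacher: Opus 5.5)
Your proposal is correct and follows the paper's route: the paper supports this observation only by reading the Uniqueness and Distinctness columns of Table~\ref{tab:uniqueness}, and your rankings and gaps ($6.78$ and $26.06$ points; $0.0345$ and $0.1279$) match the data. One small correction to your tie-based heuristic: Lemma~\ref{lmm:lmm1} forces $\|t,FF(t)\|$ to equal the least distance of any failing mutant, so tie-breaking never changes the value entering $R^o$; ties only change which mutants are reported and, when they occur exactly at that first-failure distance, the value $\|t,LS(t)\|$ entering $R^*$.
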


\begin{observation}
The overlap based and editing distance metrics performed less satisfactory. Among these metrics, BLEU stands out, showing better scores on uniqueness and distinctness than METEOR, ROUGE-N, ROUGE-L, and Levenshtein-word, which all show 0\% uniqueness and poor distinctness scores.
\end{observation}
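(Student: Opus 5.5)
The observation is empirical, so I would establish it in two layers. The first layer reads it off Table~\ref{tab:uniqueness}. The second explains, from the structure of the paraphrase generator and the definitions of the metrics, why the pattern must occur. That explanation is what turns the reading of the table into a justified claim rather than an artefact of the sample.

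For the first layer I would compare the rows of Table~\ref{tab:uniqueness}. Group the metrics into the overlap-based ones (BLEU, METEOR, ROUGE-N, ROUGE-L) and the edit-distance ones (Levenshtein-Char, Levenshtein-Word). Both groups sit strictly below RoBERTa, ChrF and Euclidean on uniqueness and distinctness. Within the group, BLEU's uniqueness of $14.45\%$ and distinctness of $0.3242$ exceed those of METEOR, ROUGE-N, ROUGE-L and Levenshtein-Word. Those four have $0\%$ uniqueness and distinctness at most $0.0804$. Since distinctness is the number of distinct distance values divided by $|E(t)|$, a low value means heavy ties. By the sorting convention of Step~2, heavy ties mean the ordering of mutants is largely arbitrary. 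I would add one remark for the differentness column: it is not a valid basis for calling these metrics good. A metric that takes only a few, widely separated values, such as Levenshtein-Word with $0.3054$, can score high on differentness while separating almost nothing.

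For the second layer I would argue structurally. Every paraphrase in $E(t)$ is a $k$'th order mutant obtained by $k$ applications of $ReplaceWord$ at distinct word positions, with $k\le 11$. Replacing single tokens is a minimal word-level edit, so Levenshtein-Word distance equals exactly $k$. It therefore takes at most about $11$ values across hundreds of mutants (on average $235.79$). A counting argument then forces distinctness to be at most $11/|E(t)|$. Uniqueness is zero whenever each order $k$ yields at least two mutants, which holds because $n\ge 2$ whenever a single position is replaced.

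The same reasoning applies to ROUGE-N with unigrams. It depends only on the number of overlapping tokens, hence essentially only on $k$ and the fixed length, so its values collapse similarly. ROUGE-L and METEOR depend on a longest common subsequence or on alignment counts with stemming and synonym matching, which again reduce to a small set of count ratios. BLEU differs because it multiplies precisions of $1$- to $4$-grams. A replacement destroys different numbers of higher-order $n$-grams depending on its position, for example at a sentence boundary versus the interior, and on whether replacements are adjacent. BLEU therefore also depends on the spacing of the replaced positions, which yields more distinct values.

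The main obstacle is the second layer for METEOR and ROUGE-L. Their internals (fragmentation penalty, synonym and stem matching, LCS) are not as cleanly a function of $k$. I would first try to show that, under single-token substitution with no reordering, the LCS length is $L(x)-k$ and METEOR has a single chunk structure determined by the gaps. Failing that, I would fall back on the measured $0\%$ uniqueness as the evidence for these two metrics.
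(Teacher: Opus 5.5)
\paragraph{Assessment.} Your first layer is correct, and it is the whole of the paper's support. The observation is read directly off Table~\ref{tab:uniqueness}: BLEU has $14.45\%$ uniqueness and $0.3242$ distinctness, against $0\%$ uniqueness and distinctness at most $0.0804$ for METEOR, ROUGE-N, ROUGE-L and Levenshtein-Word. The paper offers no structural argument beyond this.

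You should state explicitly one point your grouping handles silently. In Section~\ref{sec:ProposedMethod} the paper lists ChrF among the overlap-based metrics. With ChrF included, the observation's first sentence is false, since ChrF has $93.22\%$ uniqueness. The observation therefore has to be read with ChrF set aside, as the preceding observation does.

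\paragraph{The second layer.} It is not needed for the statement, and as written it should not be presented as what justifies it, because several of its steps do not hold.
\begin{itemize}
\item \emph{Levenshtein-Word is not exactly $k$.} Substituting $k$ distinct positions gives word-level Hamming distance $k$, and Levenshtein distance is only bounded above by this. The paper's numbers do not bear out equality either. Under Levenshtein-Word ordering, the average $k$ of the first-failure paraphrases is $1.42$ for Codex and $1.09$ for Llama2 (Table~\ref{tab:averageNandK}). The corresponding $R^o$ values are $1.1288$ and $1.1260$ (Table~\ref{tab:combined_robustness}).
\item \emph{$E(t)$ is not a complete enumeration.} Your zero-uniqueness argument assumes it is, and also assumes $n\ge 2$. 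The reported sizes (mean $235.79$, minimum $28$) are below even the $55$ first-order mutants that $n=5$ gives for the shortest, 11-word task. So $E(t)$ is a sample, and nothing forces every attained distance value to be attained twice.
\item \emph{Your mechanism does not single out BLEU.} You explain BLEU's extra distinct values by dependence on where the replaced positions fall and whether they are adjacent. That dependence is equally present in METEOR, whose fragmentation penalty depends on the number of matched chunks and hence on those gaps. It is also present in ROUGE-N whenever $N\ge 2$; the paper never fixes $N$, so restricting to unigrams is your assumption. METEOR's stem matching (e.g.\ \emph{writing} for \emph{write}) further makes it depend on which replacement word was chosen.
\end{itemize}

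A qualitative ``depends on spacing'' criterion therefore cannot explain why BLEU alone escapes $0\%$ uniqueness. That would require comparing the size of each metric's range with $|E(t)|$ under a known sampling scheme, which neither you nor the paper has. Treat the second layer as a heuristic at most. For METEOR and ROUGE-L you end up relying on the measured values anyway, which is exactly where the paper stands.
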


\begin{observation}
Cosine has the lowest performance among non-overlap based metrics, but better than overlap based and editing metrics on uniqueness and distinctness, but poorer on differentness. 
\end{observation}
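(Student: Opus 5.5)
Observation~\ref{obs:RobustnessDependsOnSimilarity} and the observations that follow are empirical, so the plan is to check the claim directly against the entries of Table~\ref{tab:uniqueness}. First I would split the ten metrics into the two families the statement refers to. The \emph{non-overlap based} metrics are the embedding and semantics metrics RoBERTa, Euclidean and Cosine. The \emph{overlap based and editing} metrics are BLEU, METEOR, ROUGE-N, ROUGE-L, Levenshtein-Char and Levenshtein-Word. ChrF is character n-gram overlap, so I would place it in the overlap family; I would note explicitly that this classification affects the claim only for the uniqueness and distinctness comparison, discussed below. The statement then splits into three sub-claims, each a finite set of numerical comparisons.

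For the first sub-claim, that Cosine is the lowest among the non-overlap metrics, I would compare Cosine with RoBERTa and Euclidean in all three columns. In uniqueness, $35.35 < 73.94 < 100.00$. In distinctness, $0.6104 < 0.8721 < 1.0000$. In differentness, $0.0789 < 0.2010 < 0.2334$. So Cosine is strictly worst in every column.

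For the second sub-claim, that Cosine is better than the overlap and editing metrics on uniqueness and distinctness, I would take the maximum over that family excluding ChrF. This maximum is attained by BLEU, with uniqueness $14.45$ and distinctness $0.3242$. Both values are below Cosine's $35.35$ and $0.6104$. The main obstacle is ChrF: its scores ($93.22$, $0.9655$) exceed Cosine's. The argument therefore has to rely on the classification of metrics in the paper and treat ChrF separately, as the preceding observation does. I would state this restriction rather than hide it.

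For the third sub-claim, that Cosine is poorer on differentness, I would check that Cosine's $0.0789$ is below every entry in that column. The next smallest value is ROUGE-L at $0.0912$. So Cosine has the minimal differentness over all ten metrics, which is more than the claim requires. This completes the verification.
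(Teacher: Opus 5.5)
Your approach matches the paper's, which justifies this observation only by reading off Table~\ref{tab:uniqueness}, and each of your numerical comparisons is correct, including that Cosine's differentness $0.0789$ is the minimum of the whole column. Your ChrF caveat is warranted rather than a flaw: the paper lists ChrF among the overlap-based metrics when it categorises text similarity metrics in Section~\ref{sec:ProposedMethod}, yet ChrF beats Cosine on uniqueness and distinctness, so the statement holds only under the grouping of ChrF with Euclidean implied by the preceding observation, as you say.
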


In summary, the semantics based metric \emph{RoBERTa} is the best distance metric for the evaluation of LLM's robustness in code generation. 

\subsection{Threats to Validity}

This section discusses the threats to the validity of our experiments.  

\subsubsection{External Threats} 

Threats to external validity concern the extent to which our findings can be generalized. Our evaluation relies on the ScenEval benchmark, which exclusively consists of Java programming problems, including both text book and real-world Java programming tasks. However, the specific robustness metrics $R^o$ and $R^*$ for each LLM and in various scenarios may not be directly transferable from Java to other programming languages, as different languages have unique features. 
To address this limitation, future work could modify ScenEval to incorporate a broader range of languages or even use a different benchmark. However, despite this, we are confident that the conclusions we have drawn about the effectiveness and efficiency of the proposed method should be generalisable to all programming languages and all scenarios. In addition, the observations on the distinguishability of similarity/distance metrics should be generalisable since they are independent of the LLMs. 

In conclusion, the method proposed in this paper should be generalisable to other LLMs, programming languages and benchmarks. Nevertheless, further experiments to confirm this will be interesting.

\subsubsection{Internal Threats} 

A potential internal threat to the validity of our study is in the selection of test cases. If the test cases used in the evaluation are not sufficiently diverse, this could introduce bias in the results. For instance, over-representation of certain types of coding tasks may lead to misleading conclusions about the robustness of the LLMs. To mitigate this threat, we ensured that the test cases covered a broad range of topics and levels of difficulty, ensuring a comprehensive evaluation. Moreover, we used random sampling where possible to avoid selection bias and to provide a more accurate assessment of the model performance across diverse scenarios. 

Another potential internal threat to the validity of our study is in the choices of the parameters of Algorithm \ref{alg:EvalRobustness} in our experiments, which include $k$ and $n$ for generating the initial set of paraphrases, and $c_k$ and $c_n$ for generating additional paraphrases when an expansion of the set of paraphrases is required. These parameters have impact on the accuracy of the calculations of $R^o$ and $R^*$ metrics. The values of $k$ and $n$ used in our experiments were 5, which is about double of the average values of the tipping points (i.e. the last success and first failure points) over the whole test dataset. It is very rare that expansion is needed. The values of $c_k$ and $c_n$ were 1, which was sufficient to found the tipping points after expansion. 

To estimate the accuracy of the experiment data, we refer to Theorem \ref{thm:Expansion} and \ref{thm:Expansion2} , which formally proved that the real values of $R^o$ and $R^*$ must be between the estimations obtained in an experiment using a set of mutants. Table \ref{tab:Accuracy} gives the absolute differences between the experiment values of $R^o$ and $R^*$, which can be found in Table \ref{tab:combined_robustness}. Column \emph{Ratio} of Table \ref{tab:Accuracy} gives the ratios of the absolute differences between $R^o$ and $R^*$ to the mean values of them, i.e. $Ratio = 2\frac{|R^o-R^*|}{R^o+R^*}$. The data in Table \ref{tab:Accuracy} shows that the overall average of the ratios is less than 5\%. This means the error rate in the experiment data is less than 5\%. Thus, the experiment data obtained from the experiment with the choices of the parameters in Algorithm \ref{alg:EvalRobustness} is highly accurate.  

\begin{table*}[htp]
\caption{Absolute Differences between $R^o$ and $R^*$ and The Ratios to Their Means}\label{tab:Accuracy}
\scriptsize
\begin{center}
\begin{tabular}{|c|c|c|c|c|c|c|c|c||c|c|}
\hline
\multirow{2}{*}{Similarity Metric} &\multicolumn{2}{|c|}{GeminiPro} &\multicolumn{2}{|c|}{Codex} &\multicolumn{2}{|c}{Llama2} &\multicolumn{2}{|c||}{Falcon} &\multicolumn{2}{|c|}{\textbf{Average}}\\\cline{2-11}
 &Difference &Ratio &Difference &Ratio &Difference &Ratio &Difference &Ratio &Difference &Ratio\\\hline
RoBERTa  &0.0752 &0.0700 &0.0553 &0.0496 &0.0533 &0.0436 &0.0504 &0.0410 &0.0586 &0.0511\\\hline
Euclidean  &0.0316 &0.0614 &0.0289 &0.0583 &0.0273 &0.0703 &0.0284 &0.0724 &0.0291 &0.0656\\\hline
Cosine Similarity &0.0011 &0.0012 &0.0080 &0.0086 &0.0010 &0.0010 &0.0009 &0.0009 &0.0027 &0.0029\\\hline
BLEU  &0.0292 &0.0661 &0.0224 &0.0524 &0.0292 &0.0533 &0.0288 &0.0530 &0.0274 &0.0562\\\hline
ROUGE-N  &0.0204 &0.0322 &0.0233 &0.0360 &0.0248 &0.0329 &0.0225 &0.0300 &0.0228 &0.0328\\\hline
ROUGE-L  &0.0180 &0.0256 &0.0169 &0.0236 &0.0173 &0.0211 &0.0157 &0.0192 &0.0170 &0.0224\\\hline
METEOR  &0.0233 &0.0349 &0.0233 &0.0342 &0.0242 &0.0307 &0.0221 &0.0282 &0.0232 &0.0320\\\hline
ChrF  &1.8729 &0.0276 &1.7555 &0.0255 &1.6484 &0.0237 &1.7166 &0.0248 &1.7484 &0.0254\\\hline
Levenshtein-Word  &0.1110 &0.1029 &0.1137 &0.1061 &0.1164 &0.1090 &0.0697 &0.0639 &0.1027 &0.0955\\\hline
Levenshtein-Char  &1.4018 &0.1296 &1.0976 &0.1093 &0.9640 &0.0999 &1.0320 &0.1051 &1.1239 &0.1110\\\hline\hline
\textbf{Average} &\textbf{0.3585} &\textbf{0.0551} &\textbf{0.3145} &\textbf{0.0504} &\textbf{0.2906} &\textbf{0.0486} &\textbf{0.2987} &\textbf{0.0439} &\textbf{0.3156} &\textbf{\textit{0.0495}}\\\hline
\end{tabular}
\end{center}
\label{default}
\end{table*}%

\subsubsection{Construct Threats}

Construct validity concerns the appropriateness of the measures and metrics used in the experiments. It manifests itself in the way we determine the functional equivalence between the codes generated from the original coding task and from its paraphrases, and in the metrics used to order the mutants for testing. 

We considered two programs to be functionally equivalent if they pass all the test cases generated from them. A potential threat to construct validity is the possibility that the test suite may be inadequate and that the differences in the generated programs might be overlooked. To mitigate this threat, we implement the \(generateTestCode\) test morphism with the Evosuite tool, as detailed in \citep{ghosh2024}. Evosuite is known for its strong test adequacy \citep{fraser2014, fraser2011}, ensuring high test coverage in the generated test cases. This is also proved empirically by \citet{ghosh2024}. In addition, the test cases are generated automatically from both programs generated by the LLM on the original and paraphrased coding tasks. Therefore, they can detect omission and commission errors. Consequently, there is minimal risk that the results of the experiment can have systematic bias due to omission of critical program bugs. 

The other threat to validity is that the wrong distance metric was used to order the mutants. Therefore we studied a wide range of metrics. These were overlap-based metrics BLEU, ROUGE, Meteor, and ChrF, edit distance metrics Levenshtein distance both at word and character levels, vector space distance metrics Euclidean and Cosine similarity, as well as the semantics-based metric based on the \emph{STSb-RoBERTa-base} ML model. For all of these metrics, the experiments demonstrated that the proposed method can provide meaningful evaluations that are capable of distinguishing LLMs based on their robustness and assessing the robustness in different scenarios. The experiments also demonstrated the feasibility of the proposed method. While each metric helps to reveal different aspects of the LLM's robustness on code generation, the impact of the choice of distance metrics is minimal. Thus, the risk associated with the choice is minimal. 

\section{Conclusion}\label{sec:Conclusion}

We proposed a new method to evaluate the operational robustness of LLMs for the task of code generation. It is the first micro method for evaluating operational robustness of NLP on generative tasks. Our work demonstrates the effectiveness of the method. Our experimental results offer valuable insights into the robustness of LLMs on code generation. In particular, in terms of the similarity between the original coding task description and the paraphrases, we found that an LLM, like Codex, will start to generate incorrect programs from the paraphrase when the paraphrase similarity is 0.42 for BLEU,  0.93 for Cosine, and 0.66 for Meteor, for example. This means, roughly speaking, on average a state-of-the-art LLM will start to generate incorrect programs if the task description is modified by replacing 2 to 3 words in the text with words of similar meanings (the 3 to 4 closest ones in word embedding vector space). 

Through experimentation, we have demonstrated that the proposed method is both feasible and cost-effective for evaluating the robustness of LLM models. It requires fewer than 20 queries to the LLM per test case and the required computation resources in terms of CPU time and memory are very modest and affordable. 

We have implemented a test system that automates the robustness evaluation process in our proposed method with the support of the Morphy tool. 

For future work, we aim to expand our study of LLM for code generation to other aspects of usability. While our current evaluation focuses on robustness, we recognize the importance of assessing other critical aspects of code quality, including various types of code smells. By analysing these factors, we intend to provide a more comprehensive evaluation of the performance of LLMs in code generation, ultimately leading to insights that can enhance the usability and effectiveness of code generation in real-world applications.

The proposed method should also be applied to other types of ML tasks on their micro operational robustness. It will be very interesting to see such experiments and case studies.



\bibliographystyle{elsarticle-harv} 
\bibliography{references}


\newpage

\section*{Appendix. Proofs of Theorems} 

\proof{Theorem \ref{thm:NablaDelta}}  

We prove the theorem by contradiction. Assume the theorem is not true; that is, there is $\alpha \in D^s$ such that 
\begin{equation}
\nabla_{D^s}(t,M) < \|\alpha, t\| < \Delta_{D^s}(t,M). \label{enq:thm0eqn1}
\end{equation}

We have that either $Fail^M_t(\alpha)$ or $\neg Fail^M_t(\alpha)$. The following proves that in both situations, the assumption (\ref{enq:thm0eqn1}) is not true. 

\emph{Situation} (1): Assume that 
\begin{equation}
Fail^M_t(\alpha). \label{eqn:assumption1}
\end{equation}

By Lemma \ref{lmm:lmmB}, we have that $\|\alpha,t\| \geq \Delta_{D^s}(t,M)$. This contradicts (\ref{enq:thm0eqn1}). Therefore, the assumption (\ref{eqn:assumption1}) is not true. 

\emph{Situation} (2): Assume that 
\begin{equation}
\neg Fail^M_t(\alpha). \label{eqn:assumption2}
\end{equation}
By (\ref{enq:thm0eqn1}), i.e. $\|\alpha, t\| > \nabla_{D^s}(t,M)$, and Lemma \ref{lmm:lmmE}, we have that 
\begin{equation}
\neg Safe^M_t(\alpha). \label{eqn:thm0eqn1}
\end{equation}
By Definition \ref{def:Safe}, (\ref{eqn:assumption2}) and (\ref{eqn:thm0eqn1}) imply that there is $x \in D^s$ that $\|x,t\| < \|\alpha, t\| \wedge Fail^M_t(x)$. Since, by (\ref{enq:thm0eqn1}), $\|\alpha, t\| < \Delta_{D^s}(x,M)$, this implies that $\|x,t\| < \Delta_{D^s}(t,M) \wedge Fail^M_t(x)$. It contradicts Lemma \ref{lmm:lmmB}. Therefore, assumption (\ref{eqn:assumption2}) is not true. 

Because both (\ref{eqn:assumption1}) and (\ref{eqn:assumption2}) are not true, we have that the assumption about the existence of $\alpha$ that satisfy (\ref{enq:thm0eqn1}) is not true. Thus, the theorem statement holds. 
\qed

\proof{Lemma \ref{lmm:lmm1}}
Let $x \in Mut(t)$ and $Fail^M_t(x)$. If $x \prec_t FF(t)$, $x$ will be ordered before $FF(t)$. $Fail^M_t(x)$ contradicts that $FF(t)$ is the first failure. 
\qed

\proof{Lemma \ref{lmm:llm2}}
By definition of last success and first failure, $LS(t)$ is always in the order before $FF(t)$. 
\qed

\proof{Lemma \ref{lmm:lmm3}}

Assume that $\alpha \in Mut(t)$ and 
\begin{equation}
LS(t) \prec_t \alpha \prec_t FF(t). \label{eqn:lmm2Proof1}
\end{equation}

If $Fail^M_t(\alpha)$, by Lemma \ref{lmm:lmm1}, we have that $FF(t) \preccurlyeq \alpha$. This contradicts (\ref{eqn:lmm2Proof1}). 

If $\neg Fail^M_t(\alpha)$, (\ref{eqn:lmm2Proof1}) contradicts that $LS(t)$ is the last success. 

Therefore, $\alpha \in Mut(t)$ is not true. Thus, the Lemma statement holds. 
\qed

\proof{Theorem \ref{thm:Expansion2}}

\noindent{\emph{Proof of Statement (1):}}

Because $\alpha^o$ is the first failure of $Mut(t)$, we have that $\alpha^o \in Mut(t) \subset Mut'(t)$ and $Fail^M_t(\alpha)$. By Lemma \ref{lmm:lmm1}, $\beta^o$ is the first failure of $Mut'(t)$ implies that $\beta^o \preccurlyeq_t \alpha^o$. 

\noindent{\emph{Proof of Statement (2):}}

From the condition $\|\alpha^*, t\| \leq \nabla_{D^s}(t,M)$ on $\alpha^*$, we have that 
\begin{equation}
\forall x \in D^s.\left(x \prec_t \alpha^* \Rightarrow Safe^M_t(x) \right). \label{eqn:Expansion2Proof1}
\end{equation}
By the validity of $Mut('(t)$, (\ref{eqn:Expansion2Proof1}) implies that 
\begin{equation}
\forall x \in Mut'(t).\left(x \prec_t \alpha^* \Rightarrow Safe^M_t(x) \right). \label{eqn:Expanions2Proof2}
\end{equation}
By Definition \ref{def:Safe}, (\ref{eqn:Expanions2Proof2}) implies that 
\begin{equation}
\forall x \in Mut'(t).\left(x \prec_t \alpha^* \Rightarrow \neg Fail^M_t(x) \right).\label{eqn:Expansion2Proof3}
\end{equation}

Because $\beta^*$ is the last success found in $Mut'(t)$, $\beta^*$ is the largest $\tau$ such that 
\begin{equation}
\neg Fail^M_t(\tau) \wedge \forall x \in Mut'(t).\left(x \prec_t \tau \Rightarrow \neg Fail^M_t(x) \right).\label{eqn:Expansion2Proof4}
\end{equation}
In other words, if $\tau$ satisfies (\ref{eqn:Expansion2Proof4}), then, $\tau \preccurlyeq \beta^*$. 

Because $\alpha^*$ is the last success found in $Mut(t)$, we have that $\neg Fail^M_t(\alpha^*)$ and $\alpha^* \in Mut(t) \subset Mut'(t)$. Therefore, (\ref{eqn:Expansion2Proof3}) implies that $\alpha^*$ satisfies the condition (\ref{eqn:Expansion2Proof4}). Thus, $\alpha^* \preccurlyeq_t \beta^*$. 
\qed

\proof{Theorem \ref{thm:Expansion}} 

\noindent{\emph{Proof of Statement} (1):} 

Since $\alpha^o$ is the first failure in the set $\left(Mut'(t)-Mut(t)\right)$, we have that, 
\begin{equation}
Fail^M_t(\alpha^o). \label{eqn:thm3Proof3}
\end{equation}
And, by (\ref{eqn:thm3Proof2}), we have that
\begin{equation}
\alpha^o \in Mut'(t). \label{eqn:thm3Proof3b}
\end{equation}
(\ref{eqn:thm3Proof3}) and (\ref{eqn:thm3Proof3b}) imply that $\alpha^o$ is a failure point in $Mut'(t)$. Because $\beta^o$ is the first failure of $Mut'(t)$, by Lemma \ref{lmm:lmm1}, we have that 
\begin{equation}
\beta^o \preccurlyeq_t \alpha^o. \label{equ:thm3Proof3c}
\end{equation}
On the other hand, because $\alpha^o$ is the first failure in the set $\left(Mut'(t)-Mut(t)\right)$, we have that
\begin{equation}
\forall x \in (Mut'(t)-Mut(t)).\left((x \prec_t \alpha^o) \Rightarrow \neg Fail^M_t(x) \right). \label{eqn:thm3Proof4}
\end{equation}
By (\ref{eqn:thm3Proof2}) and (\ref{eqn:thm3Proof4}), because 
\[Mut'(t) = (Mut'(t) - Mut(t))\cup Mut(t),\] 
we have that 
\begin{equation}
\forall x \in Mut'(t).\left(x \prec_t \alpha^o \Rightarrow \neg Fail^M_t(x) \right). \label{eqn:thm3Proof5}
\end{equation}
Because $\beta^o$ is the first failure of $Mut'(t)$, by the definition of first failure, we have that 
\begin{equation}
\beta^o \in Mut'(t) \wedge Fail^M_t(\beta^o). \label{eqn:thm3Proof6}
\end{equation}
(\ref{eqn:thm3Proof5}) and (\ref{eqn:thm3Proof6}) imply that $\beta^o \prec_t \alpha^o$ is not true. Thus, 
\begin{equation}
\alpha^o \preccurlyeq_t \beta^o. \label{eqn:thm3Proof7}
\end{equation}
(\ref{equ:thm3Proof3c}) and (\ref{eqn:thm3Proof7}) imply that $\alpha^o \approx_t \beta^o$. That is, theorem statement (1) is true. 

\noindent{\emph{Proof of statement} (2):}

By Theorem statement (1), we have that $\alpha^o \approx_t \beta^o$. By Lemma \ref{lmm:lmm3}, we have that $\beta^* \preccurlyeq_t \beta^o \approx \alpha^o \approx_t \alpha^*$. 

By Lemma \ref{lmm:lmm3}, there is no $x \in Mut'(t)$ such that $\beta^* \prec_t x \prec_t \beta^o$. Since $\beta^o \approx_t \alpha^o \approx_t \alpha^*$, the theorem statement (2) is true. 

\noindent{\emph{Proof of statement} (3):}

By Lemma \ref{lmm:llm2}, we have that $\beta^* \preccurlyeq_t \beta^o$. Thus, we prove the statement in two different situations: (a)  $\beta^* \approx_t \beta^o$, (b) $\beta^* \prec_t \beta^o$. 

In situation (a), we have that 
\begin{eqnarray*}
&\alpha^* \prec_t \alpha^o  &(Condition ~of ~Statement ~2)\\
&\approx_t \beta^o &(Theorem ~Statement ~(1))\\
&\approx_t \beta^* &(Condition ~of ~the ~situation)
\end{eqnarray*}

Therefore, we have that $\alpha^* \prec_t \beta^*$. 

By Lemma \ref{lmm:lmm3}, we have that there is no $x \in (Mut'(t) - Mut(t))$ such that $\alpha^* \prec_t x \prec_t \alpha^o$. Since $\alpha^o \approx_t \beta^*$, have that 
\begin{equation}
\neg \exists x \in (Mut'(t) - Mut(t)).\left( \alpha^* \prec_t x \prec_t \beta^* \right). \label{eqn:Thm4Proof2}
\end{equation}
By (\ref{eqn:Thm4Proof2}), we have that 
\begin{equation}
\forall x \in Mut'(t).\left( (\alpha^* \prec_t x \prec_t \beta^*)  \Rightarrow (x \in Mut(t)) \right). 
\end{equation}

In situation (b), because $\alpha^* \in Mut'(t) - Mut(t)$, we have that 
\begin{equation}
\alpha^* \in Mut'(t).\label{eqn:Thm4Proof3}
\end{equation}
Assume that $\beta^* \prec_t \alpha^*$. Then, we have that 
\begin{eqnarray*}
&\beta^* \prec_t \alpha^* &(Assumption) \nonumber\\
&\prec_t \alpha^o &(Condition ~of ~statement) \nonumber\\
&\approx_t \beta^o &(Theorem ~statement ~(1)). 
\end{eqnarray*}

Thus, we have that 
\begin{equation}
\beta^* \prec_t \alpha^* \prec_t \beta^o. \label{eqn:thm4Proof4}
\end{equation}
(\ref{eqn:Thm4Proof3}) and (\ref{eqn:thm4Proof4}) contradict Lemma \ref{lmm:lmm3}, which states that 
\[\neg \exists x \in Mut'(t).\left( \beta^* \prec_t x \prec_t \beta^o \right).\] 
Therefore, the assumption $\beta^* \prec_t \alpha^*$ is not true. Hence, we have that $\alpha^* \preccurlyeq_t \beta^*$. 

By the condition of the situation and theorem statement (1), we have that $\beta^* \prec_t \beta^o \approx_t \alpha^o$. Therefore, 
\begin{equation}
\forall x \in Mut'(t).\left((\alpha^* \prec_t x \prec_t \beta^*) \Rightarrow (\alpha^* \prec_t x \prec_t \alpha^o)\right). \label{eqn:thm4Proof5}
\end{equation}

By Lemma \ref{lmm:lmm3}, we have that 
\begin{equation}
\neg \exists x \in Mut'(t)-Mut(t).\left( \alpha^* \prec_t x \prec_t \alpha^o \right). \label{eqn:thm4Proof6}
\end{equation}
From (\ref{eqn:thm4Proof5}) and (\ref{eqn:thm4Proof6}), we have that 
\begin{equation}
\forall x \in Mut'(t).\left((\alpha^* \prec_t x \prec_t \beta^*) \Rightarrow (x \in Mut(t)) \right). 
\end{equation}

In summary, the statement is true in both situations.  
\qed

\proof{Theorem \ref{thm:Ro}}

By the definition validity and the fact that $FF(t) \in Mut(t)$, we have that 
\begin{equation}
FF(t) \in D^s. \label{eqn:thm1Proof1}
\end{equation}

By Definition \ref{def:MinimalDistanceToFailure} of $\Delta_{D^s}(t,M)$, the fact that $Fail^M_t(FF(t))$ and (\ref{eqn:thm1Proof1}) imply that 
\begin{equation}
\|t, FF(t)\| \geq \Delta_{D^s}(t,M). \label{eqn:thm1Proof2}
\end{equation}

By the definition of completeness, we have that 
\begin{equation}
\forall x \in D^s.\left( \|x,t\|\leq \delta_t \Rightarrow x \in Mut(t)\right), \label{eqn:thm1Proof2b}
\end{equation}
where $\delta_t$ is the coverage size of $Mut(t)$. 
Since $FF(t)$ is the first failure, by Lemma \ref{lmm:lmm1}, we have that 
\begin{equation}
\forall x \in Mut. \left(x \prec_t FF(t) \Rightarrow \neg Fail^M_t(x)\right). \label{eqn:thm1Proof3}
\end{equation}
When the coverage size $\delta_t \geq \|t,FF(t)\|$, (\ref{eqn:thm1Proof2b}) and (\ref{eqn:thm1Proof3}) imply that 
\begin{equation}
\forall x \in D^s.\left(x \prec_t FF(t) \Rightarrow \neg Fail^M_t(x)\right). \label{eqn:thm1Proof4}
\end{equation}
Therefore, by Definition \ref{def:MinimalDistanceToFailure}, we have that 
\begin{equation}
 \Delta_{D^s}(t,M) \geq \|t, FF(t)\|. \label{eqn:thm1Proof5}
\end{equation}
From (\ref{eqn:thm1Proof2}) and (\ref{eqn:thm1Proof5}), we have that $\|t,FF(t)\| = \Delta_{D^s}(t,M)$. 
\qed

\proof{Theorem \ref{thm:R*}}

We first prove that $LS(t) \preccurlyeq_t \nabla_{D^s}(t,M)$. 

By the definition of validity, we have that 
\begin{equation}
LS(t) \in D^s. \label{eqn:thm2Proof1}
\end{equation}
By the definition of $LS(t)$, we have that 
\begin{equation}
\neg Fail^M_x(LS(t)), \label{eqn:thm2Proof2}
\end{equation}
and
\begin{equation}
\forall x \in Mut(t).\left(x \prec_t LS(t) \Rightarrow \neg Fail^M_t(x)\right). \label{eqn:thm2Proof3}
\end{equation}
By the completeness of $Mut(t)$,  we have that 
\begin{equation}
\forall x \in D^s.\left(\|x, t\| \leq \delta \Rightarrow x \in Mut(t) \right). \label{eqn:thm2Proof4}
\end{equation}
where $\delta$ is the coverage size of $Mut(t)$. 
From the fact that $\delta \geq \|t, LS(t)\|$, (\ref{eqn:thm2Proof3}) and (\ref{eqn:thm2Proof4}) imply that 
\begin{equation}
\forall x \in D^s. \left(x \prec_t LS(t) \Rightarrow \neg Fail^M_t(x)\right). \label{eqn:thm2Proof5}
\end{equation}

Definition \ref{def:SafeDistance}, (\ref{eqn:thm2Proof2}) and (\ref{eqn:thm2Proof5}) imply that $LS(t)$ is safe. Formally, 
\begin{equation}
Safe^M_x(LS(t)). \label{eqn:thm2Proof6a}
\end{equation}
Hence, by Lemma \ref{lmm:lmmD}, we have that
\begin{equation}
\|t,LS(t)\| \leq \nabla_{D^s}(t,M). \label{eqn:thm2Proof6}
\end{equation}

Now, we prove that there is no $x \in D^s$ such that $\|t, LS(t)\|\prec_t \|x, t\| \prec_t \nabla_{D^s}(t,M)$. 

By (\ref{eqn:thm2Proof6}), there are two possible situations: (a) $\|t,LS(t)\| = \nabla_{D^s}(t,M)$, (b) $\|t,LS(t)\| < \nabla_{D^s}(t,M)$. 

The theorem statement is trivially true in situation (a). 

In situation (b), 
by Definition \ref{def:SafeDistance} of $\nabla_{D^s}(t,M)$, the condition of the situation implies that 
there are $x \in D^s$ such that $x$ is safe and $LS(t) \prec_t x$. Formally, 
\begin{equation}
\exists x \in D^s.\left( Safe^M_t(x) \wedge (LS(t) \prec_t x ) \right).  \label{eqn:thm2Proof7}
\end{equation}

Let $\alpha$ be any element in $D^s$ that satisfies the predicate of (\ref{eqn:thm2Proof7}). By Lemma \ref{lmm:lmm3}, there is no $x \in Mut(t)$ such that $LS(t) \prec_t x \prec_t FF(t)$. By the completeness of $Mut(t)$ with a coverage size $\delta \geq \|t,FF(t)\|$, we have that there is no $x \in D^s$ such that $LS(t) \prec_t x \prec_t FF(t)$. Thus,  (\ref{eqn:thm2Proof7}) implies that 
\begin{equation}
FF(t) \preccurlyeq_t \alpha. \label{eqn:thm2Proof8}
\end{equation}
We now prove that it is impossible that $FF(t) \prec_t \alpha$. Otherwise, by the definition of $FF(t)$, we have that 
\begin{equation}
Fail^M_t(FF(t)). \label{eqn:thm2Proof9}
\end{equation}
By Definition \ref{def:SafeDistance}, (\ref{eqn:thm2Proof8}), (\ref{eqn:thm2Proof9}) would imply that $\alpha$ is not safe; formally 
\begin{equation}
FF(t) \prec_t \alpha \Rightarrow \neg Safe^M_t(\alpha).\label{eqn:thm2Proof9b}
\end{equation} 
This contradicts the assumption $Safe^M_t(\alpha)$. Thus, we have that $\alpha \preccurlyeq_t FF(t)$. 
Therefore, by Definition \ref{def:SafeDistance}, we have that
\begin{equation}
\left(\nabla_{D^s}(t,M)>\|t,LS(t)\|\right) \Rightarrow \left(\nabla_{D^s}(t,M) = \|t, FF(t)\|\right). \label{eqn:thm2Proof10b}
\end{equation}
By Lemma \ref{lmm:lmm3}, (\ref{eqn:thm2Proof10b}) implies that 
\begin{equation}
\neg \exists x \in Mut(t).\left( LS(t) \prec_t x \prec_t FS(t) \right). \label{eqn:thm2Proof11}
\end{equation}
By the validity and completeness of $Mut(t)$ with coverage size $\delta \geq \|t, FS(t)$, we have that 
\begin{equation}
\forall x \in D.\left( \|x, t\| leq \|FS(x),t\| \Rightarrow (x \in D^s \Leftrightarrow x \in Mut(t)) \right). 
\label{eqn:thm2Proof12}
\end{equation}
(\ref{eqn:thm2Proof11}) and (\ref{eqn:thm2Proof12}) imply that
\begin{equation}
\neg \exists x \in D^s.\left( LS(t) \prec_t x \prec_t FS(t) \right). 
\end{equation}
Hence, theorem statement is true. 
\qed
\end{document}